\numberwithin{equation}{section}
\numberwithin{figure}{section}
\newcommand{\Cref}{\zcref}
\declaretheorem[numberwithin=section]{theorem}
\declaretheorem[numbered=no]{conjecture}
\declaretheorem[sibling=theorem]{proposition}
\declaretheorem[sibling=theorem]{corollary}
\declaretheorem[sibling=theorem]{lemma}
\declaretheorem[sibling=theorem, style=definition]{remark}
\declaretheorem[sibling=theorem, style=definition]{definition}
\renewcommand\epsilon\varepsilon
\newcommand{\abs}[1]{\left|#1\right|}
\newcommand{\norm}[1]{\left\|#1\right\|}
\newcommand{\smallnorm}[1]{\|#1\|}
\newcommand{\I}{\mathbf 1}
\newcommand{\N}{\mathbb N}
\newcommand{\R}{\mathbb R}
\newcommand{\Z}{\mathbb Z}
\newcommand{\Laplace}{\Delta}
\newcommand{\Tor}{\mathbb T}
\newcommand{\Holder}{\mathcal C}
\newcommand\diff{\mathop{}\!\mathrm d}
\newcommand\dt{\diff t}
\newcommand\du{\diff u}
\newcommand\dx{\diff x}
\newcommand\dy{\diff y}
\newcommand\dz{\diff z}
\newcommand\bigO{\mathcal O}
\newcommand\pospart[1]{\llbracket#1\rrbracket_+}
\DeclareMathOperator{\capa}{cap}
\DeclareMathOperator{\E}{\mathbb E}
\DeclareMathOperator{\Hess}{Hess}
\DeclareMathOperator{\Prob}{\mathbb P}
\DeclareMathOperator{\projN}{P_{\!\mathit N}}
\DeclareMathOperator{\projM}{P_{\!\mathit M}}
\DeclareMathOperator{\projMperp}{P_{\!\mathit M}^\perp}
\DeclareMathOperator{\projNperp}{P_{\!\mathit N}^\perp}
\DeclareMathOperator{\projperp}{P_{\!\perp}}
\DeclareMathOperator{\cd}{cd}
\DeclareMathOperator{\cova}{\mathcal L}
\DeclareMathOperator{\dist}{d}
\renewcommand{\complement}{c}
\newcommand{\manif}{\mathcal M}
\newcommand{\neigh}{\mathcal D}
\newcommand{\Blog}{B_{\log}}
\title{Metastable transition times of the\\ 1D dynamical sine-Gordon model}
\author{Petri Laarne\footnote{
    Department of Mathematics and Statistics, University of Helsinki, Finland;
    \newline\hspace*{1.8em}\url{petri.laarne@helsinki.fi}}}
\date{}
\begin{document}

\maketitle
\begin{abstract}
We study the dynamics of a stochastic heat equation with $\gamma\sin(\beta u)$ nonlinearity
on one-dimensional torus.
We show an Eyring--Kramers law for the jump rate between potential wells
in the small-noise limit,
and that the transition state undergoes a bifurcation at $\gamma\beta = 1$.
The argument follows the potential-theoretic approach of Berglund and Gentz [Electron.\ J.\ Probab.\ 2013].

\medskip\noindent\textbf{MSC (2020):}    
    60H15 (Primary), 
    60J45, 
    60J60, 
    81S20, 
    82C44 
    (Secondary)

\medskip\noindent\textbf{Keywords:} Metastability, Eyring--Kramers law,
    dynamical sine-Gordon model, transition state, instanton, functional determinant
\end{abstract}

\section{Introduction}

We study the \emph{dynamical sine-Gordon equation}
\begin{equation}\label{eq:DSG}
\partial_t u - \Laplace u = -\gamma \sin(\beta u) + \sqrt{2\epsilon} \xi,
\end{equation}
where $\gamma, \beta > 0$ are fixed parameters,
$\xi$ is space-time white noise,
and the strength of the noise $\epsilon$ is taken to be very small.
The spatial domain is the one-dimensional torus $\Tor$, which we fix to have length $2\pi$.

When $\epsilon = 0$, the equation has stable stationary points
at constant solutions $u \equiv 2\pi k / \beta$, $k \in \Z$.
The stochastic forcing turns these potential minima \emph{metastable}:
a solution will spend a long time near $2\pi k / \beta$,
but will occasionally jump to a neighbourhood of $2\pi (k \pm 1) / \beta$.

We will answer the following two questions in this article.

\bigskip\noindent\textbf{Question 1.}
How often, on average, does a solution to \eqref{eq:DSG}
jump from the neighbourhood of one minimum $2\pi k/\beta$ to another?

\bigskip\noindent\textbf{Question 2.}
What does the solution look like in the middle of a transition?

\bigskip
The first question goes back to the works of Eyring and Kramers on chemical reaction rates
\cite{eyring_activated_1935,kramers_brownian_1940};
see the survey \cite{hanggi_reactionrate_1990} for more on the history.
The so-called \emph{Eyring--Kramers law} states that the transition rate
is proportional to $\exp(-h/\epsilon)$,
where $h$ is the depth of the potential well.

There have been several mathematical approaches to proving the Eyring--Kramers law,
e.g.\ the large deviations theory of Freĭdlin and Wentzell \cite{freidlin_random_1984}.
We discuss it in \Cref{sec:ldp}.
Our main argument however builds on the potential-theoretic approach of Bovier, Eckhoff, Gayrard, and Klein
\cite{bovier_metastability_2004,bovier_metastability_2005},
which provides a sharp subexponential prefactor.
We outline it in \Cref{sec:potential theory}.

The potential-theoretic method assumes a finite-dimensional diffusion,
but an approximation argument for stochastic PDEs
was developed by Berglund and Gentz \cite{berglund_sharp_2013}.
There they study a double-well model given by a fourth-order potential,
with either Neumann or periodic boundary conditions.
As the cosine potential underlying \eqref{eq:DSG} has similar locally quadratic behaviour,
the present argument is a minor modification to theirs.

In other lines of development, the recent article \cite{avelin_geometric_2023}
rephrases the potential-theoretic method in terms of function-theoretic quantities,
but the results are not directly applicable to our case.
There are also spectral approaches to metastability,
as the expected transition time is related to the spectral gap of the generator of the diffusion
(see e.g.\ \cite{bovier_metastability_2005}).

\bigskip\noindent
The second question can be answered with tools from Hamiltonian mechanics
and the deterministic version of \eqref{eq:DSG}.
As the likelihood of a transition is exponentially related to the potential barrier height,
a typical transition passes by the \emph{unstable} stationary solution of minimal energy.
Such solutions are called \emph{transition states} or \emph{instantons}.

The result derived in \Cref{sec:stationary} can be summarized as:
\begin{itemize}
\item When $\gamma\beta < 1$, the transition state between $2k\pi/\beta$ and $2(k+1)\pi/\beta$
    is the constant function $(2k+1)\pi/\beta$.
\item When $\gamma\beta > 1$, the transition states are kink--antikink pairs
    given by an explicit formula, unique up to translation in $x$.
\end{itemize}

The sine-Gordon equation becomes much more interesting in two spatial dimensions.
There the space-time white noise is so irregular that equation~\eqref{eq:DSG}
does not make sense as is.
The nonlinearity must be renormalized, and the relevant equation is instead the $\delta \to 0$ limit of
\begin{equation}\label{eq:DSG 2D}
\partial_t u - \Laplace u = -\gamma_\delta \sin(\beta u) + \sqrt{2\epsilon} \xi_\delta,
\end{equation}
where $\xi_\delta$ is a suitably mollified noise
and $\gamma_\delta \propto \delta^{-C\epsilon\beta^2}$ as $\delta \to 0$
(see e.g.\ \cite[Theorem~1.1]{hairer_dynamical_2016} for precise definitions).
This equation arises in mathematical physics (e.g.\ in relation to Coulomb gas),
and is an interesting test case for renormalization techniques due to the non-polynomial nonlinearity.
We refer to \cite{bringmann_global_2024,chandra_dynamical_2018,hairer_dynamical_2016,gubinelli_simple_2024}
for more background.

A renormalized double-well model, with nonlinearity of form $-u^3 + \epsilon C_\delta u$,
was studied by Berglund, Di~Gesù, and Weber \cite{berglund_eyring_2017};
see also \cite{barashkov_eyringkramers_2024}.
There the divergent $C_\delta u \to \infty u$ term compensates for the vanishing product
in the prefactor~\eqref{eq:result easy prefactor}.
Even though the potential wells are formally infinitely far apart,
the dynamics of the equation still matches the deterministic model.

In our case the renormalization is multiplicative and not additive.
It is an interesting open question whether the renormalized model can still be analyzed in a similar way.
In particular the dependence on $\beta$ would be interesting to understand,
as the 2D wellposedness theory also features $\beta$-thresholds.

The literature on the 2D double-well model is currently
restricted to the case where a constant function is the transition state.
The non-constant case is also an interesting open problem.

\bigskip\noindent
In this article we adapt the capacity computation of Berglund and Gentz
to the 1D dynamical sine-Gordon equation.
Especially in the $\gamma\beta > 1$ case we aim for a self-contained, detailed, and pedagogic presentation.
The article~\cite{berglund_sharp_2013} mostly focuses on Neumann boundary conditions
and skips many details on the periodic case,
which however has some additional nontrivial steps.

We always consider mild solutions to the equation;
their existence and uniqueness is stated in \Cref{thm:wellposedness}.
Furthermore, we pass to the spectral Galerkin approximations
\begin{equation}\label{eq:DSG truncated}
\partial_t u - \Laplace u = -\gamma \projN \sin(\beta \projN u) + \sqrt{2\epsilon} \projN \xi,
\end{equation}
where $\projN$ is projection to wavenumbers of magnitude at most $N$.
We use the potential-theoretic framework to then find uniform-in-$N$ estimates
for the transition times.

Since the potential is $2\pi/\beta$-periodic,
we can take the initial data to always be in a neighbourhood of $0$,
and translate the process once it has entered an adjacent potential well.
We also add a confining term to the potential energy:
\begin{equation}\label{eq:confining potential}
F[u] \coloneqq \int_\Tor \frac{\abs{\nabla u(x)}^2}{2}
        - \frac\gamma\beta \cos(\beta u(x)) \dx
    + \max(0, |\hat u(0)| - k2\pi\beta)^2,
\end{equation}
where $k \in \Z_+$ is arbitrary but finite.

\begin{remark}
The confining term is a technical assumption used in
\Cref{thm:sg mountain pass,thm:hard capacity stable manifold}.
The choice of $k$ in~\eqref{eq:confining potential}
only affects constants in the error estimates.

Most of the estimates are done in the region where the confining term is zero.
Due to this, we do not show it in equations unless it is actually used.
\end{remark}

In the $\gamma\beta > 1$ case we get a result that parallels \cite[Theorem~2.6]{berglund_sharp_2013}.
We define the initial and stopping sets in terms of Besov--Hölder norm.

\begin{theorem}[Expected transition time, $\gamma\beta > 1$]\label{thm:main transition time hard}
We consider \eqref{eq:DSG truncated}--\eqref{eq:confining potential} with $\gamma\beta > 1$,
and define for sufficiently small $\kappa, \delta > 0$ the sets
\begin{gather*}
A \coloneqq \left\{ \norm{u + 2\pi/\beta}_{\Holder^{1/2-\kappa}} < \delta \right\}
    \cup \left\{ \norm{u - 2\pi/\beta}_{\Holder^{1/2-\kappa}} < \delta \right\},\\
B \coloneqq \left\{ u \in L^2(\Tor) \colon
    \norm{u}_{\Holder^{1/2-\kappa}} < \delta \right\}.
\end{gather*}
Then there exist $\epsilon_0$, $N_0$, and a probability measure
$\mu_{N,\epsilon}$ on $\partial B$ such that the hitting time $\tau_A$ satisfies
\begin{equation}\label{eq:result hard prefactor}
\E_{\mu_{N,\epsilon}}(\tau_A) =
\frac{1}{2 \norm{\partial_x u_\ast}_{L^2(\Tor)}} \sqrt{ \frac{2\pi}{\mu}
    \frac{\prod_k \lambda_k}{\prod_{\abs n \leq N} n^2 + \gamma\beta}}
    e^{4\pi\gamma/(\epsilon\beta)} (1 + r(\epsilon)),
\end{equation}
whenever $\epsilon < \epsilon_0$ and $N > N_0 \epsilon^{-3}$.
Here $-\mu < 0 < \lambda_1 \leq \ldots \leq \lambda_{2N-1}$ are the eigenvalues of $\Hess F[u_\ast]$
at a transition state $u_\ast$, and
the error satisfies
\[
-\epsilon^{1/2} \log(1/\epsilon)^3 \lesssim r(\epsilon)
    \lesssim \epsilon^{1/4} \log(1/\epsilon)^3,
\]
independently of $N$.
\end{theorem}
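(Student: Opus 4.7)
The strategy is to apply the Bovier--Eckhoff--Gayrard--Klein capacity formula to the Galerkin truncation \eqref{eq:DSG truncated}, which is a finite-dimensional gradient SDE on $\R^{2N+1}$ with reversible measure $\propto e^{-F/\epsilon}$, and then to verify that all estimates are uniform in $N$. The potential-theoretic identity reads
\[
\E_{\mu_{N,\epsilon}}(\tau_A) = \frac{1}{\capa(A, B)} \int h_{B,A}(u)\, e^{-F(u)/\epsilon}\, du,
\]
with $h_{B,A}$ the equilibrium potential (concentrated in the well around $u\equiv 0$) and $\mu_{N,\epsilon}$ the associated harmonic measure on $\partial B$. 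I estimate the numerator and denominator separately by Laplace's method and then take the ratio.

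For the numerator, the integrand is essentially a Gaussian centred at $u\equiv 0$ with covariance $\epsilon \Hess F[0]^{-1}$, whose eigenvalues in the Fourier basis are $(n^2+\gamma\beta)^{-1}$. Since $F[0]=-2\pi\gamma/\beta$, Laplace's method gives
\[
\int h_{B,A}\, e^{-F/\epsilon}\, du = \frac{(2\pi\epsilon)^{(2N+1)/2}}{\prod_{|n|\leq N}\sqrt{n^2+\gamma\beta}}\, e^{2\pi\gamma/(\epsilon\beta)}(1+o(1)),
\]
with the non-quadratic remainder controlled by the Hölder regularity from \Cref{thm:wellposedness} and tails outside the well handled by the confining term in \eqref{eq:confining potential}.

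For the capacity, the Dirichlet variational principle $\capa(A,B)=\inf_h \epsilon\int|\nabla h|^2 e^{-F/\epsilon}\,du$ localises the dominant contribution to a tube around the one-parameter family of kink--antikink saddles $\{u_\ast(\cdot-\tau):\tau\in\Tor\}$ furnished by \Cref{thm:sg mountain pass}. Working in the tubular parameterisation $(x,v,\tau)$ adapted to the unstable, stable and zero eigenspaces of $\Hess F[u_\ast]$ given by \Cref{thm:hard capacity stable manifold}, the optimal one-dimensional ansatz $h^\star(x)\propto \int_x^\infty e^{-\mu s^2/(2\epsilon)}\,ds$ together with Gaussian integration over the $2N-1$ positive eigendirections (producing $\prod_k\lambda_k^{-1/2}$) and integration along the zero mode (producing the orbit length $2\pi\|\partial_x u_\ast\|_{L^2}$, with the factor $\tfrac12$ coming from the reflection symmetry identifying two saddles in the orbit) yield matching upper and lower bounds on $\capa(A,B)$. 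Taking the ratio against the numerator recovers the stated prefactor and the exponent $\Delta F/\epsilon=4\pi\gamma/(\epsilon\beta)$.

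The main obstacle is the zero eigenvalue of $\Hess F[u_\ast]$ forced by translation invariance on $\Tor$: it is absent in the Neumann analysis of \cite{berglund_sharp_2013} and requires one to work transversally to the saddle orbit and to account for the orbit length in the capacity integral. Coupled with this, one must maintain uniform-in-$N$ control of (i)~the ratio $\prod_k\lambda_k/\prod_{|n|\leq N}(n^2+\gamma\beta)$, which converges as $N\to\infty$ because the high-mode eigenvalues of both Hessians behave like $n^2$, and (ii)~the non-Gaussian remainders in the Laplace asymptotic, whose size forces the condition $N\gtrsim \epsilon^{-3}$ and governs the quantitative bound on $r(\epsilon)$.
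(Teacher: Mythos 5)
Your overall strategy is the one the paper uses: the BEGK/Berglund--Gentz identity $\E_{\mu}\tau_A = \bigl(\int h_{B,A}e^{-F/\epsilon}\bigr)/\capa(A,B)$, a Gaussian evaluation of the well integral around $u\equiv 0$, and a tubular computation around the translation orbit of the kink--antikink saddle in which the zero eigenvalue is traded for the orbit length $\ell = 2\pi\norm{\partial_x u_\ast}_{L^2}$. The test function in the unstable direction, the factorization into unstable/stable/tangent directions, and the uniform-in-$N$ control of the eigenvalue products are all present in the paper's proof in essentially the form you describe.

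There is, however, a genuine gap: you say the tube computation yields ``matching upper and lower bounds'' on $\capa(A,B)$, but the Dirichlet-form ansatz $h^\star$ only gives the upper bound. For the \emph{lower} bound one restricts the Dirichlet form to the $\check v(-1)$-derivative and, via the one-dimensional Euler--Lagrange minimizer, reduces to a scalar problem whose boundary values are $h_{A,B}(u_\ast + v_\dagger \pm \delta\check e_-)$ --- values of the \emph{actual} equilibrium potential at the edges of the tube. In finite fixed dimension one would invoke Harnack to pin these down, but Harnack is dimension-dependent and so fails to give a uniform-in-$N$ bound. The paper instead bootstraps from a Freĭdlin--Wentzell large-deviation estimate for the SPDE (Appendix~A, in particular \Cref{thm:ldp equilibrium local} and \Cref{thm:ldp equilibrium global}), proving $h_{A,B}\leq 4\epsilon$ at the relevant boundary points. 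This is also where the condition $N > N_0\epsilon^{-3}$ actually originates: it comes from the Galerkin-approximation error in the LDP transfer (\Cref{thm:galerkin approximation} inside \Cref{thm:ldp equilibrium local}), not from non-Gaussian remainders in the Laplace asymptotics as you state. Without this ingredient the lower bound --- and hence the two-sided estimate on $r(\epsilon)$, including its different powers of $\epsilon$ on the two sides --- does not follow from the tube computation alone.

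Two smaller points. First, the factor of~$2$ multiplying $\ell$ in the capacity (and hence the $\tfrac12$ in the prefactor) comes from $A = A^- \cup A^+$ being a union of two adjacent wells and the subadditivity $\capa(A^-\cup A^+,B) \leq \capa(A^-,B)+\capa(A^+,B)$, not from a reflection symmetry identifying saddles within a single orbit; the orbit is traversed in full to give $\ell$. Second, the upper bound on $\capa$ also requires controlling the contribution \emph{away} from the tube; the paper does this by building a test function supported near a union of stable manifolds (\Cref{thm:hard capacity stable manifold} and \Cref{thm:hard capacity upper error}), together with an auxiliary $M$-truncation to avoid an $m^{-N}$ blowup (\Cref{rem:hard capacity projection}). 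This is a nontrivial part of the argument that your plan passes over.
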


In the simpler $\gamma\beta < 1$ case we can choose the initial and stopping sets to shrink as $\epsilon \to 0$.
There is some flexibility in the choice of sets;
we could also add a $\Holder^{1/2-\kappa}$ bound on the oscillatory part $u_\perp$ in the stopping set $A_\epsilon$.

\begin{theorem}[Expected transition time, $\gamma\beta < 1$]\label{thm:main transition time easy}
Consider again the flow of~\eqref{eq:DSG truncated}--\eqref{eq:confining potential} with $\gamma\beta < 1$.
Given any $\epsilon > 0$ and fixed $\kappa > 0$, we define the sets
\begin{gather*}
A_\epsilon \coloneqq \left\{ u \in L^2(\Tor) \colon
    |\hat u(0)| > 2\pi/\beta - \sqrt\epsilon \log(1/\epsilon) \right\},\\
B_\epsilon \coloneqq \left\{ u \in L^2(\Tor) \colon
    |\hat u(0)| < \sqrt\epsilon \log(1/\epsilon),\,
    \norm{u_\perp}_{\Holder^{1/2-\kappa}} < \sqrt{c_0\, \epsilon \log(1/\epsilon)} \right\},
\end{gather*}
where $c_0$ is a sufficiently large constant.
Then there exists a probability measure
$\mu_{N,\epsilon}$ on $\partial B_\epsilon$ such that
\begin{equation}\label{eq:result easy prefactor}
\E_{\mu_{N,\epsilon}}(\tau_{A_\epsilon}) =
\frac{1}{2\gamma\beta} \sqrt{\prod_{0 < \abs n \leq N} \frac{n^2 - \gamma\beta}{n^2 + \gamma\beta}}
    e^{4\pi\gamma/(\epsilon\beta)} (1 + r(\epsilon)),
\end{equation}
where $-\epsilon \lesssim r(\epsilon) \lesssim \epsilon (\log(1/\epsilon))^4$.
These bounds are again uniform in $N$.
\end{theorem}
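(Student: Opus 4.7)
The plan is to follow the potential-theoretic framework of Berglund and Gentz. For each Galerkin-truncated diffusion (which is finite-dimensional), the expected hitting time admits the representation
\begin{equation*}
\E_{\mu_{N,\epsilon}}[\tau_{A_\epsilon}] = \frac{1}{\capa(A_\epsilon, B_\epsilon)}\int_{A_\epsilon^\complement} h_{A_\epsilon, B_\epsilon}(u)\, e^{-F[u]/\epsilon}\, du,
\end{equation*}
with $\mu_{N,\epsilon}$ the associated equilibrium measure on $\partial B_\epsilon$. I would estimate the numerator by Laplace's method at the two wells $u\equiv\pm 2\pi/\beta$, and the capacity in the denominator by a variational upper bound and a matching lower bound centred at the constant saddle $u_\ast\equiv 0$.

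The numerator is the easier half. Since $h_{A_\epsilon, B_\epsilon}$ is exponentially close to $1$ on a neighbourhood of each well and to $0$ elsewhere, the integral reduces to Gaussian fluctuations around the two wells. A second-order expansion of $F$ at $u\equiv 2\pi/\beta$ gives the Hessian $-\Laplace + \gamma\beta$ with spectrum $\{n^2+\gamma\beta\}_{\abs n\leq N}$, yielding the expected Gaussian prefactor together with $e^{2\pi\gamma/(\epsilon\beta)}$, multiplied by $2$ for the two symmetric wells. Since $-\cos(\beta u)$ is analytic with uniformly bounded derivatives, the cubic remainder of the Taylor expansion is controlled by $\smallnorm{u - 2\pi/\beta}_\infty^3$; combined with Besov--Hölder concentration of the reference Gaussian measure this yields a relative error of order at most $\epsilon(\log(1/\epsilon))^4$.

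For the capacity I would use that in the $\gamma\beta<1$ regime the transition state is the constant $u_\ast\equiv 0$, whose Hessian has spectrum $\{-\gamma\beta\} \cup \{n^2 - \gamma\beta\}_{0<\abs n\leq N}$: a single unstable direction along the zero mode, all other modes strictly positive. The upper bound tests the Dirichlet functional against a function depending only on $\hat u(0)$, namely the one-dimensional equilibrium potential of an Ornstein--Uhlenbeck process with drift $-\gamma\beta$; in the quadratic approximation of $F$ the oscillatory coordinates integrate out explicitly. The matching lower bound restricts the Dirichlet problem to the slab $\{\abs{\hat u(0)} < \sqrt\epsilon\log(1/\epsilon)\}$ and exploits the $\Holder^{1/2-\kappa}$ constraint in $B_\epsilon$, with $c_0$ chosen large enough to make the quadratic approximation of $F$ valid throughout that slab. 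Dividing the numerator by the resulting capacity produces~\eqref{eq:result easy prefactor}.

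The principal obstacle is uniformity in $N$: individually the products $\prod_{0<\abs n\leq N}(n^2\pm\gamma\beta)$ diverge, and only their ratio stays finite as $N\to\infty$. This forces every estimate to be written as a ratio relative to the underlying Gaussian reference measure, which is precisely the methodology of \cite{berglund_sharp_2013}. A secondary difficulty is the $\epsilon$-dependent sets: one must verify that mass on $\partial B_\epsilon$ outside a small saddle-neighbourhood, and on $\partial A_\epsilon$ outside the wells, contributes negligibly. This follows from Gaussian concentration in the oscillatory modes combined with Schauder-type bounds for the truncated heat semigroup. Crucially, $\gamma\sin(\beta u)$ has derivative uniformly bounded by $\gamma\beta$, so no renormalization is required and the nonlinear corrections can be absorbed into $r(\epsilon)$.
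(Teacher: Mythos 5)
Your overall plan (BEGK ratio, Laplace asymptotics for the numerator, variational bounds for the capacity, Gaussian concentration to control the nonlinear remainder, everything phrased relative to the Gaussian reference so the $N$-dependence cancels in the ratio) does match the structure of the paper's proof, which assembles \Cref{thm:well lower,thm:well upper,thm:easy capacity upper,thm:easy capacity lower}. However the geometry of the energy landscape is muddled in a way that breaks the argument rather than just the notation.

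First, the expected-hitting-time formula has $h_{B_\epsilon,A_\epsilon}$, the probability of returning to $B_\epsilon$ before reaching $A_\epsilon$, in the numerator — you wrote $h_{A_\epsilon,B_\epsilon}$. The function $h_{B_\epsilon,A_\epsilon}$ is $\approx 1$ in the basin of the \emph{starting} well $u \equiv 0$ and $\approx 0$ near the targets $\pm 2\pi/\beta$, so the numerator integral concentrates around $u \equiv 0$, not around the two target wells. There is no factor of $2$ in the numerator; expanding $F$ at $u\equiv 0$ gives the single Gaussian prefactor $\sqrt{(\epsilon/\gamma\beta)\prod 2\pi\epsilon/(n^2+\gamma\beta)}\,e^{2\pi\gamma/(\epsilon\beta)}$. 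Because $F(0)=F(\pm 2\pi/\beta)$ and the Hessians coincide by periodicity, the prefactor you compute happens to look numerically the same, but the spurious factor of $2$ would survive to the final answer.

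Second, you call the transition state $u_\ast\equiv 0$ and assign it the Hessian spectrum $\{-\gamma\beta\}\cup\{n^2-\gamma\beta\}$. But $u\equiv 0$ is the \emph{minimum} you start at, with Hessian $-\Delta+\gamma\beta\cos(0)=-\Delta+\gamma\beta$, spectrum $\{\gamma\beta\}\cup\{n^2+\gamma\beta\}$, all positive. The transition states when $\gamma\beta<1$ are the constants $\pm\pi/\beta$ (\Cref{thm:transition states}), where $\cos(\beta u_\ast)=-1$ and the Hessian is $-\Delta-\gamma\beta$ — that is where the spectrum you quote actually belongs. Relatedly, the factor of $2$ that belongs in the final formula enters through the capacity: there are two symmetric saddles at $\pm\pi/\beta$, and $\capa(A_\epsilon,B_\epsilon)\le\capa(A^-_\epsilon,B_\epsilon)+\capa(A^+_\epsilon,B_\epsilon)=2\capa(A^+_\epsilon,B_\epsilon)$ by symmetry, giving $\capa\approx 2\sqrt{\epsilon\gamma\beta\prod 2\pi\epsilon/(n^2-\gamma\beta)}\,e^{-2\pi\gamma/(\epsilon\beta)}$. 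Your sketch omits this, so even if the numerator factor of $2$ were dropped, the final answer would be off. The fix is mechanical once the roles of well, saddle, and $h_{B,A}$ are sorted out, but as written the argument is not correct.
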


The square root expressions in~\eqref{eq:result hard prefactor} and~\eqref{eq:result easy prefactor}
are related to functional determinants.
In the $\gamma\beta < 1$ case there is a well-known identity
\begin{equation}
\frac{1}{2\gamma\beta} \sqrt{\prod_{0 < \abs n \leq \infty} \frac{n^2 - \gamma\beta}{n^2 + \gamma\beta}}
= \frac{\sin(\pi \sqrt{\gamma\beta})}{2\gamma\beta \sinh(\pi \sqrt{\gamma\beta})}.
\end{equation}
In the $\gamma\beta > 1$ case we cannot compute the prefactor explicitly.
We can still show that the square root expression stays positive as $N \to \infty$.
Up to a factor involving $\mu$, it is computed in~\eqref{eq:determinant hard}.

\begin{remark}
Convergence of the random variables $\tau_A^{(N)}$ in the SPDE limit $N \to \infty$
is an interesting technical question.
In \cite{berglund_sharp_2013,tsatsoulis_exponential_2020}
it is shown with higher moment estimates and uniform integrability.
This argument should be straightforward to transfer to our setup.
We opt to focus on the capacity computations in this article.
\end{remark}

\begin{remark}
Moreover, it would be natural to state \Cref{thm:main transition time easy,thm:main transition time hard}
for fixed initial data $u_0 = 0$.
This is done in the finite-dimensional setup with the Harnack inequality,
which however depends on the dimension.
In \cite{berglund_sharp_2013,tsatsoulis_exponential_2020} this issue is circumvented
with a post-processing method:
the solution is likely to lose memory of the initial condition
during the long time it spends within a well
(see also \cite{martinelli_small_1988,martinelli_small_1989}).

The argument should again be adaptable to our case,
but goes beyond the scope of this article.
In the $\gamma\beta < 1$ case our measures $\mu_{N,\epsilon}$
do converge weakly to a Dirac measure as $\epsilon \to 0$.
\end{remark}

The study of transition times is motivated by approximation of the system
with a simpler Markov chain.
The exponential loss of memory suggests that times between successive jumps
are asymptotically exponentially distributed
(see also \cite[Section~8.4.4]{bovier_metastability_2015}).
Description of metastable systems as Markov chains is classical~\cite{freidlin_random_1984};
see also~\cite{digesu_sharp_2019} and references therein.

It would then be natural to expect the following asymptotic behaviour of the sine-Gordon system:

\begin{conjecture}
Let us rescale the time in~\eqref{eq:DSG} by the mean transition
time~\eqref{eq:result hard prefactor} or~\eqref{eq:result easy prefactor},
depending on $\gamma\beta$.
Then as $\epsilon \to 0$, the scaled process
approaches a symmetric simple random walk on $\{ k2\pi/\beta \colon k \in \Z \}$,
where the times between jumps are exponentially distributed with mean~$1$.
\end{conjecture}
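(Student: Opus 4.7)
The plan is to combine the sharp hitting-time asymptotics of \Cref{thm:main transition time easy,thm:main transition time hard} with a regeneration argument based on the quasi-stationary distribution inside each well, working throughout with the Galerkin truncation~\eqref{eq:DSG truncated}. Let $T_\epsilon$ denote the right-hand side of~\eqref{eq:result hard prefactor} or~\eqref{eq:result easy prefactor}, as appropriate. The first ingredient is symmetry: starting from the quasi-stationary measure on a well $B_k$ centred at $2\pi k/\beta$, the reflection $u \mapsto 4\pi k/\beta - u$ fixes $B_k$ setwise, exchanges $B_{k-1}$ with $B_{k+1}$, and preserves the sine-Gordon potential; modulo the confining term in~\eqref{eq:confining potential}, which only breaks symmetry at wells close to the edge of the confining region, this forces the exit probabilities into $B_{k-1}$ and $B_{k+1}$ to be equal. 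The skeleton chain is therefore (asymptotically) a symmetric simple random walk on $\{2\pi k/\beta : k \in \Z\}$.

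The second ingredient is the exponential law of exit times. Following the general framework of~\cite[Section~8.4.4]{bovier_metastability_2015}, the exit time from $B_k$ started from the quasi-stationary distribution $\nu_k$ is exactly exponentially distributed with mean $1/\lambda_0(B_k)$, where $\lambda_0$ is the principal Dirichlet eigenvalue of the generator on $B_k$; the capacity computations of this article give $\lambda_0(B_k) \sim 1/T_\epsilon$. To propagate this to arbitrary starting measures I would establish a quantitative coupling estimate: any initial condition in $B_k$ approaches $\nu_k$ in total variation on a time scale at most polynomial in $\log(1/\epsilon)$, hence negligible compared with $T_\epsilon$. This should rest on a Poincaré or log-Sobolev inequality for the restricted Gibbs measure, which is essentially log-concave inside the well, in the spirit of~\cite{martinelli_small_1988,martinelli_small_1989,tsatsoulis_exponential_2020}.

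Combining these ingredients through the strong Markov property applied at successive entrance times into the wells yields iid exponential holding times and iid $\pm 1$ steps, and after rescaling time by $T_\epsilon$ the claimed weak convergence on path space follows by the standard continuous mapping argument. The main technical obstacle is to make the loss-of-memory estimate uniform both in the Galerkin dimension $N$ and in the well index $k \in \Z$. Uniformity in $k$ is where the parameter in~\eqref{eq:confining potential} has to be taken to infinity together with $\epsilon \to 0$, so that on the rescaled time scale the walk can reach arbitrarily many wells with high probability; uniformity in $N$ requires the kind of moment and Besov--Hölder estimates developed in~\cite{berglund_sharp_2013,tsatsoulis_exponential_2020}, which go beyond the purely potential-theoretic tools used here. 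I expect this quantitative mixing analysis, rather than the symmetry or the potential-theoretic inputs, to constitute the bulk of the work.
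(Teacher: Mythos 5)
This statement is a \emph{conjecture} in the paper, not a theorem: the authors explicitly decline to prove it, and in the sentence immediately preceding it they say ``we leave the meaning of `approaches' open in the conjecture above.'' So there is no proof in the paper against which to compare yours, and any attempt is being measured against an admittedly imprecise target.

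Your sketch correctly identifies the natural strategy and is consistent with the pointers the paper drops (the exponential law via quasi-stationary distributions from~\cite{bovier_metastability_2015}, loss of memory in the spirit of~\cite{martinelli_small_1988,martinelli_small_1989,tsatsoulis_exponential_2020}). The reflection symmetry $u \mapsto 4\pi k/\beta - u$ does preserve the gradient and cosine terms of the potential and the law of the noise, and does exchange the two saddle manifolds adjacent to $B_k$, so the symmetric-step claim is plausible modulo confinement. But the proposal is an outline, not a proof, and you yourself flag that the quantitative mixing analysis is ``the bulk of the work''; that work is not done. Concretely, three gaps remain. First, the paper's main results~\eqref{eq:result hard prefactor} and~\eqref{eq:result easy prefactor} give $\E_{\mu_{N,\epsilon}}\tau_A$ for the specific harmonic-measure-like $\mu_{N,\epsilon}$ on $\partial B_\epsilon$, not from the quasi-stationary distribution; identifying $1/\lambda_0(B_k)$ with $T_\epsilon$ up to $1 + o(1)$ requires an additional argument relating these starting measures, which the paper explicitly defers. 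Second, the loss-of-memory estimate must be uniform in the Galerkin dimension $N$; the Poincaré/log-Sobolev route you propose is dimension-sensitive in general (this is precisely why the paper avoids the Harnack inequality), so you would need to adapt the coupling arguments of~\cite{tsatsoulis_exponential_2020} rather than invoke a generic functional inequality. Third, the conjecture concerns the original equation~\eqref{eq:DSG}, whereas the entire analysis (including the Palais--Smale condition in \Cref{thm:sg mountain pass}) relies on the confining term in~\eqref{eq:confining potential}; sending the confinement parameter $k \to \infty$ jointly with $\epsilon \to 0$ is a reasonable idea but requires tracking the $k$-dependence of every error constant, which the paper does not do (it states ``the choice of $k$ \ldots only affects constants''). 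Until these are resolved --- and until a precise topology for ``approaches'' is fixed, e.g.\ convergence of finite-dimensional distributions or weak convergence in Skorokhod space on compact windows --- this remains a research programme, not a proof, which is exactly the status the paper assigns to it.
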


Quantitative bounds on convergence to a jump process are harder to find,
and there does not appear to be a result directly applicable to our setup
(but see~\cite{sugiura_metastable_1995} for a finite-dimensional result).
Because of this, we leave the meaning of ``approaches'' open in the conjecture above.

It would be very interesting to rigorously state and prove
convergence of metastable SPDE dynamics to a discrete jump process.
The dynamical sine-Gordon model gives an interesting and nontrivial toy example for this question.

\bigskip\noindent
Let us finish this introduction with two further variations on the theme.

We could also consider the \emph{massive} sine-Gordon equation
\begin{equation}
\partial_t u  + (m^2 - \Laplace) u = -\gamma \sin(\beta u) + \sqrt{2\epsilon} \xi
\end{equation}
for some $m^2 > 0$.
In this model each potential well (up to symmetry) is at different height.
This model is interesting in that it has a natural stationary measure
(see e.g.\ \cite{gubinelli_simple_2024} for discussion).
As the transitions are highly biased towards $0$,
the transition rates between arbitrary wells are dominated by lowest-energy wells.
This makes the computation of sharp rates a challenge
\cite[Example~3.5]{berglund_kramers_2013}.

Hyperbolic variants of the equation (e.g.~\cite{oh_twodimensional_2021}) are
related to canonical stochastic quantization.
A double-well wave equation with initial data from the Gibbs measure
has been studied by Newhall, Tal, and Vanden-Eijnden
\cite{tal_transition_2006,newhall_metastability_2017},
and Barashkov and the present author \cite{barashkov_eyringkramers_2024}.
These articles are based on a weaker definition of transitions
(see the discussion in \cite{barashkov_eyringkramers_2024}),
namely the hitting frequency of a hypersurface that separates the potential minima.

Under hyperbolic dynamics the transition time is expected to differ from its parabolic counterpart
only by a factor of $\sqrt \mu$,
where $-\mu$ is the negative eigenvalue at the saddle; see e.g.~\cite{lee_eyringkramers_2025}.

\bigskip\noindent\textbf{Acknowledgements.}
PL would like to thank Nikolay Barashkov, Giacomo Di~Gesù, and Antti Kupiainen
for numerous useful discussions and comments,
and Vesa Julin for discussions and hospitality during a visit to University of Jyväskylä.
PL was financially supported by the Finnish Centre of Excellence in Randomness and Structures (FiRST).

\section{Stochastic and deterministic equation}

\subsection{Spectral Galerkin approximation}

Let us first introduce some notation.
Since the functions are real-valued,
it is natural to equip $L^2$ with the sine-cosine basis
\begin{equation}
e_0(x) = 1,\;
e_{-n}(x) = \frac{1}{\sqrt{\pi}} \sin(nx),\;
e_n(x) = \frac{1}{\sqrt{\pi}} \cos(nx)
\text{ for } n \in \Z_+.
\end{equation}
Here $e_0$ is unnormalized so that $u(x) \equiv \pm 2\pi/\beta$ corresponds to $\hat u(0) = \pm 2\pi/\beta$.
The Parseval identity then takes the form
\begin{equation}
\norm{u}_{L^2}^2
= \int_\Tor \bigg( \sum_{n \in \Z} \hat u(n) e_n(x) \bigg)^{\! 2} \dx
= 2\pi \hat u(0)^2 + \sum_{n \neq 0} \hat u(n)^2.
\end{equation}
We denote by $H^s$ and $\Holder^s$ the Besov spaces
$B^s_{2,2}$ and $B^s_{\infty,\infty}$ respectively;
see e.g.~\cite[Section~2.2]{barashkov_eyringkramers_2024} for their definitions.

We will use the following notation for decomposition of functions into mean and oscillatory parts:
\begin{equation}
u = \bar u + u_\perp,
\text{ where } \bar u(x) = \hat u(0)
\text{ and } u_\perp(x) = \sum_{n \neq 0} \hat u(n) e_n(x).
\end{equation}
The operator $\projN$ truncates the Fourier series to $\abs n \leq N$,
and $\projNperp = \operatorname{Id} - \projN$.
Additionally, $\projperp u = u_\perp$ in the above notation.

In \Cref{sec:transition times} we will implicitly truncate all Fourier series to $\abs n \leq N$.
We then identify a function $u \in L^2(\Tor)$
with its Fourier coefficients $\hat u \in \R^{2N+1}$.

\bigskip\noindent
Let us then consider the wellposedness and approximation of the sine-Gordon equation.
This theory is already classical, so we refer to the following formulation of it:

\begin{theorem}[Wellposedness]\label{thm:wellposedness}
Fix $T > 0$.
For any initial condition $u_0 \in \Holder(\Tor)$,
equation~\eqref{eq:DSG} has a unique mild solution $u \in \Holder([0,T] \times \Tor)$.
Moreover, there is an almost surely finite random variable $Z$ such that
$u$ can be approximated with solutions $u^{(N)}$ to \eqref{eq:DSG truncated} with
\[
\sup_{0 \leq t \leq T} \norm{u(t, \omega) - u^{(N)}(t, \omega)}_{L^\infty} \leq Z(\omega) N^{-c},
\]
where $\omega$ is the realization of the noise.
\end{theorem}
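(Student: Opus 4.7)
The plan is to apply the Da Prato--Debussche decomposition: write $u = \psi + v$, where $\psi$ is the stochastic convolution solving $\partial_t \psi - \Laplace \psi = \sqrt{2\epsilon}\, \xi$ on $\Tor$ with $\psi(0) = 0$, and $v = u - \psi$ satisfies the random nonlinear PDE
\[
\partial_t v - \Laplace v = -\gamma \sin(\beta(v + \psi)), \qquad v(0) = u_0.
\]
In one spatial dimension the stochastic heat equation is subcritical, so a Fourier-series computation combined with Kolmogorov's continuity criterion shows that $\psi$ almost surely belongs to $\Holder^{1/4-\kappa}([0,T]; \Holder^{1/2-\kappa}(\Tor))$ for every $\kappa > 0$; in particular $\psi \in \Holder([0,T] \times \Tor)$. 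The same computation yields $\norm{\psi - \projN \psi}_{L^\infty([0,T] \times \Tor)} \leq Z(\omega) N^{-c}$ for any fixed $c < 1/2$, with $Z$ an almost surely finite random variable.

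Conditionally on a realization of $\psi$, the remainder equation is a semilinear parabolic PDE whose nonlinearity $w \mapsto -\gamma \sin(\beta(w + \psi))$ is uniformly bounded by $\gamma$ and globally Lipschitz with constant $\gamma\beta$ in $L^\infty$. A standard Banach fixed-point argument in $C([0,T]; C(\Tor))$, extended globally in time by the uniform $L^\infty$ bound on the nonlinearity, produces a unique mild solution $v$; parabolic smoothing then places it in $C([0,T]; \Holder^{2-\kappa}(\Tor))$. Adding back $\psi$ gives $u = v + \psi \in \Holder([0,T] \times \Tor)$.

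For the Galerkin approximation, set $v^{(N)} \coloneqq u^{(N)} - \projN \psi$. Subtracting the two equations, the error $w \coloneqq v - v^{(N)}$ satisfies a linear heat equation with source term
\[
-\gamma(\mathrm{Id} - \projN)\sin(\beta(v + \psi))
    - \gamma \projN \big[\sin(\beta(v + \psi)) - \sin(\beta(v^{(N)} + \projN \psi))\big].
\]
The first piece is $\bigO(N^{-s})$ in $L^\infty$ because $\sin(\beta(v+\psi))$ inherits the Hölder regularity $s = 1/2 - \kappa$ of $v + \psi$, while the second is Lipschitz-controlled by $\gamma\beta(\norm{w}_\infty + \norm{(\mathrm{Id} - \projN)\psi}_\infty)$. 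Since the heat semigroup is a contraction on $L^\infty$, a Grönwall estimate yields $\sup_{t \leq T}\norm{w(t)}_\infty \leq Z(\omega) N^{-c}$, with the same almost surely finite prefactor as above. The main obstacle is the double truncation $\projN \sin(\beta \projN u)$ in \eqref{eq:DSG truncated}: the projections do not commute with $\sin$ and introduce aliasing, but the global boundedness and Lipschitz continuity of $\sin$ keep all constants uniform in $N$ and allow the Grönwall estimate to close.
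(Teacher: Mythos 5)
The paper does not prove this theorem itself; it simply cites Bl\"omker and Jentzen, whose argument is the Ornstein--Uhlenbeck (Da Prato--Debussche) subtraction plus Gr\"onwall machinery you sketch, so your overall plan matches the intended proof.

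There is, however, a genuine gap hiding exactly where you claim that ``global boundedness and Lipschitz continuity of $\sin$ keep all constants uniform in $N$.'' The sharp Fourier truncation $\projN$ is \emph{not} uniformly bounded on $L^\infty(\Tor)$: the Dirichlet kernel has $L^1(\Tor)$ norm of order $\log N$. Consequently, neither your claim that $(\mathrm{Id}-\projN)\sin(\beta(v+\psi))$ is $\bigO(N^{-s})$ in $L^\infty$, nor your claim that $\projN\big[\sin(\beta(v+\psi))-\sin(\beta(v^{(N)}+\projN\psi))\big]$ is Lipschitz-controlled with constant $\gamma\beta$, is correct as stated --- both acquire a $\log N$. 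If you feed the naive constant $\gamma\beta\log N$ into Gr\"onwall you get a factor $\exp(CT\log N)=N^{CT}$, which destroys the $N^{-c}$ rate. The fix is to note that $e^{(t-s)\Laplace}\projN$ \emph{is} uniformly bounded on $L^\infty$ once $t-s\gtrsim N^{-2}$, so the $\log N$ only lives on a time window of length $N^{-2}$ and is harmless after integration; but your bare remark that ``the heat semigroup is a contraction on $L^\infty$'' does not by itself supply this. The paper's quantitative sharpening, \Cref{thm:galerkin approximation}, avoids the issue altogether by estimating in $H^s$ and Besov norms and using Besov embeddings and Bernstein's inequality, where $\projN$ and $\projNperp$ are orthogonal projections of norm one; that is the cleaner route if you want your Gr\"onwall step airtight.
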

\begin{proof}
\cite[Theorem~3.1 and Section~4.2]{blomker_galerkin_2013}.
\end{proof}

The estimate above is not very precise.
It can be improved to bounded expectation with a \mbox{Grönwall} argument.
We make a simplifying assumption on the initial data,
which matches the usage of this lemma in \Cref{thm:ldp equilibrium local}.

\begin{lemma}[Galerkin approximation]\label{thm:galerkin approximation}
Consider the solutions $u$ to \eqref{eq:DSG} and $u^{(N)}$ to \eqref{eq:DSG truncated},
where both have Fourier-truncated initial data $u_0 = \projN u_0$ and the same realization of the noise.
Then for any $\kappa > 0$ and $0 \leq \alpha < 1/2 - \kappa$ we have
\[
\E \norm{u(t) - u^{(N)}(t)}_{\Holder^\alpha(\Tor)}
\lesssim N^{\alpha - 1/2 + \kappa} C(t),
\]
uniformly in $N$, where $C(t)$ is continuous and increasing in $t$.
\end{lemma}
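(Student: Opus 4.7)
The plan is to decompose $v \coloneqq u - u^{(N)}$ as $v = \projNperp u + w$, where $w \coloneqq \projN u - u^{(N)}$ lives in the finite-dimensional truncated space, and to bound the two pieces separately. The first piece is a tail estimate on the solution of the full SPDE, while the second satisfies a forced heat equation with no noise and zero initial data (this is where the hypothesis $u_0 = \projN u_0$ pays off).

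First I would note that $u^{(N)}$ remains in $\projN L^2$, so $\projN u^{(N)} = u^{(N)}$. Applying $\projN$ to \eqref{eq:DSG} and subtracting \eqref{eq:DSG truncated}, the noise and initial data cancel and I obtain
\[
\partial_t w - \Laplace w = -\gamma \projN \bigl[\sin(\beta u) - \sin(\beta u^{(N)})\bigr], \qquad w(0) = 0.
\]
For the tail $\projNperp u$, I would use the standard Besov-type bound $\norm{\projNperp f}_{\Holder^\alpha} \lesssim N^{\alpha - (1/2 - \kappa)} \norm{f}_{\Holder^{1/2-\kappa}}$ together with the moment estimate $\E \norm{u(t)}_{\Holder^{1/2-\kappa}} \leq C_1(t)$ coming from \Cref{thm:wellposedness} and standard bounds on the stochastic convolution. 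This already yields
\[
\E \norm{\projNperp u(t)}_{\Holder^\alpha} \lesssim N^{\alpha - 1/2 + \kappa}\, C_1(t).
\]

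For $w$, I would use the Duhamel formulation together with the Lipschitz bound $\abs{\sin(\beta u) - \sin(\beta u^{(N)})} \leq \beta \abs{v}$ and the Schauder-type smoothing estimate $\smallnorm{e^{t\Laplace} g}_{\Holder^\alpha} \lesssim t^{-\alpha/2} \smallnorm{g}_{L^\infty}$. This gives
\[
\norm{w(t)}_{\Holder^\alpha} \lesssim \gamma\beta \int_0^t (t-s)^{-\alpha/2} \bigl( \norm{\projNperp u(s)}_{L^\infty} + \norm{w(s)}_{L^\infty} \bigr) \diff s.
\]
Taking expectations, using the tail bound above on $\projNperp u$ (now in $L^\infty$, which is no worse than $\Holder^\alpha$), and closing the inequality with a singular Gronwall lemma adapted to the kernel $(t-s)^{-\alpha/2}$, I obtain $\E \norm{w(t)}_{\Holder^\alpha} \lesssim N^{\alpha - 1/2 + \kappa} C_2(t)$ for a continuous increasing $C_2(t)$. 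Summing the two contributions yields the claimed estimate with $C(t) = C_1(t) + C_2(t)$.

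The main obstacle I anticipate is the bookkeeping around the singular Gronwall step and justifying that expectations commute with the Hölder norm estimates; this is a standard application of Kolmogorov's continuity criterion and moment bounds on the $\projNperp$-truncated stochastic convolution, but it needs $\alpha < 1/2 - \kappa$ strictly to absorb a logarithmic factor in the Besov embedding. All other ingredients (Lipschitz sine, heat smoothing, Fourier truncation bounds) are elementary.
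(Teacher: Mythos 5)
Your decomposition $v = \projNperp u + w$ with $w = \projN u - u^{(N)}$, exploiting that $w$ solves a noise-free forced heat equation with zero initial data, is a genuinely different organization from the paper's proof, and it works. The paper instead writes $u - u^{(N)}$ directly via the mild formulation,
\[
u(t) - u^{(N)}(t) = -\gamma \int_0^t e^{(t-s)\Laplace}\bigl[\sin(\beta u(s)) - \projN \sin(\beta u^{(N)}(s))\bigr]\diff s + \sqrt{2\epsilon}\, \projNperp(\mathcal O_t),
\]
and splits the nonlinear term \emph{inside} the Duhamel integral into a $\projN$-piece handled by Lipschitz continuity and Grönwall, and a $\projNperp$-piece bounded crudely via $\norm{\projNperp e^{(t-s)\Laplace}\sin(\beta u)}_{H^1} \lesssim (1 + (t-s)^{-1/2})\norm{\sin(\beta u)}_{L^2} \lesssim 1 + (t-s)^{-1/2}$, using only that $\abs{\sin} \leq 1$. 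The only random quantity that enters the paper's final bound is then $\norm{\mathcal O_t}_{\Holder^{1/2-\kappa}}$, whose expectation is bounded by parabolic regularity of the stochastic convolution. Your route buys a cleaner structural separation (noise-free evolution for $w$, pure tail estimate for $\projNperp u$), but it costs one extra ingredient not stated in \Cref{thm:wellposedness}: the a priori moment bound $\E \norm{u(t)}_{\Holder^{1/2-\kappa}} \leq C_1(t)$ for the full SPDE solution. This is easy to derive from the mild formulation, boundedness of sine, and the stochastic convolution estimates — essentially the same ingredients the paper uses — but you should flag it as a separate step rather than attributing it to \Cref{thm:wellposedness}, which only asserts wellposedness and an almost-sure Galerkin convergence rate with a random constant. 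Both your argument and the paper's share the same minor bookkeeping point around $\alpha = 0$ (needing an $L^\infty$ bound inside the Grönwall iterand, which is not quite $\Holder^0 = B^0_{\infty,\infty}$), and the same need to take a pathwise supremum over $s \leq t$ before passing to expectations; neither is a gap.
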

\begin{proof}
By the definition of mild solution,
\begin{equation}
u(t) - u^{(N)}(t)
= -\gamma \int_0^t \!e^{(t-s)\Laplace} \!\left[
    \sin(u(s)) - \projN \sin(u^{(N)}(s))
\right] \diff s
    + \sqrt{2\epsilon} \projNperp(\mathcal O_t),
\end{equation}
where $e^{t \Laplace}$ is the heat semigroup
and $\mathcal O_t = (e^{\cdot \Laplace} \ast \xi)(t)$ is the stochastic convolution of the white noise.
The linear $e^{t\Laplace} u_0$ terms are cancelled out.

We can then use Besov embeddings and Bernstein's theorem to estimate
\begin{align}
\norm{u(t) - u^{(N)}(t)}_{\Holder^\alpha}
&\lesssim \int_0^t \norm{\projN e^{(t-s)\Laplace} [\sin(u(s)) - \sin(u^{(N)}(s))]}_{H^{\alpha+1/2+\kappa}} \!\!\diff s \notag\\
    &\qquad + N^{\alpha - 1/2 + \kappa} \int_0^t \norm{\projNperp e^{(t-s)\Laplace} \sin(u(s))}_{H^1} \diff s\\
    &\qquad+ \sqrt{2\epsilon} N^{\alpha - 1/2 + \kappa} \norm{\mathcal O_t}_{\Holder^{1/2-\kappa}}. \notag
\end{align}
The heat semigroup satisfies
$\|{e^{t\Laplace} \phi}\|_{H^{r}} \leq (1 + C_r t^{-r/2}) \norm{\phi}_{L^2}$
for any $r \geq 0$ (see e.g.\ \cite[Lemma~3.3.3]{berglund_introduction_2022}).
Using this and Lipschitz continuity of sine we get
\begin{align}
\norm{u(t) - u^{(N)}(t)}_{\Holder^\alpha}
&\lesssim \int_0^t (1 + (t-s)^{-\alpha/2-1/4-\kappa/2}) \norm{u(s) - u^{(N)}(s)}_{L^\infty} \diff s \notag\\
    &\qquad + N^{\alpha - 1/2 + \kappa} \int_0^t (1 + (t-s)^{-1/2}) \diff s\\
    &\qquad+ \sqrt{2\epsilon} N^{\alpha - 1/2 + \kappa} \norm{\mathcal O_t}_{\Holder^{1/2-\kappa}}. \notag
\end{align}
Then Grönwall's inequality implies
\begin{equation}
\begin{split}
\norm{u(t) - u^{(N)}(t)}_{\Holder^\alpha}
&\lesssim N^{\alpha - 1/2 + \kappa} [C(t) + \sqrt{2\epsilon} \norm{\mathcal O_t}_{\Holder^{1/2-\kappa}}]\\
    &\qquad \exp\!\left( \int_0^t (1 + (t-s)^{-\alpha/2-1/4-\kappa/2}) \diff s \right)\!.
\end{split}
\end{equation}
The expectation of the $\norm{\mathcal O_t}$ term is bounded by
parabolic Hölder regularity of the stochastic convolution
\cite[Corollary~3.3.11]{berglund_introduction_2022},
so we get the claim.
\end{proof}

\subsection{Gates and stationary solutions}

The potential-theoretic approach requires us to consider the topology of the energy landscape.
The lowest-energy path between two potential minima is described by the mountain pass theorem.
See e.g.\ \cite[Section~8.5]{evans_partial_2002} for a (simplified) proof.

\begin{theorem}[Mountain pass theorem]\label{thm:mountain pass}
Let $H$ be a Hilbert space, $u_0 \in H$, and $I \in C^1(H;\, \R)$ a functional such that
\begin{enumerate}
\item $I$ satisfies the Palais--Smale compactness condition:
    if $I[u_k]$ is bounded and $I'[u_k] \to 0$, then the sequence $u_k$ has a convergent subsequence;
\item there exist $r, \delta > 0$ such that $I[u] \geq I[u_0] + \delta$ whenever $\norm{u - u_0} = r$; and
\item there is $v \in H$ such that $\norm{v - u_0} > r$ and $I[v] \leq I[u_0]$.
\end{enumerate}
Let $\Gamma$ be the set of $C([0,1];\, H)$ paths connecting $u_0$ to $v$.
Then
\[
\inf_{\gamma \in \Gamma} \max_{t \in [0,1]} I[\gamma(t)]
\]
is attained at a critical point of $I$.
\end{theorem}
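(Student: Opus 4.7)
The plan is to follow the classical deformation-flow argument. I would first define the candidate critical value
\[
c := \inf_{\gamma \in \Gamma} \max_{t \in [0,1]} I[\gamma(t)].
\]
By condition (ii), any continuous path from $u_0$ to $v$ must cross the sphere $\{\|u - u_0\| = r\}$, so $c \geq I[u_0] + \delta$; and $c$ is finite since any fixed path in $\Gamma$ gives an upper bound. The goal is to show that this infimum is attained at a critical point. I would argue by contradiction: suppose no critical point $u$ satisfies $I[u] = c$ and $I'[u] = 0$. Combined with the Palais--Smale condition this should upgrade to a quantitative statement — there exist $\epsilon_0 \in (0, \delta)$ and $\delta_0 > 0$ such that $\|I'[u]\| \geq \delta_0$ for every $u$ with $|I[u] - c| \leq \epsilon_0$. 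Indeed, failure of such a bound would furnish a Palais--Smale sequence at level $c$, whose subsequential limit would be a critical point at that level by hypothesis (i).

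Next I would construct a deformation $\eta_s : H \to H$ by integrating a bounded, locally Lipschitz vector field of the form
\[
V(u) = -\chi(u)\, \frac{I'[u]}{\max(\|I'[u]\|^2,\, \delta_0^2)},
\]
where $I'[u]$ is identified with an element of $H$ via Riesz, and $\chi$ is a smooth cutoff that equals $1$ when $|I[u] - c| \leq \epsilon_0/2$ and vanishes when $|I[u] - c| \geq \epsilon_0$. Because $V$ is globally bounded and locally Lipschitz, the flow $\eta_s$ exists for all $s \geq 0$. A direct computation gives $\frac{d}{ds} I[\eta_s(u)] \leq -\chi(\eta_s(u))$, so after a short time $s_0 \sim \epsilon_0$, any point starting with $I \leq c + \epsilon_0/2$ ends with $I \leq c - \epsilon_0/2$.

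To close the argument I would pick a near-optimal path $\gamma_0 \in \Gamma$ with $\max_t I[\gamma_0(t)] \leq c + \epsilon_0/2$, which exists by the definition of the infimum. Since $I[u_0] \leq c - \delta < c - \epsilon_0$ and likewise $I[v] \leq I[u_0]$, the cutoff $\chi$ vanishes at both endpoints, so $\eta_{s_0}$ fixes $u_0$ and $v$, and $\eta_{s_0} \circ \gamma_0$ is again an element of $\Gamma$. Its maximum value of $I$ is at most $c - \epsilon_0/2 < c$, contradicting the definition of $c$. Hence a critical point at level $c$ must exist.

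The main technical obstacle is the careful construction of the deformation: the vector field must be globally integrable, strictly decreasing for $I$ inside the critical strip, and inert at the endpoints $u_0$ and $v$. In a general Banach space this requires Palais's pseudo-gradient lemma, but since $H$ is Hilbert the Fréchet derivative itself serves as the descent direction, which removes that subtlety. The Palais--Smale condition is precisely the compactness input that prevents the gradient flow from draining off to infinity without producing a critical point.
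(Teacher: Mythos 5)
The paper does not give its own proof of this theorem: it simply cites \cite[Section~8.5]{evans_partial_2002}, and your deformation argument is the same one that appears there. The structure is right: define the minimax level $c$, use Palais--Smale to upgrade ``no critical point at level $c$'' to a uniform bound $\smallnorm{I'[u]} \geq \delta_0$ on the strip $|I[u]-c| \leq \epsilon_0$, build a truncated downhill flow, and push a near-optimal path below $c$ while keeping the endpoints fixed. The bookkeeping (choosing $\epsilon_0 < \delta$ so the endpoints lie outside the strip, running time $s_0 = \epsilon_0$, etc.) is all correct.

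The one genuine gap is in the last paragraph, where you claim that working in a Hilbert space ``removes'' the need for the pseudo-gradient lemma. It does not. Riesz representation lets you treat $I'[u]$ as an element of $H$ rather than $H^*$, but the deformation lemma also requires the vector field $V$ to be \emph{locally Lipschitz}, and $I \in C^1$ only gives that $I'$ is continuous. In infinite dimensions a continuous vector field need not generate a flow at all (Peano's theorem fails), so without either extra regularity of $I'$ or a pseudo-gradient replacement the ODE $\dot\eta = V(\eta)$ is not known to have solutions, and the argument stalls. Evans's ``simplified'' proof — the one cited — handles this by explicitly assuming that $I'$ is Lipschitz continuous on bounded subsets of $H$; Palais's pseudo-gradient construction is what lets one drop that assumption for general $C^1$ functionals, and it is needed in Hilbert space just as in Banach space. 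For the paper's actual application this is harmless, since the truncated potential $F$ in~\eqref{eq:confining potential} is smooth in finitely many variables, but your proof of the theorem as stated ($I \in C^1$ only) either needs the pseudo-gradient lemma or an additional Lipschitz hypothesis.
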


This connects well with the following definition,
as used by \cite{berglund_eyringkramers_2010,bovier_metastability_2015,avelin_geometric_2023}.
It is important to notice that the mountain pass might not be unique;
correspondingly the gates are sets of points.
In our case the essential gate will be a single point if $\gamma\beta < 1$,
and a closed curve if $\gamma\beta > 1$.

\begin{definition}[Essential gate]
A path $\gamma \in \Gamma$ is \emph{optimal}
and $c \in \R$ the \emph{communication height between $u_0$ and $v$} if
\[
\max_{t \in [0,1]} I[\gamma(t)]
= \inf_{\eta \in \Gamma} \max_{t \in [0,1]} I[\eta(t)]
\eqqcolon c.
\]
Then $G \subset H$ is a \emph{gate between $u_0$ and $v$}
if it is a minimal subset of $I^{-1}\{c\}$ such that all optimal paths intersect it.
Gates might not be unique; the union of all gates is called an \emph{essential gate}.
These definitions extend naturally to $u_0, v \subset H$ instead of points.
\end{definition}

In order to use the mountain pass theorem with the sine-Gordon model,
we need the potential to be confining also in the direction of the zero mode,
as defined in~\eqref{eq:confining potential}.
Otherwise $u_k \equiv k\pi/\beta$ would be a counterexample.

\begin{lemma}[Existence of essential gates]\label{thm:sg mountain pass}
\Cref{thm:mountain pass} holds for the
localized sine-Gordon model~\eqref{eq:DSG truncated}--\eqref{eq:confining potential},
and hence the essential gate between adjacent minima is a subset of the critical points of $F$.
\end{lemma}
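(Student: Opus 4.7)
The plan is to apply \Cref{thm:mountain pass} in the finite-dimensional Hilbert space $H = \projN L^2(\Tor)$, equipped with the $H^1$ inner product (all Sobolev norms are equivalent on this finite-dimensional space), choosing endpoints $u_0 \equiv 0$ and $v \equiv 2\pi/\beta$. These are two adjacent local minima of $F$ with equal energy $F[0] = F[v] = -2\pi\gamma/\beta$, both lying in the region where the confining term in~\eqref{eq:confining potential} vanishes (for any $k \geq 1$). Once the three hypotheses are verified, the mountain pass value $c$ is a critical value of $F$. Combined with the standard gradient-flow deformation argument (an optimal path cannot realize its maximum at a regular point of $F$, else it could be pushed below level $c$), every point at which an optimal path attains value $c$ must be critical, so the essential gate lies in $F^{-1}\{c\} \cap \{F' = 0\}$.

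Hypothesis (iii) is immediate: $\norm{v - u_0}_{H^1} = (2\pi)^{3/2}/\beta > 0$ and $F[v] = F[u_0]$ by $2\pi/\beta$-periodicity of the cosine. For hypothesis (ii), Taylor-expanding $F$ at $0$ inside the region where the confining term vanishes gives
\[
F[u] - F[0] = \tfrac12 \langle (-\Laplace + \gamma\beta) u, u \rangle_{L^2} + O(\norm{u}_{L^\infty}^3).
\]
The Hessian $-\Laplace + \gamma\beta$ has spectrum $\{\gamma\beta\} \cup \{n^2 + \gamma\beta\}_{1 \leq |n| \leq N}$ in the sine-cosine basis, so it is coercive on $H$ with a lower bound depending only on $\gamma\beta$. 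For $r$ small enough that the cubic remainder is dominated, we obtain $F[u] \geq F[0] + \delta$ whenever $\norm{u}_{H^1} = r$.

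The Palais--Smale condition (i) is the main step to verify, though it is easy in the truncated setting. If $|F[u_k]| \leq M$ and $F'[u_k] \to 0$, the kinetic part of $F$ gives $\norm{\nabla u_k}_{L^2}^2 \leq 2M + 4\pi\gamma/\beta$, while the confining term forces $|\hat u_k(0)| \leq k \cdot 2\pi/\beta + \sqrt{M + 2\pi\gamma/\beta}$. Hence $\{u_k\}$ is bounded in $H^1$, and in the finite-dimensional space $\projN L^2$ a convergent subsequence then exists by Bolzano--Weierstrass. The only subtlety I foresee is that the confining term is $C^{1,1}$ rather than $C^2$ at the threshold $|\hat u(0)| = k \cdot 2\pi/\beta$, since $(|\hat u(0)| - k\cdot 2\pi/\beta)_+^2$ has a Lipschitz but not differentiable gradient. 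This $C^1$ regularity is exactly what \Cref{thm:mountain pass} requires, so the theorem applies and yields the claim.
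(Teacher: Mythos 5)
Your proposal is correct and essentially sound, but it takes a genuinely different route from the paper for the mountain-pass geometry condition (ii). You Taylor-expand $F$ at $0$ and exploit local coercivity of the Hessian $-\Laplace + \gamma\beta$, choosing the sphere radius $r$ small enough that the higher-order remainder is dominated. The paper instead fixes $r = \pi/\beta$ (a large, explicit radius, nearly halfway to the other well) and proves the lower bound by splitting into the cases $\norm{u_\perp}_{\ell^2} \geq \pi/(4\beta)$ and $\norm{u_\perp}_{\ell^2} \leq \pi/(4\beta)$; in the first the kinetic term gives the gap directly, and in the second the authors show via pointwise sign estimates on $\cos(\beta u)$ together with Markov's and Hölder's inequalities that the cosine term cannot be too negative. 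Your local-expansion version is shorter and more standard, and it works equally well because condition (iii) only needs $\norm{v - u_0} > r$, and the separation $\norm{v - u_0}$ is fixed while $r$ may be taken arbitrarily small. The paper's version has the virtue of producing a quantitative gap that holds on a macroscopic sphere, which is occasionally useful elsewhere, but that extra strength is not needed for this lemma. Your Palais--Smale argument is the same as the paper's, modulo your correct observation that in the finite-dimensional Galerkin space one can appeal directly to Bolzano--Weierstrass rather than the compact embedding $H^1 \hookrightarrow L^2$. Your remark that the confining term is only $C^{1,1}$ at the threshold (and that $C^1$ suffices for \Cref{thm:mountain pass}) is a correct and worthwhile point that the paper leaves implicit.
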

\begin{proof}
The Palais--Smale condition follows from the potential having quadratic growth in every direction:
The assumption $|F[u_k]| \leq K$ implies
\begin{equation}
\frac 1 2 \norm{\nabla u_k}_{L^2}^2 - \frac{2\pi\gamma}{\beta}
   + \frac 1 2 \abs{\hat u_k(0)}^2 - (4\pi/\beta)^2 \leq K,
\end{equation}
and hence the $H^1$ norm of $u_k$ is uniformly bounded.
As $H^1(\Tor)$ is compactly embedded into $L^2(\Tor)$,
there is an $L^2$-convergent subsequence.

By symmetry, we can choose $u_0 = 0$
and $v$ to be the constant $2\pi/\beta$ function.
Then $F[u_0] = F[v] = -2\pi\gamma/\beta$.

It remains to verify the third condition.
We pick $r = \pi/\beta$.
Let $u \in L^2$ satisfy $\norm{\hat u}_{\ell^2} = r$.
We split into two cases based on $\norm{u_\perp}$.
If $\norm{u_\perp}_{\ell^2} \geq \pi/(4\beta)$, then it automatically follows that
\begin{equation}
F[u] \geq \frac 1 2 \left(\frac{\pi}{4\beta}\right)^{\!2} - \frac{2\pi\gamma}{\beta}.
\end{equation}
Let us then consider the case $\norm{u_\perp}_{\ell^2} \leq \pi/(4\beta)$.
We now need a non-trivial estimate on the cosine term.
The assumption implies
\begin{equation}
\sqrt{\frac{15}{16}} \frac{\pi}{\beta} \leq \abs{\hat u(0)} \leq \frac{\pi}{\beta}.
\end{equation}
This then implies the pointwise estimate $\cos(\beta u(x)) < 0$
when $\abs{u_\perp(x)} < \pi/(5\beta)$.
We can hence use Markov's and Hölder's inequalities to estimate
\begin{equation}
F[u]
\geq -\frac{\gamma}{\beta}
    \big|{\{ x \in \Tor \colon \abs{u_\perp(x)} \geq \pi/(5\beta) \}}\big|
\geq -\frac\gamma\beta \cdot \frac{5\beta \cdot \sqrt{2\pi} \norm{u_\perp}_{L^2}}{\pi}.
\end{equation}
By assumption then
\begin{equation}
F[u]
\geq -\frac{5}{4\sqrt{2\pi}} \cdot \frac{2\pi\gamma}{\beta}
> -\frac{1}{2} \cdot \frac{2\pi\gamma}{\beta},
\end{equation}
which is again strictly greater than the potential at a minimum.
\end{proof}

Let us then derive a concrete expression for the critical points.
If we set $\partial_t u = 0$ and $\epsilon = 0$ in \eqref{eq:DSG}, we get
\begin{equation}
u''(x) = \gamma \sin(\beta u(x)).
\end{equation}
This is a Hamiltonian system in the $x$ variable,
and its phase portrait is shown in \Cref{fig:phase portrait}.
All orbits are stationary solutions,
but only those whose lengths divide $2\pi$ satisfy the periodic boundary condition.

\begin{figure}[tb]
\centering
\includegraphics[width=\textwidth]{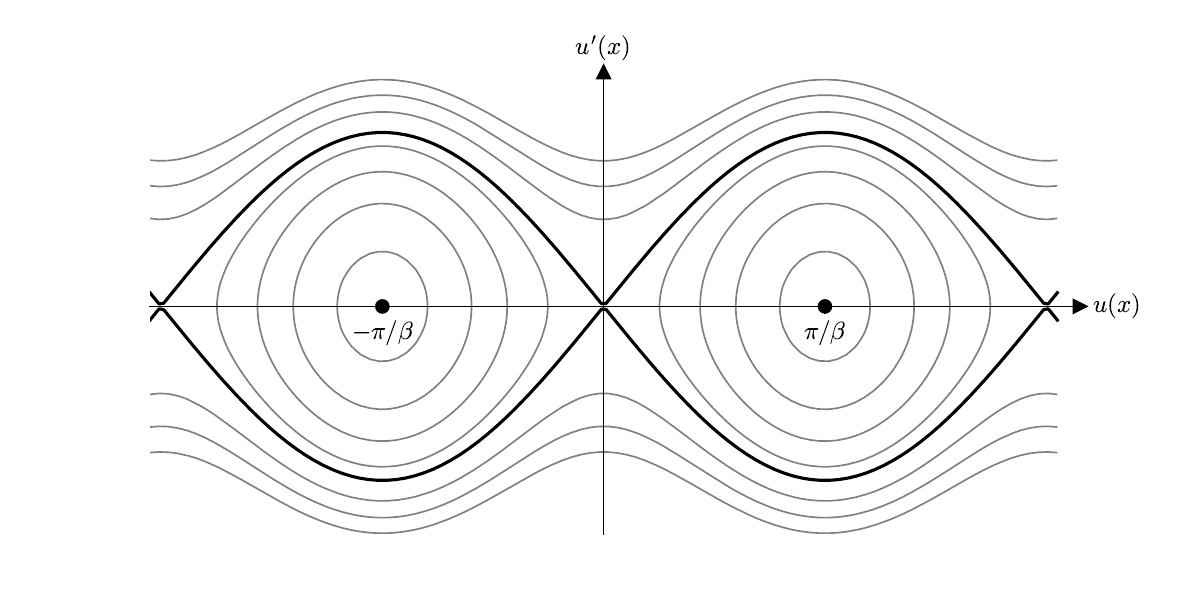}
\caption{Phase portrait of the stationary system.
The highlighted contour corresponds to the heteroclinic orbit with infinite period,
and the filled circles to period-$0$ orbits.}
\label{fig:phase portrait}
\end{figure}

With some manipulation the equation can be converted into the pendulum equation
\begin{equation}
f''(x) = -\gamma\beta \sin(f(x)), \quad\text{with } f(x) = \beta u(x) - \pi,
\end{equation}
which yields the explicit solution
\begin{equation}\label{eq:stationary solution}
u(x) = \frac{1}{\beta} \left[
    \pi + 2 \arcsin(\sqrt m \cd(\sqrt{\gamma\beta} x, m)) \right],
\quad m = \cos(\beta u(0)/2)^2.
\end{equation}
Here $\cd$ is the Jacobi elliptic function.
In our parametrization the period of $\cd$ is $4K(m)$,
where $K$ is Jacobi elliptic integral function.

Since $K$ is increasing, we find that the period $P(u_0)$ of an orbit passing through $(u_0, 0)$
is increasing in $u_0 \in [\pi/\beta, 2\pi/\beta]$ and satisfies the limits
\begin{equation}
P(\pi/\beta) = \frac{2\pi}{\sqrt{\gamma\beta}},
\quad P(2\pi/\beta) = \infty.
\end{equation}
New stationary solutions with period dividing $2\pi$ hence appear
when $\sqrt{\gamma\beta} \in \N$.
See \Cref{fig:solutions} for examples.
This behaviour is similar to that of the Allen--Cahn equation~\cite{maier_effects_2003}.

\begin{figure}[tb]
\centering
\includegraphics[width=0.75\textwidth]{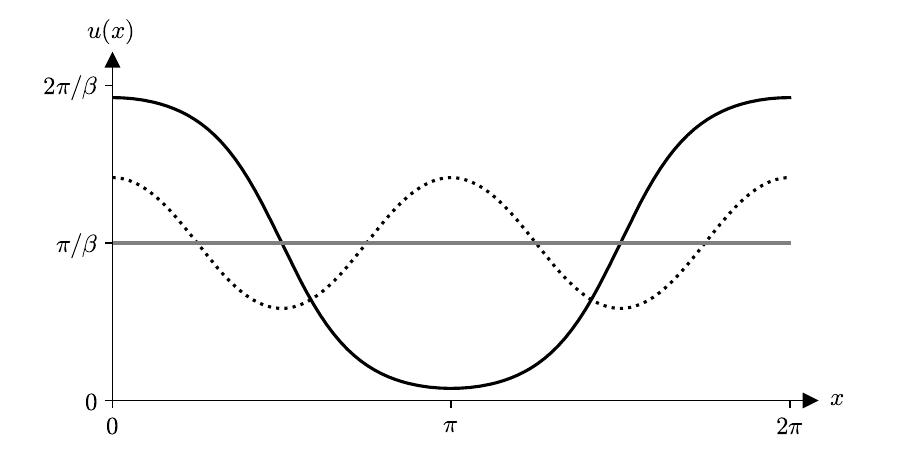}
\caption{The three (representatives of) unstable stationary solutions when $\gamma = 1$ and $\beta = 5$.
The solid black line is the energetically optimal transition state.}
\label{fig:solutions}
\end{figure}

\subsection{Transition states}\label{sec:stationary}

By the theory so far, we know that all points in the essential gate are critical points of the potential.
We still need to exclude the critical points that are not part of an essential gate.
The equivalent characterization
used in \cite{berglund_eyringkramers_2010,berglund_sharp_2013} is more convenient for this.
See \cite[Section~2]{berglund_eyringkramers_2010} for comparison of the two approaches.

\begin{definition}[Transition state]
A critical point $z$ of potential $F \in C^2(\R^d)$ is a \emph{transition state}
if $\Hess F[z]$ has \emph{exactly} one negative eigenvalue.
\end{definition}

\begin{lemma}[Eigenvalues at essential gates]
Assume that the Hilbert space $H$ is finite-dimensional and $I \in C^2(H;\, \R)$.
Let $u$ belong to an essential gate.
Then $\Hess I[u]$ has \emph{at most} one strictly negative eigenvalue.
\end{lemma}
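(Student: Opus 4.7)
I would prove the contrapositive: assuming $\Hess I[u]$ has at least two strictly negative eigenvalues, I show that $u$ cannot lie in any minimal gate. Set $c = I[u]$; note that $u$ is automatically a critical point of $I$ (a standard consequence of the mountain pass deformation argument, discussed in \cite{berglund_eyringkramers_2010}).

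Let $V \subseteq H$ be the $2$-dimensional subspace spanned by two eigenvectors of $\Hess I[u]$ with strictly negative eigenvalues. Taylor expansion at the critical point $u$ gives
\[
I[u + v] = c + \tfrac{1}{2} \langle \Hess I[u] v, v \rangle + o(\|v\|^2),
\]
so $I[u + v] \leq c - \alpha \|v\|^2$ on $V$ for some $\alpha > 0$ and sufficiently small $v$. From this I would establish the key local topological fact: in a small ball $N = B(u, r)$, the sublevel set $\{I \leq c\} \cap N$ remains path-connected even after removing a small neighborhood of the local critical set through $u$. In the non-degenerate case this follows from the Morse lemma; in the degenerate case --- which arises here from the translation symmetry of the non-constant transition states when $\gamma\beta > 1$ --- one works with Morse--Bott coordinates around the local manifold of critical points, and the transverse Hessian is still non-degenerate with two negative eigenvalues. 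The essential topological input is that a punctured disk of dimension at least $2$ is connected, so the $2$-dimensional negative eigenspace furnishes a detour around $u$ while staying inside $\{I \leq c\}$.

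Now assume for contradiction that $u$ belongs to a minimal gate $G$. By minimality there exists an optimal path $\gamma$ whose image meets $G$ only at $u$. Choose $N$ small enough that $G \cap N$ lies within the local critical set through $u$ (which is $\{u\}$ in the non-degenerate case, or a local manifold $M$ in the Morse--Bott case). Replace $\gamma$ inside $N$ by a continuous path in $\{I \leq c\} \cap N$ joining the entry and exit points of $\gamma$ while avoiding a small neighborhood of the local critical set, using the path-connectedness established above. The modified path $\tilde\gamma$ connects $u_0$ to $v$, satisfies $\max I[\tilde\gamma] \leq c$ (and hence equals $c$, so $\tilde\gamma$ is optimal), and is disjoint from $G$: outside $N$ it coincides with $\gamma$, which met $G$ only at $u \in N$, and inside $N$ it avoids all of $G \cap N$ by construction. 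This contradicts $G$ being a gate.

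The main technical obstacle is making the path-modification step rigorous when the Hessian is degenerate, which is the relevant case in this paper because of translation invariance. In the pure Morse case the argument is clean; the Morse--Bott case requires slightly more care but the topological content is unchanged. Such deformation arguments are standard in the metastability literature and are spelled out in, e.g., \cite{berglund_eyringkramers_2010,bovier_metastability_2015}, to which the paper already refers for comparison of the different characterizations of essential gates.
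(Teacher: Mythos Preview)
The paper does not actually prove this lemma; it simply cites \cite[Proposition~2.6]{berglund_eyringkramers_2010}. Your sketch is a correct outline of the standard argument behind that reference: use a two-dimensional negative eigenspace at the critical point to build a detour around $u$ inside $\{I \leq c\}$, exploiting that a punctured two-disk is connected, and thereby contradict minimality of the gate.

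One small remark: the lemma as stated is general, with no structural assumption on the degeneracy of $\Hess I[u]$, whereas your treatment of the degenerate case invokes Morse--Bott coordinates. That is exactly the case relevant to this paper (the zero eigenvalue comes from translation invariance), so your argument covers what is needed here; but for the abstract statement one should note, as you do, that the full details in the general degenerate setting are deferred to the cited references. The essential idea --- that index $\geq 2$ forces the local sublevel set minus the critical locus to be connected --- is the correct one in all cases.
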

\begin{proof}
\cite[Proposition~2.6]{berglund_eyringkramers_2010}.
\end{proof}

The potential energies of the stationary solutions
form the bifurcation diagram of \Cref{fig:energies}.
The kink-antikink pair that appears at $\gamma\beta = 1$ is energetically optimal
for all $\gamma\beta > 1$.
We prove this rigorously for the truncated system in \Cref{thm:transition states}.

\begin{figure}[tb]
\centering
\includegraphics[width=0.75\textwidth]{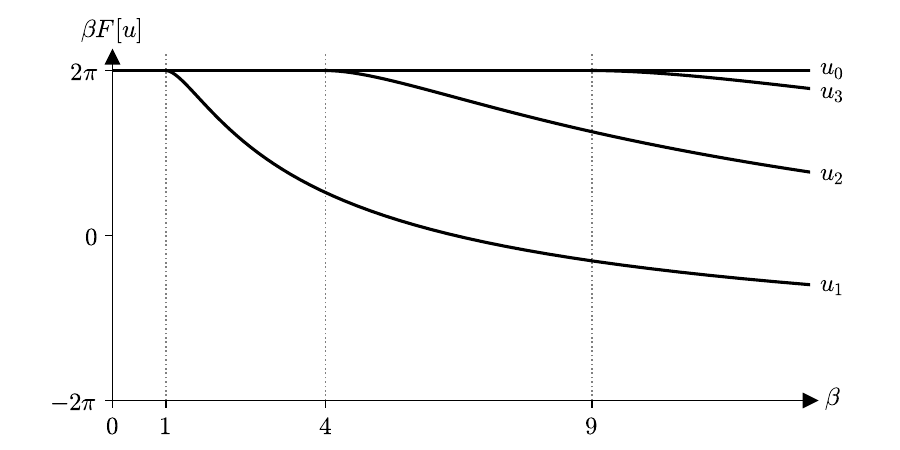}
\caption{The potential energies of the stationary solutions follow a bifurcation diagram.
The $2\pi$-periodic stationary solution that appears at $\gamma\beta = 1$ has the lowest energy.
Here $\gamma = 1$.
Note that the energy on $y$-axis is scaled by $\beta$.}
\label{fig:energies}
\end{figure}

As the potential-theoretic development in \Cref{sec:transition times}
requires finitely many dimensions,
we will now pass to the truncated equation~\eqref{eq:DSG truncated}.
This does not affect \Cref{thm:sg mountain pass}.
The following lemma shows that the approximation is valid in our context.

\begin{lemma}[Convergence of stationary solutions]\label{thm:critical galerkin}
Let $N$ be sufficiently large.
Up to translates (in the case $\gamma\beta > 1$),
the full equation~\eqref{eq:DSG} and truncated equation~\eqref{eq:DSG truncated}
have the same number of stationary solutions.

Moreover, for any $p > 0$ the following holds:
If $u_\ast$ is a stationary solution of~\eqref{eq:DSG},
then~\eqref{eq:DSG truncated} has a stationary solution $u_N$ such that
$\norm{u_\ast - u_N}_{L^2} \lesssim N^{-p}$ for all $N \geq N_0$.
The constants depend on $\gamma$, $\beta$, and $p$.
\end{lemma}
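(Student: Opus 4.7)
The argument splits into an \emph{existence} step (each full stationary solution is shadowed by a truncated one) and a complementary \emph{no-spurious-solutions} step.

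For existence, I would run a Newton / implicit-function argument. The full stationary solutions, i.e.\ the constants $k\pi/\beta$ together with the $\cd$-family of~\eqref{eq:stationary solution}, are analytic in $x$, so for every exponent $p$,
\[
\norm{\projNperp u_\ast}_{L^2} \lesssim N^{-p}.
\]
Define on $V_N \coloneqq \projN L^2(\Tor)$ the map $G_N(u) \coloneqq -\Laplace u + \gamma \projN \sin(\beta u)$, whose zeros are exactly the stationary solutions of~\eqref{eq:DSG truncated}. Using that $u_\ast$ is stationary for~\eqref{eq:DSG}, one computes
\[
G_N(\projN u_\ast) = \gamma \projN\!\left[\sin(\beta \projN u_\ast) - \sin(\beta u_\ast)\right],
\]
whose $L^2$ norm is at most $\gamma\beta \norm{\projNperp u_\ast}_{L^2} \lesssim N^{-p}$ by Lipschitz continuity of sine. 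Meanwhile $DG_N(\projN u_\ast)$ is a compact perturbation of $-\Laplace$ on $V_N$ that converges in operator norm to $\Hess F[u_\ast] = -\Laplace + \gamma\beta \cos(\beta u_\ast)$. Provided this Hessian is invertible on the ambient subspace, a standard contraction mapping argument produces a unique truncated stationary $u_N \in V_N$ with $\norm{u_N - \projN u_\ast}_{L^2} \lesssim N^{-p}$, and hence $\norm{u_N - u_\ast}_{L^2} \lesssim N^{-p}$.

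The main obstacle is the translation zero mode for the non-constant solutions in the $\gamma\beta > 1$ regime: the Hessian has $u_\ast'$ in its kernel, so the IFT cannot be applied on the full ambient space. I would quotient by translation using the explicit evenness symmetry of the $\cd$-solutions, applying the IFT on the subspace of even functions and recovering the rest of the translation orbit afterwards. For $\gamma\beta \notin \N$ the bifurcation analysis behind \Cref{fig:energies} confirms there are no further kernel elements, and for the constant solutions the Hessian is simply $-\Laplace \pm \gamma\beta$, which is invertible under the same assumption.

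For the no-spurious-solutions step, any truncated stationary solution $u_N$ satisfies a uniform $H^2$ bound: the confining term of~\eqref{eq:confining potential} bounds $\norm{u_N}_{H^1}$, and then $-\Laplace u_N = -\gamma \projN \sin(\beta u_N)$ together with $\abs{\sin} \leq 1$ yield a uniform $H^2$ bound. By compact embedding some subsequence converges in $L^\infty$ to a limit $\tilde u$, and passing to the limit in the equation shows $\tilde u$ is a full stationary solution. Thus any $u_N$ is within $o(1)$ of some translate of a full stationary solution, and by the local uniqueness of the existence step it must coincide with the solution produced there. The counts therefore match as claimed.
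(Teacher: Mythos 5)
Your proof is essentially correct and takes the same core route as the paper for the quantitative part: shadow $\projN u_\ast$ by a truncated critical point via a Newton/implicit-function iteration, using that the residual $\smallnorm{G_N(\projN u_\ast)}_{L^2}$ decays like $N^{-p}$ by smoothness of $u_\ast$, with invertibility of the linearization obtained after modding out the translation kernel. The paper invokes Kantorovich's theorem on Newton convergence, which is the same tool phrased slightly differently; your ``standard contraction mapping argument'' and the even-function restriction (rather than the paper's phrase ``projected to the non-degenerate modes'') are equivalent ways to kill the kernel direction $u_\ast'$, and the even-parity device is clean and works since the $\cd$-profile is even.

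Where you genuinely differ is the qualitative ``same number of solutions'' claim: the paper simply cites \cite[Proposition~4.9]{berglund_sharp_2013}, while you give a self-contained compactness argument (uniform $H^2$ bound from the truncated stationary equation plus the confining term, extraction of an $L^\infty$-convergent subsequence, passing to the limit in the equation, then local uniqueness near each full solution). This extra argument is sound and makes the lemma independent of the reference, at the cost of some words that the paper chose to save. Two small imprecisions worth tightening if you write this up: the phrase ``$DG_N(\projN u_\ast)$ \emph{converges in operator norm} to $\Hess F[u_\ast]$'' should be restricted to the bounded lower-order part $\gamma\beta\projN\cos(\beta\projN u_\ast)\projN \to \gamma\beta\cos(\beta u_\ast)$, since the Laplacian piece has operator norm $\sim N^2$ on $V_N$ and does not converge; and ``$\gamma\beta\notin\N$'' should read $\sqrt{\gamma\beta}\notin\Z_+$, matching the bifurcation thresholds of the pendulum orbits. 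Neither affects the correctness of the argument.
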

\begin{proof}
The qualitative statement is \cite[Proposition~4.9]{berglund_sharp_2013},
with passing to normal space in the case of degenerate saddles.
We therefore only derive the quantitative estimate
on critical points of the potential~\eqref{eq:confining potential}
in either $\ell^2(\Z)$ or $\R^{2N+1}$.

In the following, we indicate partial derivatives of $F$ by subscripts.
Let $x_0 = \projN u_\ast$ identified as an element of $\R^{2N+1}$,
and $y_0 = \projNperp u_\ast$.
Starting from $F_x(x_0, y_0) = 0 \in \R^{2N+1}$, our goal is to find low modes $x$
such that $F_x(x, 0) = 0$.

We use \cite[XVIII.\S1, Theorem~6]{kantorovic_functional_1982} that gives sufficient conditions
for convergence of Newton iteration.
These conditions are:
\begin{enumerate}
\item \emph{$F_{xx}(x_0, 0)^{-1}$ exists and is bounded.}
    By assumption
    \begin{equation}
    \Hess F(x_0, y_0) = -\Laplace + \gamma\beta \cos(\beta u_\ast)
    \end{equation}
    projected to the non-degenerate modes is invertible.
    Our operator of interest is
    \begin{equation}
        F_{xx}(x_0, 0) = -\projN\Laplace + \gamma\beta \projN \cos(\beta \projN u_\ast) \projN.
    \end{equation}
    Since $u_\ast$ is smooth, we have $\norm{(1-\projN) u_\ast}_{L^\infty} \leq C_p N^{-p-3}$
    for any $p > 0$.
    Lipschitz continuity of the cosine then implies
    \begin{equation}
        \norm{F_{xx}(x_0, 0) - \projN \Hess F(x_0, y_0) \projN}_{L^2 \to L^2} \lesssim N^{-p-3},
    \end{equation}
    and by choosing $N$ large enough, the eigenvalues of $F_{xx}$ are bounded away from $0$.
    The bound can be chosen uniformly in $N \geq N_0$.

\item \emph{$\norm{F_{xx}(x_0, 0)^{-1} F_x(x_0, 0)} \leq \eta$.}
    By Fréchet derivatives,
    \begin{equation}
        [F_x(x_0, 0)] v = \int_\Tor
            [-\Laplace u_\ast(x) + \gamma \sin(\beta \projN u_\ast(x))] \projN v(x) \dx
    \end{equation}
    for any $v \in L^2(\Tor)$.
    Again by Lipschitz continuity, the operator norm satisfies
    \begin{equation}
        \norm{\projN [F_x(x_0, 0) - \nabla F(x_0, y_0)]}
        \lesssim N^{-p-3}.
    \end{equation}
    By the uniform bound on $F_{xx}(x_0, 0)^{-1}$, we can then pick $\eta = C_p N^{-p-3}$.

\item \emph{$\norm{F_{xx}(x_0, 0)^{-1} F_{xx}(x, 0)} \leq K$ for all $\norm{x} \leq r$.}
    Finally for generic $x$ and $v \in L^2$ we have the trivial bound
    \begin{equation}
        \norm{F_{xx}(x, 0) v}_{L^2}
        \leq \norm{-\projN\Laplace v}_{L^2} + \gamma\beta \norm{\projN v}_{L^2}
        \lesssim N^2 \norm{v}_{L^2}.
    \end{equation}
    Hence we can pick $K = C_p N^2$ and e.g.\ $r = 1$.
\end{enumerate}
Now since $K\eta \leq 1/2$ for $N$ large enough,
\cite[XVIII.\S1, Theorem~6]{kantorovic_functional_1982}
implies that Newton iteration started from $x_0$
converges to a solution $x_N$ such that
\begin{equation}
\norm{x_0 - x_N}_{L^2} \leq \frac{1 - \sqrt{1 - 2K\eta}}{K},
\end{equation}
assuming that the right-hand side is less than $r$.
In our case this is true, so
\begin{equation}
\norm{x_0 - x_N}_{L^2} \leq \frac{2K\eta}{K} \lesssim N^{-p}.
\qedhere
\end{equation}
\end{proof}

\begin{proposition}[Sine-Gordon transition states]\label{thm:transition states}
The transition states, and hence essential gates, of the localized and truncated sine-Gordon model are
\begin{itemize}
    \item the constant functions $\{ -\pi/\beta, \pi/\beta \}$ when $\gamma\beta < 1$;
    \item the period-$2\pi$ stationary solutions when $\gamma\beta > 1$.
\end{itemize}
The Hessian matrix of $F$ at the transition state has no vanishing eigenvalues when $\gamma\beta < 1$,
and exactly one vanishing eigenvalue when $\gamma\beta > 1$.
\end{proposition}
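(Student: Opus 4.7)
The plan is to enumerate the critical points classified in Section 2.2 and compute the Morse index and nullity of the Hessian at each. By \Cref{thm:critical galerkin} and continuity of eigenvalues, it suffices to analyze the full equation on $L^2(\Tor)$; translation and reflection both commute with $\projN$, so the conclusion transfers to the truncated system for $N$ large. For the constant $u_\ast \equiv (2k+1)\pi/\beta$, the Hessian is diagonal in the Fourier basis with eigenvalues $n^2 - \gamma\beta$ for $|n| \leq N$. When $\gamma\beta < 1$ only the $n = 0$ mode is negative and the rest are strictly positive, so the constant is a transition state with zero nullity; when $\gamma\beta > 1$ the modes $n = 0, \pm 1$ are all negative, so the Morse index is at least three and the constant is excluded.

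The main work concerns the non-constant period-$2\pi/k$ stationary solutions, which exist whenever $\gamma\beta > k^2$. Translation invariance of the stationary equation yields $L u_\ast' = 0$ with $L = -\partial_x^2 + \gamma\beta\cos(\beta u_\ast)$. After a translation we may take $u_\ast$ even about $x = 0$; then $L$ commutes with the reflection $x \mapsto -x$, and the spectrum on the fundamental period $[0, 2\pi/k]$ decomposes into Neumann eigenvalues on the half-period $[0, \pi/k]$ (from even eigenfunctions) and Dirichlet eigenvalues (from odd). Since $u_\ast$ is strictly monotone on a half-period, $u_\ast'$ is sign-definite on $(0, \pi/k)$ and vanishes at the endpoints; combined with $Lu_\ast' = 0$ this identifies it as the Dirichlet ground state with eigenvalue $0$, so all Dirichlet eigenvalues are nonnegative. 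Neumann--Dirichlet interlacing then yields exactly one strictly negative Neumann eigenvalue with the rest strictly positive, so on the fundamental period the operator $L$ has exactly one negative and one zero eigenvalue, each simple.

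For $k = 1$ this is already the spectrum on $\Tor$, giving Morse index $1$ and nullity $1$ as claimed. For $k \geq 2$, a Floquet--Bloch decomposition splits $L^2(\Tor)$ into $k$ quasi-periodic sectors over the fundamental period, with each sector contributing one eigenvalue to the lowest Bloch band. This band is bounded above by the smallest antiperiodic eigenvalue of $L$ on $[0, 2\pi/k]$, which by periodic--antiperiodic interlacing is strictly less than the first excited periodic eigenvalue, and the latter equals $0$ by the previous paragraph. Hence the lowest Bloch band lies strictly below zero, so $L$ on $\Tor$ has at least $k \geq 2$ negative eigenvalues and these shorter-period solutions are not transition states. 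The confining term in~\eqref{eq:confining potential} restricts the enumeration to finitely many candidates.

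The main obstacle is the Morse-index computation at the non-constant stationary solutions. The even--odd decomposition into Neumann--Dirichlet half-period problems circumvents a direct Hill operator analysis and relies crucially on translation symmetry together with the monotonicity of the sine-Gordon profile across a half-period; once these inputs are in place, standard Sturm--Liouville interlacing yields the required spectrum, and Floquet--Bloch lifts it to the full torus to rule out the shorter-period branches.
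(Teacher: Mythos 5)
Your approach is genuinely different from the paper's. The paper argues by bifurcation and continuity of eigenvalues: at $\gamma\beta = 1$ the kink--antikink branch bifurcates off the constant solution with Morse index $1$ and nullity $1$, and the index cannot change along the branch because an eigenvalue crossing would create a second zero eigenvalue, ruled out by the geometry of the phase portrait (equivalently, strict monotonicity of the period function $P(m) = 4K(m)/\sqrt{\gamma\beta}$). You instead perform a direct Sturm--Liouville analysis at fixed $\gamma\beta > 1$: parity decomposition of the Hill operator on the half-period into Neumann and Dirichlet problems, identification of $u_\ast'$ as the Dirichlet ground state at eigenvalue zero, interlacing, and a Floquet--Bloch band argument to rule out the shorter-period branches. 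The latter gives a clean quantitative statement (at least $k$ strictly negative eigenvalues for a period-$2\pi/k$ solution) that the paper only asserts.

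However, there is a gap in the step ``Neumann--Dirichlet interlacing then yields exactly one strictly negative Neumann eigenvalue with the rest strictly positive.'' Interlacing on $[0,\pi]$ gives $\mu_0^N < \lambda_0^D = 0 \leq \mu_1^N$, but it does \emph{not} give $\mu_1^N > 0$: for instance $-\partial_x^2$ on an interval has $\mu_{n+1}^N = \lambda_n^D$ for every $n$. Ruling out $\mu_1^N = 0$ is exactly the nontrivial non-degeneracy input: $\mu_1^N = 0$ would mean the even solution $\partial_m u_\ast$ of $L\phi = 0$ satisfies the Neumann condition also at the half-period, and a short computation shows $(\partial_m u_\ast)'(\pi) = -\tfrac{1}{2}P'(m)\, u_\ast''(\pi)$, so this degeneracy occurs precisely when $P'(m) = 0$. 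You therefore need the strict monotonicity of the pendulum period (stated in the paper via ``$K$ is increasing'' and reflected in the paper's appeal to non-crossing orbits) to conclude $\mu_1^N > 0$ and hence that the zero eigenvalue is simple. The same gap would, a priori, allow a double zero eigenvalue and an incorrect nullity for the $k = 1$ case. (The Floquet--Bloch exclusion of $k \geq 2$ survives either way, since $\lambda_1^{\text{anti}} < \lambda_3^{\text{per}} = 0$ regardless.) You already invoke monotonicity of the profile $u_\ast$ across a half-period to pin down the Dirichlet ground state; you need to invoke monotonicity of the \emph{period} as a separate input to close the argument.

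The remaining steps are fine: the explicit eigenvalues $n^2 - \gamma\beta$ at the constant saddle, the reflection symmetry after centering $u_\ast$ at a critical point, the parity split into Neumann and Dirichlet half-period problems, the appeal to \Cref{thm:critical galerkin} to transfer the count to the truncated system for $N$ large, and the observation that $\projN$ commutes with translation and reflection.
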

\begin{proof}
In the case $\gamma\beta < 1$ there is only one unstable stationary solution,
and the eigenvalues of $\Hess F[\pi/\beta]$ are $k^2 - \gamma\beta$ for all $\abs k \leq N$.
At $\gamma\beta = 1$ we get two vanishing eigenvalues, and a bifurcation occurs.

The kink-antikink pair $u_1$ that appears from \eqref{eq:stationary solution}
must then be the transition state.
Since both $u_1(x)$ and $u_1(t+x)$ solve the equation,
$\Hess F[u_1]$ has a zero eigenvalue corresponding to eigenvector $u_1'(x)$.

Further bifurcations at $\pi/\beta$ cannot produce transition states
as there are several negative eigenvalues.
On the other hand the number of negative eigenvalues of $\Hess F[u_1]$ cannot change:
By continuity of the eigenvalues as a function of the matrix
(see \cite[Section~II.\S5]{kato_perturbation_1995}),
a sign change would be associated with additional zero eigenvalue.
That would imply two orbits in \Cref{fig:phase portrait} crossing, which is not possible.
\end{proof}

\section{Estimating transition times}\label{sec:transition times}

\subsection{Potential-theoretic approach}\label{sec:potential theory}

As mentioned in the introduction, the exponential asymptotics of transition times
were first studied by Fre\u{\i}dlin and Wentzell \cite{freidlin_random_1984}.
We present a brief summary of their method in \Cref{sec:ldp},
as we use it to bootstrap a better estimate in the $\gamma\beta > 1$ case.

The potential-theoretic approach of Bovier, Eckhoff, Gayrard, and Klein \cite{bovier_metastability_2004}
is based on computing the ratio of two quantities:
the $L^1$ integral of the equilibrium potential and the capacity.
Roughly put, the first quantity sees the geometry of the potential wells,
and the second quantity the geometry near the saddle.

\begin{definition}[Equilibrium potential]
Given two disjoint sets $X, Y \subset \R^d$, we define
\[
h_{X, Y}(z) = \Prob_z(\tau_X < \tau_Y),
\]
where $\tau_X$ and $\tau_Y$ are the first hitting times of sets $X$ and $Y$ respectively
under the diffusion started at $z \in \R^d$.
\end{definition}

\begin{definition}[Capacity]\label{def:capacity}
The capacity of two disjoint sets $X, Y \subset \R^d$ is defined in three equivalent ways
(see \cite[Lemma~2.4]{avelin_geometric_2023}):
\begin{align*}
\capa(X, Y)
&= \inf_{g} \epsilon \int_{\R^d} \abs{\nabla g(z)}^2 \exp(-F(z)/\epsilon) \dz\\
&= \epsilon \int_{\R^d} \abs{\nabla h_{X,Y}(z)}^2 \exp(-F(z)/\epsilon) \dz\\
&= \int_{\partial X} \exp(-F(z)/\epsilon) e_{X,Y}(\diff z). 
\end{align*}
On the first line the infimum is taken over $g \in H^1(\R^d)$ such that
$g|_X \geq 1$ and $g|_Y = 0$.
The measure $e_{X,Y}$ on $\partial X$ is the \emph{equilibrium measure}.
\end{definition}

Hitting times of reversible Markov diffusions can be expressed as elliptic PDE problems,
which in turn can be analyzed with potential theory.
A good summary of the derivation is given in \cite[Section~3.1]{berglund_sharp_2013}.
The hitting time $\tau_A$ of a set $A \subset \R^d$ satisfies the relation
\begin{equation}
\int_{A^\complement} h_{B,A}(z) \exp(-F(z)/\epsilon) \dz
= -\int_{\partial B} [\E^z \tau_A]  \exp(-F(z)/\epsilon) e_{B,A}(\dz).
\end{equation}
In the method of \cite{bovier_metastability_2004},
one then uses the Harnack inequality to estimate the right-hand side by
$[\E^x \tau_A] \capa(A, B)$, when $B$ is an $\epsilon$-ball around $x \in \R^d$.

The Harnack inequality depends on the dimension,
and hence cannot be used for uniform-in-$N$ bounds.
Berglund and Gentz \cite{berglund_sharp_2013} instead consider initial data from the measure
that comes from normalizing the right-hand side.
This leads to the following statement:

\begin{theorem}[Expected transition time]
\begin{equation}
\E_{\mu} \tau_A = \frac{\int_{A^\complement} h_{B,A}(z) \exp(-F(z)/\epsilon) \dz}{\capa(A,B)},
\end{equation}
where $\E_\mu$ is over the probability measure $\mu(\diff z) \propto \exp(-F(z)) e_{B,A}(\diff z)$
supported on $\partial B$.
\end{theorem}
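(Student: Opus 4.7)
The theorem is a direct normalization of the relation
\[
\int_{A^\complement} h_{B,A}(z)\,e^{-F(z)/\epsilon}\,\dz = -\int_{\partial B} [\E^z \tau_A]\,e^{-F(z)/\epsilon}\,e_{B,A}(\dz)
\]
displayed immediately before the statement, so the plan has two parts: first establish this identity using reversibility of the diffusion, then divide through by $\capa(A,B)$.

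For the identity the cleanest approach is the Green's-function representation. Let $G_A(x,y)$ be the Green's function of the generator $\mathcal L \coloneqq \epsilon\Laplace - \nabla F\cdot\nabla$ with absorption on $A$, so that $\E^z[\tau_A] = \int_{A^\complement} G_A(z,y)\,\dy$. The balayage identity $h_{B,A}(x) = \int_{\partial B} G_A(x,y)\,e_{B,A}(\dy)$ defines the equilibrium measure, while reversibility of the overdamped dynamics with respect to $d\pi \coloneqq e^{-F/\epsilon}\,\dz$ gives the symmetry $G_A(x,y)\,e^{-F(x)/\epsilon} = G_A(y,x)\,e^{-F(y)/\epsilon}$. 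Substituting the balayage formula into $M(A,B) \coloneqq \int_{A^\complement} h_{B,A}(z)\,d\pi$, exchanging the order of integration by Fubini, and applying the symmetry of $G_A$ yields
\[
M(A,B) = \int_{\partial B} e^{-F(y)/\epsilon}\,\E^y[\tau_A]\,e_{B,A}(\dy),
\]
which is the displayed identity (up to the sign convention for $e_{B,A}$). The same conclusion can alternatively be read off from Green's second identity applied to the pair $(h_{B,A}, w_A)$ with $w_A(z) \coloneqq \E^z[\tau_A]$, using $\mathcal L h_{B,A} = 0$ on $(A\cup B)^\complement$ and $\mathcal L w_A = -1$ on $A^\complement$.

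To conclude I normalize by $\capa(A,B)$. The Dirichlet-form characterization in \Cref{def:capacity} yields the symmetry $\capa(A,B) = \capa(B,A)$ (if $g$ is admissible for the former then $1-g$ is admissible for the latter with the same energy), and the third characterization then gives $\capa(A,B) = \int_{\partial B} e^{-F(z)/\epsilon}\,e_{B,A}(\dz)$. Dividing the displayed identity by $\capa(A,B)$ recognizes its right-hand side as the $\tau_A$-expectation under the probability measure $\mu(\dz) \propto e^{-F(z)/\epsilon}\,e_{B,A}(\dz)$ on $\partial B$, which is exactly the claim.

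The main point that warrants care is the sign and normalization bookkeeping on $\partial B$: one must check that the balayage definition of $e_{B,A}$, its normal-derivative representation coming from Green's identity, and the third characterization of the capacity are all simultaneously consistent and produce a positive measure. This is classical potential theory for reversible diffusions on $\R^d$, and in the Galerkin-truncated ambient space $\R^{2N+1}$ of \Cref{sec:transition times} the confining term in~\eqref{eq:confining potential} keeps all integrability and regularity questions elementary, so there is no serious obstacle.
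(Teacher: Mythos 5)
Your proposal is correct and reconstructs essentially the derivation the paper cites from Berglund--Gentz (Section~3.1): the Green's-function/balayage representation plus reversibility of the gradient diffusion yields the key identity, and dividing by $\capa(A,B)=\capa(B,A)=\int_{\partial B}e^{-F/\epsilon}\,e_{B,A}(\dz)$ produces the normalized statement. One small repair is needed in your symmetry step: if $g$ is admissible for $\capa(A,B)$ (so $g|_A\geq 1$, $g|_B=0$), then $1-g$ satisfies $(1-g)|_B=1$ but only $(1-g)|_A\leq 0$, not $=0$, so it is not directly admissible for $\capa(B,A)$; the standard fix is to truncate to $[0,1]$ (which cannot increase the Dirichlet energy) or to observe that the optimizers satisfy $h_{B,A}=1-h_{A,B}$ on $(A\cup B)^\complement$ with $\abs{\nabla h_{A,B}}=\abs{\nabla h_{B,A}}$. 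Your sign observation is also correct: since both sides of the intermediate identity are manifestly nonnegative, the sign-free version you derived is the consistent one, and the minus sign in the paper's displayed relation is a normal-derivative convention for $e_{B,A}$ that washes out in the normalized formula.
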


\subsection{Decomposition of the potential}

Let us then present the choice of hitting sets in the $\gamma\beta < 1$ case.
We make a different choice in the $\gamma\beta > 1$ case, presented in \Cref{sec:hard capacity},
but the underlying idea is the same.

Let us fix small $\kappa > 0$.
We take the initial data from the set
\begin{gather}
B_\epsilon \coloneqq \{ u \in L^2(\Tor) \colon |\hat u(0)| < \sqrt\epsilon \log(1/\epsilon),
    \projperp u \in B_\epsilon^\perp \}, \text{ where}\\
B_\epsilon^\perp \coloneqq \bigg\{
    u \in L^2(\Tor) \colon
    \hat u(0) = 0,\; \norm{u}_{\Holder^{1/2-\kappa}} < \sqrt{c_0 \epsilon \log(1/\epsilon)}
\bigg\}.
\end{gather}
That is, $u \in B_\epsilon$ belongs to a vanishing ball in the Besov--Hölder space $\Holder^{1/2-\kappa}(\Tor)$.
From here on we abbreviate the space as $\Holder^{1/2-}$.
We make the oscillatory part vanish slightly faster as $\epsilon \to 0$
in order to control the oscillations with the zero mode.
The constant $c_0 > 0$ will be chosen to be large enough to satisfy
all usages of \Cref{thm:well concentration} below.

The neighbouring potential minima are contained in the strips
\begin{equation}
\begin{gathered}
A_\epsilon^- \coloneqq \{ u \colon \hat u(0) < -2\pi/\beta + \sqrt\epsilon \log(1/\epsilon) \},\\
A_\epsilon^+ \coloneqq \{ u \colon \hat u(0) > 2\pi/\beta - \sqrt\epsilon \log(1/\epsilon) \}.
\end{gathered}
\end{equation}
We are therefore interested in the hitting time of
$A_\epsilon \coloneqq A_\epsilon^- \cup A_\epsilon^+$.

\bigskip\noindent
We compute the relevant quantities as Gaussian integrals.
To this end, we need to Taylor expand the potential at the vicinity of a critical point.

Let $u_\ast$ be a critical point of the potential.
We can use the sum-of-angles formula to decompose $F(u_\ast + v)$ as
\begin{equation}\label{eq:potential decomposition initial}
\begin{split}
&\int_\Tor \frac{\abs{\nabla u_\ast}^2 + 2 \nabla u_\ast \nabla v + \abs{\nabla v}^2}{2}\\
    &\hspace{3em}- \frac{\gamma}{\beta} \cos(\beta u_\ast) \cos(\beta v)
    + \frac{\gamma}{\beta} \sin(\beta u_\ast) \sin(\beta v) \dx.
\end{split}
\end{equation}
The Taylor expansion of $\cos(\beta v)$ yields the term
\begin{equation}
\int_\Tor \frac{\abs{\nabla v}^2}{2}
    - \frac{\gamma\beta}{2} \cos(\beta u_\ast) v(x)^2 \dx
= \frac 1 2 ([-\Laplace + \gamma\beta \cos(\beta u_\ast)] v, v),
\end{equation}
which will give a Gaussian covariance once we remove the negative eigenvalue.
By the stationary equation $\Laplace u_\ast = \gamma \sin(\beta u_\ast)$ we have
\begin{equation}
\int_\Tor \nabla u_\ast \nabla v + \gamma \sin(\beta u_\ast) v \dx = 0.
\end{equation}
Thus
\begin{equation}\label{eq:potential decomposition generic}
F(u_\ast + v)
= F(u_\ast) + \frac 1 2 (\cova v, v)
    + \int_\Tor R(v(x)) \dx,
\end{equation}
where $\cova = -\Laplace + \gamma\beta \cos(\beta u_\ast)$
and $R(z) = \bigO(z^3) + \bigO(z^4)$.
Since the operator has a basis of orthogonal eigenfunctions,
we can further split the inner product to the unstable and stable directions if necessary.

We can improve the error term at potential minima
or when the transition state is a constant function.
This is used in the following two subsections.

\begin{lemma}[Decomposition at constant function]\label{thm:easy potential decomposition}
When $u_\ast \equiv k\pi/\beta$ for some $k \in \Z$, we have
\[
F(u_\ast + v)
= F(u_\ast) + \frac 1 2 (\cova v, v)
    + \operatorname{sign}(\cos(\beta u_\ast)) \int_\Tor R(v(x)) \dx,
\]
where $0 \leq R(z) \lesssim z^4$.
\end{lemma}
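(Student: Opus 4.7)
The plan is to specialize the generic decomposition~\eqref{eq:potential decomposition initial} to a constant $u_\ast$ and read off the quartic remainder from the cosine expansion. Since $u_\ast \equiv k\pi/\beta$, we have $\nabla u_\ast \equiv 0$, $\sin(\beta u_\ast) = 0$, and $\cos(\beta u_\ast) = (-1)^k \eqqcolon s = \operatorname{sign}(\cos(\beta u_\ast))$. Substituting, the cross-gradient term and the $\sin \cdot \sin$ contribution both vanish, and the identity collapses to
\[
F(u_\ast + v) = \int_\Tor \tfrac{1}{2}|\nabla v|^2\,\dx - \tfrac{\gamma s}{\beta}\int_\Tor \cos(\beta v)\,\dx,
\qquad F(u_\ast) = -\tfrac{2\pi\gamma s}{\beta}.
\]

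Next I would Taylor expand the cosine. Setting $g(z) \coloneqq \cos z - 1 + z^2/2$, so that $\cos(\beta v) = 1 - (\beta v)^2/2 + g(\beta v)$, the constant rebuilds $F(u_\ast)$, the quadratic reassembles into $\tfrac{1}{2}(\cova v, v)$ with $\cova = -\Laplace + \gamma\beta s$ (in agreement with~\eqref{eq:potential decomposition generic}), and the surviving integral is
\[
F(u_\ast + v) - F(u_\ast) - \tfrac{1}{2}(\cova v, v) = -\tfrac{\gamma s}{\beta} \int_\Tor g(\beta v(x))\,\dx.
\]
Reading this as $s \int_\Tor R(v)\,\dx$ identifies the remainder $R$ with $\pm(\gamma/\beta)\, g(\beta v)$, with the sign dictated by the convention of the lemma.

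What is left is to verify $0 \leq R(z) \lesssim z^4$ up to that sign choice. Non-negativity of $g$ is a convexity argument: $g(0) = g'(0) = 0$ and $g''(z) = 1 - \cos z \geq 0$, so $g$ is convex with minimum zero at the origin, giving $g \geq 0$ on all of $\R$. The quartic bound follows from Taylor's theorem at order four, $g(z) = \cos^{(4)}(\xi)\, z^4/24$ for some $\xi$ between $0$ and $z$, hence $|g(z)| \leq z^4/24$ uniformly. These two properties transfer immediately to $R$. The only mild obstacle is the sign bookkeeping between $s$, the factor $-\gamma/\beta$ in the potential, and the convention used for $R$ in the statement; no analysis beyond elementary Taylor estimation is required.
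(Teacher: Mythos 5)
Your derivation is substantially correct and follows the same route the paper takes: specialize the sum-of-angles decomposition~\eqref{eq:potential decomposition initial} to constant $u_\ast$, Taylor-expand the cosine, and bound the quartic remainder. Your convexity argument (showing $g(z)=\cos z-1+z^2/2$ has $g(0)=g'(0)=0$ and $g''=1-\cos z\geq 0$) makes the nonnegativity of $R$ explicit, which the paper's one-line proof only asserts; that is a small improvement in exposition.

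However, you should not leave the sign as an unresolved ``mild obstacle'' or a matter of ``convention.'' Your own computation gives
\[
F(u_\ast + v) - F(u_\ast) - \tfrac12(\cova v, v)
= -\,s \int_\Tor R(v(x))\,\dx,
\qquad R = \tfrac{\gamma}{\beta}\, g(\beta\,\cdot\,) \geq 0,\quad s = \operatorname{sign}(\cos(\beta u_\ast)),
\]
so the prefactor on the remainder is $-\operatorname{sign}(\cos(\beta u_\ast))$, not $+\operatorname{sign}(\cos(\beta u_\ast))$ as printed in the lemma. Trying to rescue the printed sign by absorbing it into $R$, as your ``$\pm(\gamma/\beta)g$'' remark suggests, would force $R = -\tfrac{\gamma}{\beta}g \leq 0$, directly contradicting the stated $0 \leq R$. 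There is no consistent convention: the lemma as written has a sign typo. This is corroborated by the downstream uses. In \Cref{thm:well lower} (with $u_\ast = 0$, $s=+1$) the paper expands $-F/\epsilon = \cdots + \epsilon^{-1}\int R$, i.e.\ $F = \cdots - \int R$; in \Cref{thm:easy capacity upper} (with $u_\ast = \pi/\beta$, $s=-1$) the remainder's contribution to $-F/\epsilon$ is discarded as ``non-positive,'' i.e.\ $F = \cdots + \int R$. Both match $-s$, not $+s$. Your proof is right; state plainly that the printed sign should be reversed rather than hedging.
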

\begin{proof}
In this case
the $\nabla u_\ast$ and sine terms in \eqref{eq:potential decomposition initial} vanish,
and $\cos(\beta u_\ast) \equiv \pm 1$.
The Taylor expansion of $\cos(\beta v)$ is more precisely
\begin{equation}
\frac\gamma\beta \cos(\beta v)
= \frac\gamma\beta - \frac{\gamma\beta}{2} v^2 + R(v),
\text{ where } 0 \leq R(v) \leq \frac{\gamma\beta^3 v^4}{24}.
\qedhere
\end{equation}
\end{proof}

\subsection{Potential well}\label{sec:well}

We first show Gaussian concentration of measure on the set $B_\epsilon$.
The constant $c_0$ plays the same role as in \cite[Eq.~(6.26)]{berglund_sharp_2013}.
As the result is used in several lemmas, we choose $c_0$ large enough to satisfy all of them.

This lemma is classical.
Our proof is based on the variational formula introduced by
Boué and Dupuis \cite{boue_variational_1998} and Üstünel \cite{ustunel_variational_2014},
and applied to quantum field theory by Barashkov and Gubinelli \cite{barashkov_variational_2020}.

\begin{lemma}[Gaussian concentration on $B_\epsilon$]\label{thm:well concentration}
We define the $\epsilon^{-1/2}$-dilated set
\[
\Blog \coloneqq \{ u \in L^2(\Tor) \colon \epsilon^{1/2} u \in B_\epsilon \}.
\]
Let $\rho$ be a Gaussian measure of covariance $\projN (-\Laplace + m^2)^{-1}$ for some $m^2 > 0$,
or of covariance $-\!\projperp\!\projN\Laplace^{-1}$.
Then for any $p > 0$ there exist $c_0$ and $C > 0$, dependent on $p$ and the covariance, such that
\[
\int_{\Blog} \diff\rho \geq 1 - C \epsilon^p.
\]
\end{lemma}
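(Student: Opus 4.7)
The plan is to split the complement $\Blog^\complement$ into two events --- the zero mode exceeding $\log(1/\epsilon)$, and the oscillatory part exceeding $\sqrt{c_0 \log(1/\epsilon)}$ in $\Holder^{1/2-\kappa}$ --- and to bound each by a constant multiple of $\epsilon^p$ via a union bound. For the zero mode nothing is needed in the massless case, where $\projperp$ annihilates it. In the massive case $\hat u(0)$ is a real centered Gaussian with variance of order $1/m^2$, so a one-dimensional Gaussian tail estimate gives $\rho(|\hat u(0)| \geq \log(1/\epsilon)) \leq 2\exp(-\tfrac12 m^2 \log(1/\epsilon)^2)$, which beats any fixed power of $\epsilon$ as $\epsilon \to 0$.

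The Hölder piece is the substantive step. Both covariance operators smooth by two orders in the Sobolev scale, so by Littlewood--Paley theory and Gaussian hypercontractivity each dyadic block of $\projperp u$ has its $L^\infty$-norm concentrated at size $2^{-j(1/2-\kappa)}$; summing in $j$ yields $\E\norm{\projperp u}_{\Holder^{1/2-\kappa}} \leq K$ with $K$ independent of $N$ and of $m^2 \geq 0$. With the mean norm under control, I apply the Borell--TIS inequality to the centered Gaussian measure on the separable Banach space $\Holder^{1/2-\kappa}(\Tor)$, obtaining
\[
\rho\bigl(\norm{\projperp u}_{\Holder^{1/2-\kappa}} \geq K + t\bigr) \leq \exp(-c t^2)
\]
with a weak-variance constant $c>0$ uniform in $N$. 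Substituting $t = \tfrac12\sqrt{c_0 \log(1/\epsilon)}$ produces a bound of the form $\epsilon^{c c_0 / 4}$, so choosing $c_0 \geq 4p/c$ closes the estimate.

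The main obstacle is keeping every constant uniform in the Galerkin cutoff $N$. Morally this is automatic: the truncated Gaussian is the pushforward of the full free field under a projection that is bounded on $\Holder^{1/2-\kappa}$, so the Fernique constants transfer directly. Nevertheless, it is worth checking on the Fourier side that the variance of $\hat u(n)$ under $\rho$ is at most $(1+n^2)^{-1}$ uniformly in $N$ and $m^2 \geq 0$, which makes the weak variance appearing in Borell--TIS manifestly $N$-free. The Boué--Dupuis formulation advertised in the author's preface to the lemma provides an alternative route, expressing $\log \E \exp(\lambda \norm{\projperp u}^2_{\Holder^{1/2-\kappa}})$ as an infimum over Cameron--Martin drifts and yielding the same Fernique-type tail after Markov; for this particular concentration statement Borell--TIS is the most direct tool, but reusing the variational machinery keeps the later potential-theoretic estimates uniform in the same manner.
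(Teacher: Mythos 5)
Your proposal is correct, and it reaches the same conclusion by a genuinely different route. The paper's proof goes through the Boué--Dupuis variational formula (\cite[Theorem~2]{barashkov_variational_2020}): it bounds $\rho$ outside a single $\Holder^{1/2-}$-ball of radius $r = \sqrt{c_0\log(1/\epsilon)}$ by expressing $\log\int\exp(-\lambda\llbracket r^2 - \smallnorm{\cdot}^2\rrbracket_+)\,\diff\rho$ as an infimum over Cameron--Martin drifts, using the $H^1 \hookrightarrow \Holder^{1/2-}$ embedding and \cite[Lemma~3.7]{barashkov_eyringkramers_2024} to control the drift term. You instead split the complement into a zero-mode tail and an oscillatory-mode Hölder tail, and control the latter with Borell--TIS after bounding $\E\smallnorm{\projperp u}_{\Holder^{1/2-\kappa}}$ via Littlewood--Paley and hypercontractivity. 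Both routes deliver the crucial uniformity in $N$ (and in the mass), since in each case the relevant constants come from the covariance $(1+n^2)^{-1}$ bound which is projection-stable. Borell--TIS is the more elementary and self-contained choice for a Gaussian measure; the paper's author opts for the variational machinery for consistency with the adjacent non-Gaussian estimates that reuse the same scaffolding.

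Two small points to tighten. First, you apply Borell--TIS with $t = \tfrac12\sqrt{c_0\log(1/\epsilon)}$, which implicitly requires $K \leq \tfrac12\sqrt{c_0\log(1/\epsilon)}$; this holds for $\epsilon$ small enough, and the free constant $C$ in the statement absorbs the remaining compact range, but it's worth saying explicitly. Second, the paper bounds a single $\Holder^{1/2-}$-ball of radius $r$ rather than splitting, using that for small $\epsilon$ the zero-mode threshold $\log(1/\epsilon)$ dominates $r$; your split is equivalent but handles the two constraints at their natural scales, which is arguably cleaner.
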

\begin{proof}
By construction there is a $\Holder^{1/2-}$-ball of radius $r = \sqrt{c_0 \log(1/\epsilon)}$
contained in $\Blog$ once $\epsilon$ is small enough.
We can then estimate
\begin{equation}
\begin{split}
\int_{B_\epsilon} \diff\rho(u)
&\geq 1 - \int_{B(0,r)^\complement} \hspace{-1em} \diff\rho(u)\\
&\geq 1 - \int_{\R^{2N+1}} \hspace{-1em}
    \exp\!\left( -\lambda\pospart{r^2 - \norm{u}_{\Holder^{1/2-}}^2} \right) \diff\rho(u),
\end{split}
\end{equation}
where $\pospart{\,\cdot\,} \coloneqq \max(0, \cdot\,)$ and $\lambda > 0$ is fixed below.
To bound the last integral from above, we use \cite[Theorem~2]{barashkov_variational_2020}
and find a lower bound for
\begin{equation}
\begin{split}
&\mathrel{\phantom{\geq}} \inf_{v \in \mathbb H_a} \E_\phi \left[ 
    \lambda \pospart{r^2 - \norm{\phi_T + I_T[v]}_{\Holder^{1/2-}}^2} + \frac 1 2 \int_0^T \norm{v_t}_2^2 \dt
\right]\\
&\geq \inf_{v \in \mathbb H_a} \E_\phi \left[ 
    \lambda r^2 - 2\lambda \norm{\phi_T}_{\Holder^{1/2-}}^2
    - 2\lambda \norm{I_T[v]}_{\Holder^{1/2-}}^2 + \frac 1 2 \int_0^T \norm{v_t}_2^2 \dt
\right].
\end{split}
\end{equation}
Here $\phi$ is sampled from $\rho$,
and $\phi_T$ is its regularization as in \cite{barashkov_variational_2020}.
The other notations match those of \cite{barashkov_variational_2020,barashkov_eyringkramers_2024} as well.
It can be shown by hypercontractivity that $\E \norm{\phi_T}_{\Holder^{1/2-}}^2 \lesssim 1$
uniformly in both $T$ and $N$.

For the third term we use the embedding of $H^1$ into $\Holder^{1/2-}$.
By fixing $\lambda$ small enough (depending on $m^2$)
and applying \cite[Lemma~3.7]{barashkov_eyringkramers_2024},
its sum with the $t$-integral is then positive.
Hence for $r^2$ large enough we have
\begin{equation}
\begin{split}
\int_{\R^{2N+1}} \hspace{-1em} \exp\!\left( -\lambda\pospart{r^2 - \norm{v}_{H^{1/2-}}^2} \right) \diff\rho(v)
&\leq \exp(C\lambda - \lambda r^2)\\
&= \exp(C\lambda - \lambda c_0 \log(1/\epsilon)).
\end{split}
\end{equation}
Once we make $\lambda c_0 > p$,
the measure of $B(0,r)^\complement$ is at most $C \epsilon^p$.
\end{proof}

Then we can compute the integral of the equilibrium potential.
The concentration result transfers to the sine-Gordon measure,
so the integral is dominated by the behaviour near the potential minimum.
The dominant error term comes from \Cref{thm:easy potential decomposition}.

\begin{proposition}[Potential well, lower bound]\label{thm:well lower}
We have
\[
\int_{A_\epsilon^\complement} h_{B_\epsilon,A_\epsilon}(u) e^{-F(u)/\epsilon} \du
\geq \sqrt{\frac{\epsilon}{\gamma\beta} \prod_{0 < \abs n \leq N} \frac{2\pi \epsilon}{n^2 + \gamma\beta}}
    e^{2\pi\gamma/(\epsilon\beta)}
    (1 - C \epsilon).
\]
The constant $C$ is independent of $N$.
\end{proposition}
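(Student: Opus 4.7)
The overall strategy is to get a lower bound by restricting the integration to the initial set $B_\epsilon$ itself, where the equilibrium potential is identically one, and then evaluating the remaining integral as a Gaussian integral around the potential minimum $u_\ast \equiv 0$. Since the zero-mode conditions defining $B_\epsilon$ and $A_\epsilon$ are disjoint for small $\epsilon$, we have $B_\epsilon \subset A_\epsilon^\complement$; moreover, for any $u \in B_\epsilon$ we have $\tau_{B_\epsilon} = 0 < \tau_{A_\epsilon}$, hence $h_{B_\epsilon, A_\epsilon}(u) = 1$. This yields the crude lower bound
\[
\int_{A_\epsilon^\complement} h_{B_\epsilon, A_\epsilon}(u) e^{-F(u)/\epsilon} \du \geq \int_{B_\epsilon} e^{-F(u)/\epsilon} \du,
\]
and the rest of the proof is a Laplace-type estimate of the right-hand side.

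Next I would apply \Cref{thm:easy potential decomposition} at $u_\ast = 0$, where $F(0) = -2\pi\gamma/\beta$ and $\cos(0) = 1$, obtaining
\[
F(v) = -\frac{2\pi\gamma}{\beta} + \tfrac{1}{2}(\cova v, v) + \int_\Tor R(v(x)) \dx,
\qquad \cova = -\Laplace + \gamma\beta,
\]
with $0 \leq R(z) \lesssim z^4$. On $B_\epsilon$, the Besov embedding $\Holder^{1/2-\kappa} \hookrightarrow L^\infty$ together with the defining constraints gives $\norm{v}_\infty \lesssim \sqrt{\epsilon \log(1/\epsilon)}$, hence $\int R(v) \dx \lesssim \epsilon^2 \log^2(1/\epsilon)$ and therefore $e^{-\int R/\epsilon} \geq 1 - C\epsilon$, up to logarithmic factors consistent with the statement.

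I would then rescale $v = \sqrt\epsilon w$, mapping $B_\epsilon$ to the set $\Blog$ of \Cref{thm:well concentration} and producing a Jacobian $\epsilon^{(2N+1)/2}$. In the sine-cosine basis,
\[
(\cova w, w) = 2\pi\gamma\beta\, \hat w(0)^2 + \sum_{0 < \abs{n} \leq N} (n^2 + \gamma\beta)\, \hat w(n)^2,
\]
so the full-space Gaussian integral equals $\sqrt{(\gamma\beta)^{-1} \prod_{0 < \abs n \leq N} 2\pi/(n^2+\gamma\beta)}$. \Cref{thm:well concentration} applied with $m^2 = \gamma\beta$ states that the corresponding normalized Gaussian places mass at least $1 - C\epsilon^p$ on $\Blog$, so restricting to $\Blog$ costs only this subleading factor. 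Multiplying by the Jacobian distributes one power of $\epsilon$ to the isolated $1/(\gamma\beta)$ and one power of $\epsilon$ to each of the $2N$ factors in the product, reproducing the stated prefactor.

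The main technical obstacle is a careful matching of the set $B_\epsilon$ with the concentration lemma: \Cref{thm:well concentration} is stated for a $\Holder^{1/2-\kappa}$-ball inside $\Blog$, whereas $B_\epsilon$ separates the zero-mode constraint $\abs{\hat u(0)} < \sqrt\epsilon \log(1/\epsilon)$ from the oscillatory constraint on $\norm{v_\perp}_{\Holder^{1/2-\kappa}}$. Since $m^2 = \gamma\beta > 0$ makes the zero mode a finite-variance Gaussian and $\abs{\hat u(0)}$ contributes directly to $\norm{u}_{\Holder^{1/2-\kappa}}$, the $\Holder^{1/2-\kappa}$-ball of radius $\sqrt{c_0 \log(1/\epsilon)}$ inherent in $\Blog$ can be chosen to respect both constraints at the cost of enlarging $c_0$ once. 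Uniformity in $N$ is then automatic from the fixed $\kappa > 0$ and the $N$-independent constants provided by the concentration lemma.
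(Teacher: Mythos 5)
Your proposal takes essentially the same route as the paper: restrict the integral to $B_\epsilon$ (where $h_{B_\epsilon,A_\epsilon}\equiv 1$), expand $F$ around the minimum via the decomposition lemma, rescale $v = \sqrt\epsilon w$, and invoke the Gaussian concentration on $\Blog$. There is one difference in how the remainder term is treated, and it is worth flagging because it produces the weaker error $(1 - C\epsilon\log^2(1/\epsilon))$ that you yourself note does not quite match the stated $(1-C\epsilon)$.

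The issue is the sign of $R$. If you expand directly from the definition of $F$ at $u_\ast=0$, using $\frac{\gamma}{\beta}\cos(\beta v) = \frac{\gamma}{\beta} - \frac{\gamma\beta}{2}v^2 + R(v)$ with $R\geq 0$, you find
\begin{equation*}
F(v) = \int_\Tor \tfrac{1}{2}\abs{\nabla v}^2 - \tfrac{\gamma}{\beta}\cos(\beta v)\dx
= F(0) + \tfrac{1}{2}(\cova v, v) - \int_\Tor R(v)\dx,
\end{equation*}
i.e.\ the remainder enters with a \emph{minus} sign (the sign in \Cref{thm:easy potential decomposition} is the opposite of what the lemma's own proof and the usage inside \Cref{thm:well lower} give; it appears to be a typo in the statement). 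Consequently $-F(v)/\epsilon$ contains $+\frac{1}{\epsilon}\int R \geq 0$, so $e^{-F(v)/\epsilon}$ is \emph{at least} the pure Gaussian density, and you can drop the $R$ term for free when bounding from below. You then only pay the $O(\epsilon^p)$ loss from \Cref{thm:well concentration}, giving exactly $(1-C\epsilon)$ with no logarithmic factors. Taking the lemma's printed sign at face value forces you to bound $e^{-\int R/\epsilon}$ from below, which costs the extra $\log^2(1/\epsilon)$ you observed; with the corrected sign your argument closes cleanly to the stated rate.

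The rest of your write-up — the $L^\infty$ control on $B_\epsilon$ via the Besov embedding, the Parseval computation of $(\cova w, w)$ in the sine-cosine basis, the Jacobian bookkeeping under $v = \sqrt\epsilon w$, and the uniformity in $N$ from the $N$-independent constants in the concentration lemma — matches the paper's proof and is correct.
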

\begin{proof}
Let us restrict the integral to the set $B_\epsilon$ only.
In this set $h_{B,A} \equiv 1$ by definition.
By \Cref{thm:easy potential decomposition} the potential can be expanded as
\begin{equation}
-\frac{F(u)}{\epsilon}
= -\frac{2\pi \gamma\beta \hat u(0)^2}{2\epsilon}
    - \frac{1}{2\epsilon} \sum_{n \neq 0} (n^2 + \gamma\beta) \hat u(n)^2
    + \frac{2\pi\gamma}{\epsilon\beta}
    + \frac 1 \epsilon \int_{\Tor} R(u(x)) \dx.
\end{equation}
Since we are computing a lower bound,
we can ignore the non-negative $R(u)$ term.
After rescaling $u = \sqrt\epsilon v$, the integral is therefore bounded from below by
\begin{equation}\label{eq:easy well evaluated}
e^{2\pi\gamma/(\epsilon\beta)}
\sqrt{\frac{\epsilon}{\gamma\beta} \prod_{n \neq 0} \frac{2\pi \epsilon}{n^2 + \gamma\beta}}
\int_{\Blog} \diff\rho(v),
\end{equation}
where $\rho$ is a Gaussian measure with covariance $(-\Laplace + \gamma\beta)^{-1}$,
restricted to wavenumbers at most $N$,
and $\Blog$ is the $\epsilon^{-1/2}$-dilatation defined in \Cref{thm:well concentration}.
The Gaussian concentration of measure finishes the proof.
\end{proof}

\begin{figure}
\centering
\includegraphics[width=\textwidth]{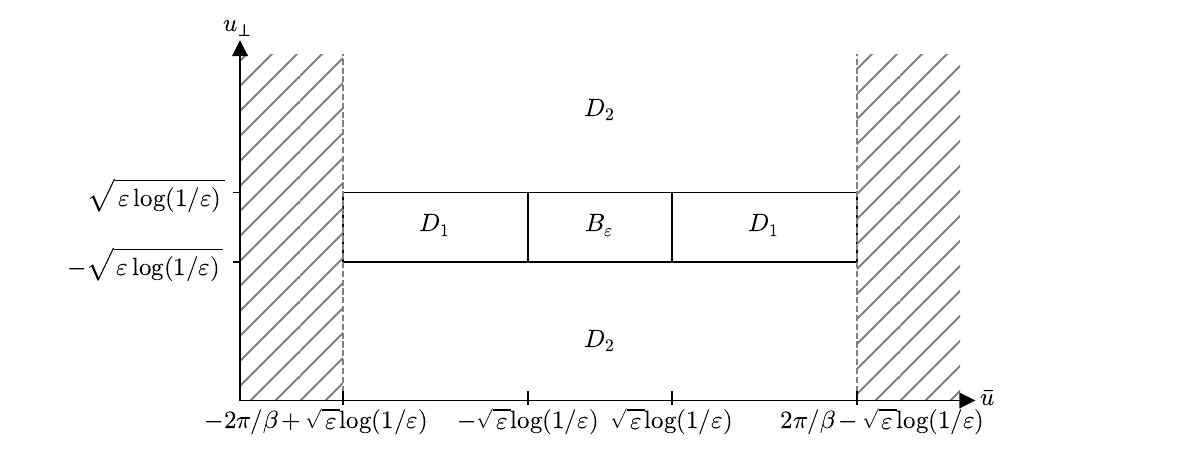}
\caption{A schematic of the phase space decomposition in \Cref{thm:well upper}.
The horizontal axis represents the mean and the vertical axis the oscillatory modes.
In $B_\epsilon$ they vanish at different rates as $\epsilon \to 0$.
The shaded region is $A_\epsilon$.}
\label{fig:well upper schematic}
\end{figure}

\begin{proposition}[Potential well, upper bound]\label{thm:well upper}
Similarly to the above,
\[
\int_{A_\epsilon^\complement} h_{B_\epsilon,A_\epsilon}(u) e^{-F(u)/\epsilon} \du
\leq \sqrt{\frac{\epsilon}{\gamma\beta} \prod_{0 < \abs n \leq N} \frac{2\pi\epsilon}{n^2 + \gamma\beta}}
    e^{2\pi\gamma/(\epsilon\beta)}
    (1 + C \epsilon \log(1/\epsilon)^4).
\]
\end{proposition}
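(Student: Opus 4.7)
The plan is to refine the argument of \Cref{thm:well lower} by retaining the two sources of error that were there discarded: the non-Gaussian remainder $R$ and the mass outside $B_\epsilon$. Since $h_{B_\epsilon, A_\epsilon}(u) \leq 1$ everywhere, it suffices to bound $\int_{A_\epsilon^\complement} e^{-F(u)/\epsilon} \du$ from above. I would split this integral into contributions from $B_\epsilon$ and from $A_\epsilon^\complement \setminus B_\epsilon$, corresponding to the central and outer regions in \Cref{fig:well upper schematic}.

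On $B_\epsilon$, \Cref{thm:easy potential decomposition} expands $F(u) = F(0) + \frac{1}{2}(\cova u, u) + \int_\Tor R(u(x))\dx$ with $0 \leq R(z) \lesssim z^4$. The defining bounds of $B_\epsilon$, combined with the Besov embedding $\Holder^{1/2-\kappa} \hookrightarrow L^\infty$, give $\norm{u}_{L^\infty} \lesssim \sqrt\epsilon \log(1/\epsilon)$, so $\epsilon^{-1}\int_\Tor R(u)\dx \lesssim \epsilon \log(1/\epsilon)^4$ uniformly on $B_\epsilon$. Using $e^x \leq 1 + 2x$ for small positive $x$ yields
\[
e^{-F(u)/\epsilon} \leq e^{-F(0)/\epsilon} e^{-(\cova u, u)/(2\epsilon)} \bigl(1 + C \epsilon \log(1/\epsilon)^4\bigr).
\]
Rescaling $u = \sqrt\epsilon v$ and extending the Gaussian integral from $\Blog$ to all of $\R^{2N+1}$ (legitimate for an upper bound) produces exactly the main prefactor claimed in the proposition.

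On $A_\epsilon^\complement \setminus B_\epsilon$ I need a bound that is super-polynomially small relative to the main term. The region splits into two pieces: $(i)$ $\norm{u_\perp}_{\Holder^{1/2-\kappa}} \geq \sqrt{c_0 \epsilon \log(1/\epsilon)}$ with small zero mode, which is handled by \Cref{thm:well concentration} once $c_0$ is chosen large enough (this fixes $c_0$ together with its other usages); and $(ii)$ the middle band $|\hat u(0)| \in [\sqrt\epsilon \log(1/\epsilon),\, 2\pi/\beta - \sqrt\epsilon \log(1/\epsilon)]$, where I would integrate out $u_\perp$ at fixed $\bar u = \hat u(0)$. Expanding $\cos(\beta(\bar u + u_\perp))$ by the sum-of-angles formula and using $\int u_\perp \dx = 0$ leaves a Gaussian in $u_\perp$ with $\bar u$-dependent covariance $(-\Laplace + \gamma\beta \cos(\beta \bar u))^{-1}$, positive definite on oscillatory modes since $\gamma\beta < 1$, multiplied by the scalar factor $\exp((2\pi\gamma/(\epsilon\beta))\cos(\beta \bar u))$. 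On the middle band $1 - \cos(\beta \bar u) \gtrsim \epsilon \log(1/\epsilon)^2$, giving the required suppression $\exp(-c\log(1/\epsilon)^2)$.

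The main obstacle is making this middle-band argument rigorous for \emph{large} $u_\perp$: the Taylor expansion underlying the Gaussian integration in $u_\perp$ is only controlled for $u_\perp$ in a small $\Holder^{1/2-\kappa}$-ball. Addressing this requires a further splitting in $\norm{u_\perp}_{\Holder^{1/2-\kappa}}$ and Gaussian concentration for the complement, analogous to $(i)$ but relative to the shifted reference measure at $\bar u$. Collecting the three contributions then yields the stated error factor $1 + C\epsilon\log(1/\epsilon)^4$.
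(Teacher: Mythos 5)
Your plan on $B_\epsilon$ is exactly the paper's: Taylor expansion, $L^\infty$-bound from $\Holder^{1/2-\kappa} \hookrightarrow L^\infty$ giving $\epsilon^{-1}\int R(u) \lesssim \epsilon\log(1/\epsilon)^4$, and extension of the Gaussian integral from $\Blog$ to all of $\R^{2N+1}$. Good.

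The middle-band obstacle you identify is real, but your proposed fix (Gaussian concentration relative to a $\bar u$-shifted covariance) overcomplicates it, and as written it remains an acknowledged gap. The paper chooses a cleaner decomposition: $D_1 = \{\rho < |\hat u(0)| < 2\pi/\beta - \rho,\; u_\perp \in B_\epsilon^\perp\}$ and $D_2 = \{|\hat u(0)| < 2\pi/\beta - \rho,\; u_\perp \notin B_\epsilon^\perp\}$, so that \emph{all} of the large-$u_\perp$ mass is lumped into $D_2$ regardless of $\bar u$. On $D_1$, the constraint $u_\perp \in B_\epsilon^\perp$ gives $\norm{u_\perp}_\infty \leq \rho/2$, and then a \emph{pointwise} bound $\cos(\beta u(x)) \leq 1 - C\log(1/\epsilon)^2\epsilon$ (using $|u(x)| \geq |\bar u| - \norm{u_\perp}_\infty \geq \rho/2$) yields a superpolynomial suppression $e^{-C\log(1/\epsilon)^2}$ with no need to track how $\cos(\beta\bar u)$ deforms the oscillatory covariance — one simply uses the $|\nabla u|^2$ term alone as the Gaussian and compares $\prod 2\pi\epsilon/n^2$ to the target product via \Cref{sec:prefactor}. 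On $D_2$, the crude bound $\cos(\beta u) \leq 1$ is enough: the oscillatory integral reduces to $\int_{(B_\epsilon^\perp)^\complement}$ against the pure-Laplacian Gaussian (second case of \Cref{thm:well concentration}), which is $\leq C\epsilon^p$ for arbitrary $p$, with no shifted measure required. So the sub-splitting and shifted concentration you defer are not needed; the step you left open is the one the paper organizes the decomposition to avoid. If you insist on your decomposition, the large-$u_\perp$ part of the middle band is handled identically to your piece (i) once you drop the cosine to its trivial bound $\cos \leq 1$ — the shifted covariance buys nothing.
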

\begin{proof}
To find an upper bound, we split the set $A_\epsilon^\complement$ into three parts
(\Cref{fig:well upper schematic}):
the bottom of the well $B_\epsilon$, the set of mildly oscillating functions
\begin{equation}
D_1 \coloneqq \left\{ u \in L^2(\Tor) \colon \rho < |\hat u(0)| < 2\pi/\beta - \rho,
    u_\perp \in B_\epsilon^\perp \right\},
\end{equation}
where $\rho \coloneqq \sqrt\epsilon \log(1/\epsilon)$,
and the remaining set
\begin{equation}
D_2 \coloneqq \left\{ u \in L^2(\Tor) \colon |\hat u(0)| < 2\pi/\beta - \rho,
    u_\perp \in (B_\epsilon^\perp)^\complement \right\}.
\end{equation}
In all of these sets we can use the trivial bound $h_{B,A}(u) \leq 1$.

\bigskip\noindent
In the set $B_\epsilon$, we can use the Gaussian expansion from \Cref{thm:easy potential decomposition}.
The dilated set $\Blog$ is contained in an $L^\infty$-ball of radius $R \simeq \log(1/\epsilon)$.
As we put $u = \sqrt\epsilon v$ into the remainder term, we get
\begin{equation}
\exp\!\left(\frac 1 \epsilon \int_\Tor R(u(x)) \dx \right)
\leq \exp\!\left(C \epsilon \int_\Tor v(x)^4 \dx \right)
\leq 1 + C \epsilon \log(1/\epsilon)^4.
\end{equation}
Together with the trivial bound for the Gaussian integral this implies
\begin{equation}\label{eq:well concentration upper}
\int_{B_\epsilon} e^{-F(u)/\epsilon} \du
\leq e^{2\pi\gamma/(\epsilon\beta)}
    \sqrt{\frac{\epsilon}{\gamma\beta} \prod_{n \neq 0} \frac{2\pi\epsilon}{n^2 + \gamma\beta}}
    (1 + C\epsilon \log(1/\epsilon)^4).
\end{equation}

\bigskip\noindent
In the set $D_1$,
we have $\norm{u_\perp}_\infty \leq \rho/2$ once $\epsilon$ is small enough.
This is because we defined $\bar u$ and $u_\perp$ to decay at different rates.
As $\cos(\beta u)$ takes its maximum at $u = 2k\pi/\beta$, we can do a rough estimate like
\begin{equation}
\frac{\gamma}{\epsilon\beta} \cos(\beta u(x))
\leq \frac{\gamma}{\epsilon\beta} - \frac{C u(x)^2}{\epsilon}
\leq \frac{\gamma}{\epsilon\beta} - C \log(1/\epsilon)^2
\end{equation}
in the potential.
The $\abs{\nabla u}^2$ term yields a Gaussian measure over the oscillatory modes,
and the mean is taken over a bounded set, so
\begin{equation}
\begin{split}
\int_{D_1} e^{-F(u)/\epsilon} \du
&\leq e^{\gamma/(\epsilon\beta) - C \log(1/\epsilon)^2} 2(2\pi/\beta - 2\delta)
    \sqrt{\prod_{0 < \abs n \leq N} \frac{2\pi\epsilon}{n^2}} \int_{\R^{2N}} \hspace{-0.5em} \diff\rho\\
&\lesssim \epsilon \cdot e^{\gamma/(\epsilon\beta)}
    \sqrt{\prod_{\abs n \leq N} \frac{2\pi\epsilon}{n^2 + \gamma\beta}}.
\end{split}
\end{equation}
The products on the last two lines are comparable by \Cref{sec:prefactor}.

\bigskip\noindent
In the set $D_2$ we have no better bound than $\gamma/(\epsilon\beta)$ for the cosine term,
but the integral over $(B_\epsilon^\perp)^\complement$ is negligible.
The second case of \Cref{thm:well concentration} applies,
so also this term is of order $\epsilon e^{\gamma/(\epsilon\beta)}$.
\end{proof}

\subsection{Constant saddle (\texorpdfstring{$\gamma\beta < 1$}{γβ < 1})}\label{sec:easy capacity}

In this case the Hessian matrix of $F(\pm \pi/\beta)$ is diagonal with respect to the ambient Fourier basis,
so all eigenvalues are explicit.
The computations are direct adaptations of those in \cite{berglund_sharp_2013}
or \cite[Sections~4.1 and~5.3]{berglund_eyring_2017}.

\begin{proposition}[Capacity, upper bound]\label{thm:easy capacity upper}
For $\epsilon$ small enough, there is the uniform-in-$N$ estimate
\[
\capa(A_\epsilon, B_\epsilon) \leq 2 \sqrt{\epsilon\gamma\beta
        \prod_{0 < \abs n \leq N} \frac{2\pi \epsilon}{n^2 - \gamma\beta}}
    e^{-2\pi\gamma/(\epsilon\beta)}
    (1 + C\epsilon).
\]
\end{proposition}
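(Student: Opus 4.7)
The plan is to apply the variational (Dirichlet) characterization of capacity from \Cref{def:capacity} and construct a near-optimal test function concentrated in tubular neighbourhoods of the two transition states $u_\ast \in \{ -\pi/\beta,\, +\pi/\beta \}$. By \Cref{thm:transition states} the Hessian at each saddle is diagonal in the sine--cosine Fourier basis, with the single negative eigenvalue $-\gamma\beta$ lying in the zero-mode direction. The test function will depend only on the unstable coordinate $\hat v(0) = \hat u(0) - u_\ast$ inside each neighbourhood, so that the capacity integral factorises into a one-dimensional integral along the unstable direction and a Gaussian integral in the stable modes.

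Concretely, let $\delta = \sqrt{c_0\, \epsilon\, \log(1/\epsilon)}$ and define
\[
N_\pm \coloneqq \left\{ u \in L^2(\Tor) \colon \abs{\hat u(0) \mp \pi/\beta} \leq \delta,\ u_\perp \in B_\epsilon^\perp \right\}.
\]
On $N_+$ set $g(u) = f(\hat u(0) - \pi/\beta)$ with the one-dimensional profile
\[
f(s) = \frac{1}{Z} \int_{-\delta}^{s} \exp\!\left( -\frac{\pi\gamma\beta t^2}{\epsilon} \right) \dt,
\quad Z = \int_{-\delta}^{\delta} \exp\!\left( -\frac{\pi\gamma\beta t^2}{\epsilon} \right) \dt,
\]
so $f(-\delta) = 0$ and $f(\delta) = 1$; this is the minimiser of the one-dimensional Dirichlet problem attached to the unstable quadratic direction. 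Define $g$ analogously on $N_-$ (reflected so it vanishes on the side facing $B_\epsilon$), and extend $g$ by the locally constant values $1$ on $A_\epsilon$ and $0$ on $B_\epsilon$, joined by smooth cutoffs supported in the energetically suppressed region between.

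For the energy computation, apply \Cref{thm:easy potential decomposition} at $u_\ast = \pm \pi/\beta$ to write
\[
F(u_\ast + v) - F(u_\ast)
= -\pi\gamma\beta\,\hat v(0)^2
  + \frac{1}{2} \sum_{0 < \abs n \leq N} (n^2 - \gamma\beta)\, \hat v(n)^2
  - \int_\Tor R(v(x)) \dx,
\]
with $F(u_\ast) = 2\pi\gamma/\beta$ and $0 \leq R(z) \lesssim z^4$. Since $\abs{\nabla g}^2 = f'(\hat v(0))^2$ on $N_\pm$, the capacity integral factorises. Our choice of $f$ makes the unstable piece equal $\int_{-\delta}^{\delta} f'(s)^2 e^{\pi\gamma\beta s^2/\epsilon} \diff s = 1/Z$, and tail bounds give $1/Z = \sqrt{\gamma\beta/\epsilon}\,(1 + \bigO(\epsilon^p))$. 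The stable Gaussian integral evaluates to $\prod_{0 < \abs n \leq N} \sqrt{2\pi\epsilon/(n^2 - \gamma\beta)}$, up to $(1 - \bigO(\epsilon^p))$ from \Cref{thm:well concentration}. Multiplying through, inserting $e^{-F(u_\ast)/\epsilon} = e^{-2\pi\gamma/(\epsilon\beta)}$, and summing the identical contribution from the second saddle produces the claimed expression with its prefactor of $2$.

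The main obstacle is to confine the quartic remainder $\int R(v)$ to an $\bigO(\epsilon)$ correction. On the effective support the unstable Gaussian $f'(s)^2$ forces $s = \bigO(\sqrt{\epsilon})$, while $v_\perp$ is Gaussian-typical in $B_\epsilon^\perp$, so $\norm{v}_{L^\infty} \lesssim \sqrt{\epsilon\log(1/\epsilon)}$ and $\int R(v)/\epsilon$ stays small; after exponentiating and choosing $c_0$ large, this contributes a $(1 + \bigO(\epsilon))$ multiplier, up to logarithmic factors that can be absorbed into the constant. Secondary points are that configurations with $u_\perp \notin B_\epsilon^\perp$ are negligible by \Cref{thm:well concentration}, and the smooth cutoffs gluing $g$ across $\partial N_\pm$ into the globally constant extensions contribute only super-polynomially small energy because $e^{-F/\epsilon}$ is exponentially suppressed away from the saddles.
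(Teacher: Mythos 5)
Your construction and factorization match the paper's (Gaussian profile in the zero mode, decomposition via \Cref{thm:easy potential decomposition}, stable Gaussian integral, concentration from \Cref{thm:well concentration}); the only cosmetic difference is that the paper splits $\capa(A_\epsilon, B_\epsilon) \leq \capa(A_\epsilon^-, B_\epsilon) + \capa(A_\epsilon^+, B_\epsilon)$ by subadditivity rather than building a single glued test function for both saddles. However, there is a genuine gap in your handling of the quartic remainder, and it matters.

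You write $F(u_\ast + v) - F(u_\ast) = \frac12(\cova v, v) - \int R(v)$ at the saddle $u_\ast = \pi/\beta$ and then spend the crux of the argument bounding $e^{+\int R/\epsilon}$ from above. But the sign is actually favourable. A direct Taylor expansion gives
\[
F(\pi/\beta + v) = \int_\Tor \tfrac12 |\nabla v|^2 + \tfrac\gamma\beta \cos(\beta v) \dx
= F(\pi/\beta) + \tfrac12 (\cova v, v) + \int_\Tor R(v) \dx,
\]
with $R(z) = \tfrac\gamma\beta[\cos(\beta z) - 1 + \tfrac12(\beta z)^2] \geq 0$. (This is the sign the paper's proof in fact uses when it says the remainder can be ignored; the sign in the \emph{statement} of \Cref{thm:easy potential decomposition} is off by one.) With the correct sign, $e^{-F/\epsilon} \leq e^{-F(u_\ast)/\epsilon} e^{-(\cova v, v)/2\epsilon}$, the remainder is simply discarded, and the only error is the Gaussian tail from truncating the zero-mode integral to $[-\rho, \rho]$ — which is far better than $C\epsilon$.

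Your alternative (bounding $\exp(\int R/\epsilon) \leq 1 + C\epsilon\log(1/\epsilon)^2$ via $\norm{v}_{L^\infty}^4/\epsilon$) is not strong enough for the stated $1 + C\epsilon$, and the claim that the logarithmic factor ``can be absorbed into the constant'' is not correct — $\log(1/\epsilon)^2 \to \infty$ as $\epsilon \to 0$, so $\epsilon\log(1/\epsilon)^2$ is genuinely larger than $C\epsilon$ for any constant. (It would still be dominated by the $\epsilon\log(1/\epsilon)^4$ error coming from \Cref{thm:well upper} in the final Theorem, so the headline result survives, but Proposition 3.7 as stated would not be proved.) Fix the sign and the remainder handling collapses to a one-liner.
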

\begin{proof}
Since $F(u)$ is symmetric with respect to $\hat u(0)$ and
\begin{equation}
\capa(A_\epsilon, B_\epsilon) \leq \capa(A_\epsilon^-, B_\epsilon) + \capa(A_\epsilon^+, B_\epsilon),
\end{equation}
it is sufficient to bound $2 \capa(A_\epsilon^+, B_\epsilon)$.
For an upper bound we need to choose a suitable test function.
Let us again abbreviate $\rho \coloneqq \sqrt{\epsilon} \log(1/\epsilon)$ and put
\begin{equation}
g(v) = \begin{cases}
0, &\quad \hat v(0) < \pi/\beta - \rho,\\
\displaystyle \frac{\int_{-\rho}^{\hat v(0) - \pi/\beta} e^{-2\pi \gamma\beta t^2/2\epsilon} \dt}
    {\int_{-\rho}^{\rho} e^{-2\pi \gamma\beta t^2/2\epsilon} \dt},
    &\quad\text{otherwise},\\
1, &\quad \hat v(0) > \pi/\beta + \rho.
\end{cases}
\end{equation}
The denominator evaluates, once $\epsilon$ is small enough, to
\begin{equation}
\sqrt\epsilon \int_{-\log(1/\epsilon)}^{\log(1/\epsilon)} \hspace{-0.5em}
    \exp\!\left(-\frac{2\pi \gamma\beta}{2} t^2\right) \dt
\geq \sqrt{\frac{\epsilon}{\gamma\beta}} (1 - C\epsilon).
\end{equation}
Hence
\begin{multline}\label{eq:easy capacity upper expanded}
2\capa(A_\epsilon^+, B_\epsilon)
\leq 2 \cdot \gamma\beta (1 + C\epsilon)\\
    \cdot \int_{-\rho}^{\rho} \diff{\bar v} \int_{\R^{2N}} \hspace{-1em} \diff v_\perp \exp\!\left( -\frac 1 \epsilon \int_{\Tor}
        \frac{\abs{\nabla v}^2}{2} - \frac\gamma\beta \cos(\pi + \beta v) + \gamma\beta \bar v^2 \dx \right).
\end{multline}
We can again use the expansion from \Cref{thm:easy potential decomposition}.
Note that the sign of the cosine is now negative,
and the remainder $R(u)$ can now be ignored as non-positive.
This leaves us with the Gaussian integral
\begin{equation}\label{eq:easy capacity upper integrals}
e^{-2\pi\gamma/(\epsilon\beta)}
\int_{-\rho}^{\rho} \diff{\bar v}
    \exp\!\left( - \frac{2\pi \gamma\beta}{2\epsilon} \bar v^2 \dx \right)
\int_{\R^{2N}} \hspace{-1em} \diff v_\perp \exp\!\left( -\int_{\Tor}
    \frac{\abs{\nabla v}^2 - \gamma\beta v_\perp^2}{2\epsilon} \dx \right).
\end{equation}
Since $\gamma\beta < 1$, the latter integral can be normalized to be Gaussian.
Hence
\begin{equation}
\eqref{eq:easy capacity upper integrals}
\leq e^{-2\pi\gamma/(\epsilon\beta)} \sqrt{\frac{\epsilon}{\gamma\beta}} (1 + C\epsilon)
    \cdot \sqrt{\prod_{0 < \abs n \leq N} \frac{2\pi \epsilon}{n^2 - \gamma\beta}},
\end{equation}
which together with~\eqref{eq:easy capacity upper expanded} leads to the stated bound.
\end{proof}

\begin{proposition}[Capacity, lower bound]\label{thm:easy capacity lower}
As above,
\[
\capa(A_\epsilon, B_\epsilon) \geq 2 \sqrt{\epsilon\gamma\beta
        \prod_{0 < \abs n \leq N} \frac{2\pi \epsilon}{n^2 - \gamma\beta}}
    e^{-2\pi\gamma/(\epsilon\beta)}
    (1 - C\epsilon).
\]
\end{proposition}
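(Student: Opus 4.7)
The plan is to adapt the lower capacity bound of \cite[Lemma~5.2]{berglund_sharp_2013}. By the $u \mapsto -u$ symmetry of $F$, the two saddles $\pm\pi/\beta$ contribute equally, so I would work with a single slab around $u_\ast = \pi/\beta$,
\[
S_+ \coloneqq \left\{ u \in \R^{2N+1} \colon |\hat u(0) - \pi/\beta| < \delta,\; u_\perp \in B_\epsilon^\perp \right\},
\]
for a small but fixed $\delta > 0$. Restricting the Dirichlet integral in \Cref{def:capacity} to $S_+$ (keeping only the $\partial_{\hat u(0)}$ component of the gradient) and doubling for the mirror slab around $-\pi/\beta$ gives
\[
\capa(A_\epsilon, B_\epsilon) \;\geq\; 2\epsilon \int_{S_+} |\partial_{\hat u(0)} h|^2\, e^{-F/\epsilon} \diff u,
\]
where $h = h_{B_\epsilon, A_\epsilon}$.

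For each fixed $u_\perp \in B_\epsilon^\perp$ I apply Cauchy--Schwarz in $\bar u = \hat u(0)$:
\[
\int_{\pi/\beta - \delta}^{\pi/\beta + \delta} |\partial_{\bar u} h|^2 e^{-F/\epsilon} \diff\bar u
 \;\geq\; \frac{\big(h(\pi/\beta - \delta, u_\perp) - h(\pi/\beta + \delta, u_\perp)\big)^2}{\int_{\pi/\beta - \delta}^{\pi/\beta + \delta} e^{F(\bar u, u_\perp)/\epsilon} \diff\bar u}.
\]
By \Cref{thm:easy potential decomposition} applied at $u_\ast = \pi/\beta$---where $\cos(\beta u_\ast) = -1$ and the quartic remainder $R$ is non-negative, yielding $e^{-R/\epsilon} \leq 1$ in the exponent---one has the pointwise bound
\[
e^{F(\bar u, u_\perp)/\epsilon} \;\leq\; e^{F(u_\ast)/\epsilon} \exp\!\bigg(\!-\frac{\pi\gamma\beta (\bar u - \pi/\beta)^2}{\epsilon}\bigg) \exp\!\bigg(\frac{(\cova u_\perp, u_\perp)}{2\epsilon}\bigg),
\]
with $\cova = -\Laplace - \gamma\beta$ and $F(u_\ast) = 2\pi\gamma/\beta$. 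Extending the $\bar u$-integral to all of $\R$ and computing the Gaussian then gives the clean denominator upper bound $e^{F(u_\ast)/\epsilon} e^{(\cova u_\perp, u_\perp)/(2\epsilon)} \sqrt{\epsilon/(\gamma\beta)}$, with only exponentially small error from the tails.

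The key analytic step, and the main obstacle, is showing that the numerator is close to one for most $u_\perp$. Concretely I need
\[
h(\pi/\beta - \delta, u_\perp) \geq 1 - e^{-c/\epsilon}, \qquad h(\pi/\beta + \delta, u_\perp) \leq e^{-c/\epsilon},
\]
uniformly over $u_\perp$ with $(\cova u_\perp, u_\perp) \lesssim \epsilon\log(1/\epsilon)$. Since $\pi/\beta \pm \delta$ lie strictly in the deterministic basins of attraction of the minima $0$ and $2\pi/\beta$ respectively, this follows from the Freĭdlin--Wentzell estimates summarized in \Cref{sec:ldp}: from the left endpoint the drift sweeps the trajectory into $B_\epsilon$ long before an exponentially rare excursion could cross the saddle, and symmetrically from the right. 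The LDP bounds hold at the SPDE level, so the constant $c > 0$ is uniform in $N$.

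Assembling the pieces, the integration over $u_\perp \in B_\epsilon^\perp$ reduces to a Gaussian integral with covariance $\epsilon \cova^{-1}$ on the oscillatory modes, which evaluates to $\sqrt{\prod_{0 < |n| \leq N} 2\pi\epsilon/(n^2 - \gamma\beta)}$; the complement $(B_\epsilon^\perp)^\complement$ contributes only $\bigO(\epsilon^p)$ by \Cref{thm:well concentration}. Combining this with the $\sqrt{\epsilon/(\gamma\beta)}$ factor from the $\bar u$-integral, multiplying by $\epsilon$ and doubling for the two saddles produces the stated lower bound $2\sqrt{\epsilon\gamma\beta\prod 2\pi\epsilon/(n^2-\gamma\beta)}\, e^{-2\pi\gamma/(\epsilon\beta)}(1-C\epsilon)$, with $C$ independent of~$N$.
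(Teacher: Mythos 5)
You've adapted the equilibrium-potential route that the paper uses for $\gamma\beta > 1$ (\Cref{thm:hard capacity lower}), but the paper deliberately does \emph{not} use it here. Instead, \Cref{thm:easy capacity lower} in the paper works with the Dirichlet-form characterization
$\capa(A_\epsilon, B_\epsilon) = \inf_g \epsilon \int \abs{\nabla g}^2 e^{-F/\epsilon}$,
discards the oscillatory components of the gradient, and optimizes over one-dimensional $\tilde g$ subject to the boundary constraints that come \emph{for free} from the definition of admissible test functions ($g|_{B_\epsilon} = 0$, $g|_{A_\epsilon} \geq 1$). The Euler--Lagrange minimizer then gives $\epsilon / \int_0^{2\pi/\beta} e^{F/\epsilon}\diff\bar u$ directly, with numerator exactly one and no probabilistic input at all.

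The gap in your version is the step where you need $h_{B_\epsilon,A_\epsilon}(\pi/\beta - \delta, u_\perp) \geq 1 - e^{-c/\epsilon}$ and the mirror bound at $\pi/\beta + \delta$, claimed to ``follow from the Freĭdlin--Wentzell estimates summarized in \Cref{sec:ldp}.'' Those lemmas are stated and proved for the \emph{fixed} sets $A$ and $B$ of the $\gamma\beta > 1$ case, not for the shrinking sets $A_\epsilon$, $B_\epsilon$; the paper flags exactly this at the start of \Cref{sec:hard capacity} (``The estimate does not give enough control if the sets vanish in $\epsilon \to 0$, so we need to keep the sets fixed''). Concretely, to guarantee the trajectory enters the shrinking ball $B_\epsilon$ of radius $\sim\sqrt\epsilon\log(1/\epsilon)$, the deterministic flow needs time growing like $\log(1/\epsilon)$, and the constant $K = K(\delta, u_0, T)$ in \Cref{thm:ldp deterministic} is not controlled in that regime. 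Moreover, even if this could be patched, the LDP-based argument (via \Cref{thm:ldp equilibrium local}) requires the truncated flow to track the full flow, which costs a constraint $N \gtrsim \epsilon^{-3}$. That would quietly break the claimed ``uniform in $N$'' statement in \Cref{thm:main transition time easy} — the whole point of using the cleaner variational argument in the $\gamma\beta < 1$ case is to avoid any such $N$-dependence.
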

\begin{proof}
Let us estimate the gradient in \Cref{def:capacity} from below by
\begin{equation}
\begin{split}
\capa(A_\epsilon, B_\epsilon)
&\geq \inf_{g} \epsilon \int_{\R^{2N+1}} \abs{\partial_{\bar u} g(u)}^2 \exp(-F(u)/\epsilon) \du\\
&\geq \epsilon \int_{\R^{2N}} \inf_{\tilde g}
    \int_{\R} \abs{\partial_{\bar u} \tilde g(\bar u)}^2 \exp(-F(\bar u + u_\perp)/\epsilon)
    \diff{\bar u} \du_\perp.
\end{split}
\end{equation}
In the latter equation we optimize over $\tilde g \colon \R \to \R$ such that $\tilde g(\bar u) = 0$
for $|\bar u| < \rho$
and $\tilde g(\bar u) \geq 1$ for $|\bar u| > 2\pi/\beta - \rho$,
where again $\rho \coloneqq \sqrt{\epsilon} \log(1/\epsilon)$.
Now that the problem is reduced to one dimension,
we can again use symmetry to only consider the case $\bar u > 0$.
We can also weaken the assumption to $\tilde g(0) = 0$ and $\tilde g(2\pi/\beta) \geq 1$.

By the Euler--Lagrange equation, the minimizer $\tilde g$ satisfies
\begin{equation}
\tilde g'(x) = \frac{e^{F(x + u_\perp)/\epsilon}}
    {\int_{0}^{2\pi/\beta} e^{F(y + u_\perp)/\epsilon} \dy },
\quad 0 \leq x \leq \frac{2\pi}{\beta}.
\end{equation}
After translating $\bar u$ to be centered at the saddle, we are left with
\begin{equation}
\capa(A_\epsilon, B_\epsilon)
\geq 2 \epsilon \int_{\R^{2N}}
    \left[\int_{-\pi/\beta}^{\pi/\beta} e^{F(\pi/\beta + \bar u + u_\perp)/\epsilon} \diff{\bar u}\right]^{-1} \du_\perp.
\end{equation}
Since we are interested in a lower bound, we may restrict to $u_\perp \in B_\epsilon^\perp$.
Therefore we are back in the first two cases of \Cref{thm:well upper}.

When $|\bar u|$ is at most $\sqrt{\epsilon}\log(1/\epsilon)$, we have
\begin{equation}
\frac{F}{\epsilon}
= \frac{2\pi\gamma}{\epsilon\beta} - \frac{2\pi \gamma\beta \bar u^2}{\epsilon}
    + \frac{1}{2\epsilon} \sum_{0 < \abs n \leq N} (n^2 - \gamma\beta) \hat u(n)^2
    + R(\bar u + u_\perp),
\end{equation}
where $R$ is non-negative and of order $\epsilon \log(1/\epsilon)^2$.
This region gives the dominant contribution,
which factorizes neatly into Gaussian densities for mean and oscillating parts.

When $|\bar u|$ is larger than $\sqrt{\epsilon} \log(1/\epsilon)$,
we use the assumption $u_\perp \in B_\epsilon^\perp$ to control the cosine with $\bar u$,
and add and subtract the $\gamma\beta$ term:
\begin{equation}
\frac{F}{\epsilon}
\leq \frac{2\pi\gamma}{\epsilon\beta} - \frac{C \bar u^2}{\epsilon}
    + \frac{1}{2\epsilon} \sum_{0 < \abs n \leq N} (n^2 - \gamma\beta) \hat u(n)^2
    + \frac{\gamma\beta}{2\epsilon} \sum_{0 < \abs n \leq N} \hat u(n)^2.
\end{equation}
By assumption the second term is smaller than $-C \log(\epsilon)^2$,
whereas the last term is smaller than $c_0 \log(\epsilon)$
by the $\Holder^{1/2-} \hookrightarrow L^2$ embedding.
Therefore this contribution is vanishing of order $\epsilon$, and
\begin{equation}
\int_{-\pi/\beta}^{\pi/\beta} \hspace{-0.5em} e^{F(\pi/\beta + \bar u + u_\perp)/\epsilon} \diff{\bar u}
\leq \exp\!\left( \frac{2\pi\gamma}{\epsilon\beta}
        + \sum_{n \neq 0} \frac{(n^2 - \gamma\beta) \hat u(n)^2}{2\epsilon} \right)
    \sqrt{\frac{\epsilon}{\gamma\beta}} (1 + C\epsilon).
\end{equation}
Now it remains to compute the Gaussian integral over the oscillating modes,
which again uses \Cref{thm:well concentration}.
\end{proof}

\subsection{Non-constant saddle (\texorpdfstring{$\gamma\beta > 1$}{γβ > 1})}
\label{sec:hard capacity}

In this case the eigenvalues of $\Hess F[u_\ast]$ at the saddle $u_\ast$ are not explicitly known,
and the one-dimensional saddle manifold yields an extra prefactor.
We still follow the approach of Berglund and Gentz \cite[Sections~6.1.3 and~6.2]{berglund_sharp_2013}.

The lower bound for capacity is bootstrapped
from a large-deviation estimate derived in \Cref{sec:ldp}.
The estimate does not give enough control if the sets vanish in $\epsilon \to 0$,
so we need to keep the sets fixed.
Therefore we now define the initial data to be sampled from
\begin{equation}
B \coloneqq \{ u \in L^2 \colon \norm{u}_{\Holder^{1/2-\kappa}} < \delta \},
\end{equation}
and consider the hitting time of
\begin{equation}
A \coloneqq \{ u \in L^2 \colon \norm{u - 2\pi/\beta}_{\Holder^{1/2-\kappa}} < \delta \}
    \cup \{ u \in L^2 \colon \norm{u + 2\pi/\beta}_{\Holder^{1/2-\kappa}} < \delta \}.
\end{equation}
The small parameter $\delta > 0$ is fixed in \Cref{thm:hard capacity lower}.
The estimates on the potential well done in \Cref{sec:well} apply also here.

Let $\manif_+$ and $\manif_-$ be the one-dimensional manifolds corresponding to the essential gates
at $\hat u(0) = \pm \pi/\beta$.
Given any point $u_\ast \in \manif_\pm$,
the others are given by $\mathcal T_t u_\ast(x) = u_\ast(x-t)$.
In Fourier space the action can be written as
\begin{equation}
\widehat{\mathcal T_t v}(n) =
\cos(nt) \hat v(n) - \sin(nt) \hat v(-n).
\end{equation}

\begin{lemma}[Length of saddle manifold]\label{thm:hard capacity manifold length}
Let us fix one point $u_\ast \in \manif$.
The length of the curve $\manif$ is then $\ell = 2\pi \norm{\partial_x u_\ast}_{L^2(\Tor)}$.
\end{lemma}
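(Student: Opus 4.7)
The natural approach is to parametrize the saddle manifold by the translation parameter $t \in [0, 2\pi)$ via the map $t \mapsto \mathcal T_t u_\ast$, and then compute its arc length with respect to the ambient $L^2(\Tor)$ metric:
\begin{equation*}
\ell = \int_0^{2\pi} \left\| \tfrac{d}{dt} \mathcal T_t u_\ast \right\|_{L^2(\Tor)} \dt.
\end{equation*}

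First I would compute the speed pointwise in $t$. Since $\mathcal T_t u_\ast(x) = u_\ast(x-t)$, differentiation in $t$ gives $\tfrac{d}{dt}(\mathcal T_t u_\ast)(x) = -(\partial_x u_\ast)(x-t)$. The $L^2(\Tor)$ norm of a translate of a function equals the norm of the function itself, so the speed is the constant $\norm{\partial_x u_\ast}_{L^2(\Tor)}$ independently of $t$. Integrating over $t \in [0, 2\pi)$ then yields $\ell = 2\pi \norm{\partial_x u_\ast}_{L^2(\Tor)}$, as claimed. As a consistency check, one can reach the same answer through the Fourier-side formula for $\mathcal T_t$ already displayed in the excerpt: differentiating $\widehat{\mathcal T_t v}(n) = \cos(nt) \hat v(n) - \sin(nt) \hat v(-n)$ in $t$ and applying Parseval reduces the speed squared to $\sum_n n^2 \hat v(n)^2$, which is exactly $\norm{\partial_x v}_{L^2}^2$.

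The only point that genuinely needs care is that the parametrization should traverse $\manif$ \emph{exactly once}, i.e.\ that $t \mapsto \mathcal T_t u_\ast$ is injective on $[0, 2\pi)$; otherwise the integral above overcounts and the length is smaller. By \Cref{thm:transition states} the essential gate in the $\gamma\beta > 1$ regime consists of the period-$2\pi$ stationary solutions \eqref{eq:stationary solution} on the torus of length $2\pi$, so $\mathcal T_t u_\ast = u_\ast$ only when $t \in 2\pi\Z$. The map is therefore injective on $[0, 2\pi)$, and the computation above gives the length of $\manif$.

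I do not expect a real obstacle here: the statement reduces to translation invariance of the $L^2$ norm together with the period-matching observation already established in \Cref{thm:transition states}. In the truncated setup of \Cref{sec:transition times} the same argument applies verbatim, since translation in $x$ commutes with the projection $\projN$ and preserves the $L^2(\Tor) = \R^{2N+1}$ norm.
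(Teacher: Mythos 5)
Your proof is correct and takes essentially the same approach as the paper: parametrize $\manif$ by the translation parameter $t \in [0,2\pi)$ and integrate the tangent speed, which is constant in $t$. You compute the speed directly in physical space via translation invariance of the $L^2$ norm (whereas the paper works on the Fourier side, which you also sketch as a check), and your explicit injectivity observation for the parametrization is a small but valid point that the paper leaves implicit.
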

\begin{proof}
It follows from the definition that
\begin{equation}
\partial_t \widehat{\mathcal T_t v}(n) =
    -n \left[ \sin(nt) \hat v(n) + \cos(nt) \hat v(-n) \right].
\end{equation}
By the symmetry of the sum, we then have
\begin{equation}
\sum_{\abs n \leq N} [\partial_t \widehat{\mathcal T_t v}(n)]^2
= \sum_{\abs n \leq N} n^2 (\sin(nt)^2 + \cos(nt)^2) \hat v(n)^2.
\end{equation}
Hence this parametrization yields
\begin{equation}
\int_\manif \diff u
= \int_0^{2\pi} \norm{\partial_t \mathcal T_t u_\ast}_{\R^{2N+1}} \dt
= 2\pi \norm{\partial_x u_\ast}_{L^2(\Tor)}.
\qedhere
\end{equation}
\end{proof}

For any $u \in \manif_\pm$, the matrix $\Hess F[u]$ has an orthonormal eigenbasis
($\check e_{-}$, $\check e_0$, \ldots, $\check e_{2N-1}$) with
eigenvalues ($-\mu$, $0$, $\lambda_1$, $\ldots$, $\lambda_{2N-1}$).
The eigenvector $e_0$ is tangent to $\manif_\pm$ at $u$.
The eigenvalues are independent of the point $u$
whereas the eigenvectors $e_i$ are also related by the translation $\mathcal T_t$.

Any $v$ in the normal subspace of $\manif$ at $u$ can be thus written
with the eigenbasis decomposition
\begin{equation}
v(x) = u(x) + \check v(-1) \check e_{-}(x) + \sum_{n = 1}^{2N-1} \check v(n) \check e_n(x).
\end{equation}
We will denote the sum term by $v_\dagger(x)$.
We use this decomposition to define neighbourhoods of the essential gate in the lemmas below
-- the precise form is case-dependent.
Although the basis depends on the base point $u \in \manif$,
we will suppress this dependency whenever it does not matter.

\begin{remark}\label{rem:hard capacity projection}
There is a technical nuance in Step~2 and \Cref{thm:hard capacity upper error} below.
We could use \Cref{thm:hard capacity stable manifold} without $M$-truncation to estimate the potential
by $V[u] \geq V[u_\ast] + m \dist(u, \manif)_{H^1}$.
This would however introduce a factor of $m^{-N}$ into the error term,
which would blow up since $m$ would be small.
This appears to be an oversight in \cite[Eq.~(6.25)]{berglund_sharp_2013}.

Since the Galerkin approximations of stationary solutions converge rapidly,
we can introduce an auxiliary $\epsilon$-dependent truncation of the stable manifold.
This replaces the factor $m^{-N}$ by $\exp(\epsilon^{-C})$,
where $C$ is small enough for the term to be controllable.

This detour makes the error estimate in \Cref{thm:hard capacity upper}
worse than the $\sqrt\epsilon \log(1/\epsilon)^C$ in \Cref{thm:hard capacity lower}.
We do not aim for an optimal rate.
\end{remark}

\begin{proposition}[Upper bound for capacity]\label{thm:hard capacity upper}
We have
\[
\capa(A, B) \leq 2\ell
    \sqrt{\frac{\epsilon\mu}{2\pi} \prod_{n \geq 1} \frac{2\pi\epsilon}{\lambda_n}}
     e^{-F(u_\ast)/\epsilon} (1 + C \epsilon^{1/4} \log(1/\epsilon)^3)
\]
when $N \geq N_0 \epsilon^{-1/6}$.
\end{proposition}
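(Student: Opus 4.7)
The plan is to exploit the variational (first) characterization of capacity in \Cref{def:capacity}, constructing an explicit competitor $g$ concentrated near the essential gate $\manif_+\cup\manif_-$ and evaluating its weighted Dirichlet energy asymptotically. By subadditivity $\capa(A,B)\leq \capa(A^+,B)+\capa(A^-,B)$ and the $u\mapsto -u$ symmetry of $F$, it suffices to analyze one tubular neighborhood; this produces the prefactor $2\ell$.

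Around $\manif_+$ I introduce coordinates $(t,\check v(-1),\check v_\dagger)$ adapted to the eigenbasis decomposition introduced above \Cref{rem:hard capacity projection}: the parameter $t\in[0,2\pi)$ labels the base point $\mathcal T_t u_\ast\in\manif_+$, and $\check v(-1),\check v_\dagger=(\check v(1),\ldots,\check v(2N-1))$ are the unstable and stable normal coordinates. The tangential integration contributes exactly the length $\ell$ of \Cref{thm:hard capacity manifold length}. Inside a suitably chosen tube I take $g(u)=\Phi_\epsilon(\check v(-1))$, where $\Phi_\epsilon$ is the one-dimensional minimizer associated to the quadratic barrier $-\tfrac{\mu}{2}\check v(-1)^2$ (of the error-function type used in \Cref{thm:easy capacity upper}), extended by $0$ and $1$ on the two sides of the tube. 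With this choice $|\nabla g|^2=|\Phi'_\epsilon(\check v(-1))|^2$ throughout the tube.

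Substituting the expansion~\eqref{eq:potential decomposition generic} with $v=\check v(-1)\check e_-+v_\dagger$ yields
\[
F[u]=F[u_\ast]-\tfrac{\mu}{2}\check v(-1)^2+\tfrac{1}{2}\sum_{n\geq 1}\lambda_n\check v(n)^2+R,
\]
with $R$ cubic and quartic in the normal coordinates. Restricting the tube so that $|\check v(-1)|$ and $\norm{\check v_\dagger}$ are at most a power of $\sqrt{\epsilon}\log(1/\epsilon)$, the remainder $R$ is absorbed into a factor $1+\mathcal O(\epsilon^{1/4}\log(1/\epsilon)^3)$. The resulting integral factorizes: the $t$-integration yields $\ell$; each stable direction contributes the Gaussian factor $\sqrt{2\pi\epsilon/\lambda_n}$; and the Euler--Lagrange solution $\Phi'_\epsilon\propto e^{-\mu\check v(-1)^2/(2\epsilon)}$ minimizes the one-dimensional energy at $\sqrt{\mu/(2\pi\epsilon)}(1+\mathcal O(\epsilon))$. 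Combined with the outer factor $\epsilon$ of \Cref{def:capacity}, this produces the claimed prefactor $\sqrt{\epsilon\mu/(2\pi)}$.

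The main obstacle will be controlling the contribution of $u$ outside the tube, which is the issue flagged in \Cref{rem:hard capacity projection}. The naive bound $F[u]\geq F[u_\ast]+m\,\dist(u,\manif)^2$, with $m$ the smallest positive eigenvalue of $\Hess F$, produces a spurious $m^{-N}$ factor. I avoid this by replacing $\manif_+$ with its Galerkin proxy from \Cref{thm:critical galerkin} and by truncating the stable coordinates at scale $\epsilon^{-C}$ with $C$ small, so that the unavoidable overhead is at most $\exp(\epsilon^{-C})$ rather than $m^{-N}$. Combined with the large-deviation bounds recalled in \Cref{sec:ldp}, this shows that the exterior contribution is subdominant under the hypothesis $N\geq N_0\epsilon^{-1/6}$, and yields the overall multiplicative error $1+C\epsilon^{1/4}\log(1/\epsilon)^3$.
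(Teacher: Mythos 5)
Your Step~1 (the bulk) matches the paper's argument: the error-function competitor $g$ in the unstable coordinate, subadditivity plus symmetry giving the factor $2\ell$, the Taylor expansion~\eqref{eq:potential decomposition generic}, and the Gaussian factorization all line up. (One detail that affects the rate: the paper takes tube radii $\epsilon^{5/12}$ and $\epsilon^{5/12}\log(1/\epsilon)$, not powers of $\sqrt{\epsilon}\log(1/\epsilon)$; the cubic remainder $\rho^3/\epsilon$ with $\rho=\epsilon^{5/12}$ is exactly where the $\epsilon^{1/4}$ in the error comes from, and the wider tube also keeps $|\nabla\tilde g|^2$ polynomially bounded in Step~2.)

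Step~2 has a genuine gap and a misattribution. First, you never explain how the competitor is actually extended from the tube to all of $\R^{2N+1}$ so that $g|_A\ge 1$ and $g|_B=0$. The paper builds a global cutoff $\tilde g$ by declaring it $1$ on (a slightly shrunk version of) the domain of attraction of $A$, $0$ on the other side, and interpolating in an $\bigO(\rho)$-thick layer around a separating hypersurface; crucially that hypersurface is the union of stable manifolds of the $M$-truncated saddles, with $M=\epsilon^{-1/6}$. Without specifying this global construction, the variational bound on $\capa(A,B)$ is not obtained. Second, and more importantly, the exterior contribution is the integral $\epsilon\cdot\bigO(\epsilon^{-5/6})\int_{\mathcal S_\epsilon\setminus\neigh}e^{-F(u)/\epsilon}\du$; this is a Gibbs integral, not a hitting probability, so the Fre\u{\i}dlin--Wentzell estimates of \Cref{sec:ldp} do not apply to it. The paper controls it with the purely deterministic potential bound of \Cref{thm:hard capacity stable manifold} (quadratic lower bound on $F$ near the truncated stable manifold, uniform in $M$) together with the Gaussian concentration estimate \Cref{thm:hard capacity upper error}, and it is the $\log(1/\epsilon)$ factor in the concentration rate that beats the $\exp(C\epsilon^{-1/6})$ overhead from the $M$-truncation. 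The large-deviation machinery is used only in the \emph{lower} bound for the capacity (\Cref{thm:hard capacity lower}), to fix the boundary values $h_{A,B}(u_\ast+v_\dagger\pm\delta\check e_-)$; it plays no role in the upper bound.
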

\begin{proof}
We can use symmetry to consider only one saddle manifold,
like we did in \Cref{thm:easy capacity upper}.
We present this proof in two parts:
First we estimate the contribution near $\manif$,
and then show the remaining contribution to be negligible.

The neighbourhood we choose is
$\neigh \coloneqq \{ u + v \colon u \in \manif, v \in D_u \}$, where
\begin{equation}
\begin{gathered}
D_u \coloneqq \big\{ v \in L^2(\Tor) \colon \abs{\check v(-1)} < \epsilon^{5/12},
    v_\dagger \in D_u^\dagger \big\}, \text{ with}\\
D_u^\dagger \coloneqq \big\{
    v_\dagger \in L^2(\Tor) \colon
    \norm{v_\dagger}_{\Holder^{1/2-}} < \epsilon^{5/12} \log(1/\epsilon)
\big\}.
\end{gathered}
\end{equation}
Note that now we control the unstable mode with the stable modes,
as opposed to the earlier setting.
In the integration we will essentially replace $\neigh$ by $\manif \times D_{u_\ast}$,
since possible overlaps between $D_{u}$ and $D_{u'}$ for some $u$, $u' \in \manif$
can be safely overcounted.

\bigskip\noindent\emph{Step 1: The bulk.}
This argument is parallel to \Cref{thm:easy capacity upper}.
For a fixed $u_\ast \in \manif$, we define the test function
\begin{equation}
g_\ast(v) = \begin{cases}
    0, &\quad \check v(-1) < -\rho,\\
    \displaystyle \frac{\int_{-\rho}^{\check v(-1)} e^{-\mu t^2/2\epsilon} \dt}
        {\int_{-\rho}^{\rho} e^{-2\pi \mu t^2/2\epsilon} \dt},
        &\quad\text{otherwise},\\
    1, &\quad \check v(-1) > \rho.
\end{cases}
\end{equation}
where $\rho = \epsilon^{5/12}$.
We extend this to an $H^1$ function on $\neigh$ by $g(u_t, v) \coloneqq g_\ast(\mathcal T_{-t} v)$.
The denominator is still larger than $\sqrt{2\pi\epsilon/\mu} (1 - C\epsilon)$,
once $\epsilon$ is small enough.

As we plug $g$ into the definition of capacity, we get
\begin{equation}\label{eq:hard capacity upper plugged}
\epsilon \int_\manif \int_{D_u} \frac{\mu}{2\pi\epsilon} (1-C\epsilon)^2
    e^{-F(u+v)/\epsilon - 2\mu \check v(-1)^2/2\epsilon} \diff v \diff u.
\end{equation}
As the inner integral is independent of $u \in \manif$,
the outer integral yields a factor of $\ell$.
We then use the decomposition \eqref{eq:potential decomposition generic}
and the $L^3$ bound from $D^\dagger$ to bound
\begin{equation}
\begin{split}
&\mathrel{\phantom{=}} \int_{D_{u_\ast}}
    \exp\!\left(-F(u_\ast+v)/\epsilon - 2\mu \check v(-1)^2/2\epsilon \right) \diff v\\
&= \int_{D_{u_\ast}} \exp\!\left(
    -\frac{F(u_\ast)}{\epsilon} - \frac{\mu \check v(-1)^2}{2\epsilon}
    - \sum_{n \geq 1} \frac{\lambda_n \check v(n)^2}{2\epsilon}
    + \frac 1 \epsilon \int_\Tor \bigO(v^3) \dx
\right) \diff v\\
&\leq \sqrt{\frac{2\pi\epsilon}{\mu} \prod_{n \geq 1} \frac{2\pi\epsilon}{\lambda_n}}
    e^{-F(u_\ast)/\epsilon}
    (1 + C \epsilon^{1/4} \log(1/\epsilon)^3).
\end{split}
\end{equation}
Together with \eqref{eq:hard capacity upper plugged} this gives the upper bound
\begin{equation}\label{eq:hard capacity bulk}
\ell e^{-F(u_\ast)/\epsilon}
    \sqrt{\frac{\epsilon\mu}{2\pi} \prod_{n \geq 1} \frac{2\pi\epsilon}{\lambda_n}}
    (1 + C \epsilon^{1/4} \log(1/\epsilon)^3)
\end{equation}
for the integral over $\neigh$.

\bigskip\noindent\emph{Step 2: The remainder.}
We then need to show that $g$ can be extended to the set $\R^{2N+1} \setminus \neigh$
in a way that gives a negligible contribution.
Because the function $g$ only depends on one coordinate,
it can be extended to a $\rho$-neighbourhood of $\manif$ in $L^2$ norm.

We now divide $\R^{2N+1}$ into two components separated by an union of stable manifolds.
As explained in \Cref{rem:hard capacity projection},
we introduce an auxiliary frequency truncation for this step.

Let $M = \epsilon^{-1/6}$, rounded to an integer.
Denote by $\manif_M$ the saddle manifold of the $M$-truncated equation~\eqref{eq:DSG truncated},
and by $u_{\ast}^{(M)}$ a representative of that saddle.
Further, define
\begin{equation}
\mathcal S^{(M)} \coloneqq \left\{ u \in L^2(\Tor) \colon
    \projM u \text{ belongs to the combined stable manifold} \right\}\!.
\end{equation}
Here we take the union of stable manifolds of \emph{all} mean-$\pi/\beta$ stationary solutions
to the $M$-truncated equation.
It then follows that $\mathcal S^{(M)}$ separates $\R^{2N+1}$ into two components once $N \geq M$.

Let $\mathcal S_\epsilon$ be a layer of thickness $\rho$ around $\mathcal S^{(M)}$,
and $\mathcal A$ the domain of attraction of the potential well $A$ in $\R^{2N+1}$.
Let $\theta \colon \R \to \R_{\geq 0}$ be a smooth, non-increasing function
such that $\theta(r) = 1$ for $r \leq 0$ and $\theta(r) = 0$ for $r \geq 1$.
We then define
\begin{equation}
\tilde g(u) =
\left\{
\begin{aligned}
    g(u),\quad &\text{when } \dist(u, \manif)_{L^2} \leq \rho,\\
    \theta(\rho^{-1} \dist(u, \mathcal A \setminus \mathcal S_\epsilon)),\quad
        &\text{when } \dist(u, \manif)_{L^2} \geq 2\rho,
\end{aligned}
\right.
\end{equation}
smoothly interpolated in the intermediate region.
It then follows that $\abs{\nabla \tilde g}^2$ is of order at most $\epsilon^{-5/6}$
and vanishes outside $\mathcal S_\epsilon$.
Thus
\begin{equation}
\capa(A, B) \leq 2 \cdot \eqref{eq:hard capacity bulk}
    + \epsilon\, \bigO(\epsilon^{-5/6}) \int_{\mathcal S_\epsilon \setminus \neigh}
        e^{-F(u)/\epsilon} \du.
\end{equation}
We are left to do a concentration of measure argument,
which we separate as the two lemmas below.
\end{proof}

\begin{lemma}[Estimate on stable manifold]\label{thm:hard capacity stable manifold}
There exists $m_1 > 0$ independent of $M$
such that for all $\tilde u \in \mathcal S^{(M)}$ we have
\[
F(\tilde u) \geq F(u_{\ast}^{(M)})
    + \inf_{u_{\ast}^{(M)} \in \manif_M} m_1 \norm{u_{\ast}^{(M)} - \tilde u}_{H^1}^2,
\]
\end{lemma}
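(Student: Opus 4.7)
The plan is to combine a Taylor expansion of $F$ around the nearest point of $\manif_M$ with the stable-manifold characterization of $\mathcal S^{(M)}$ and a uniform-in-$M$ spectral gap on the positive-eigenvalue subspace of the Hessian. For $\tilde u \in \mathcal S^{(M)}$ close to $\manif_M$, I would select $u_{\ast}^{(M)} \in \manif_M$ minimizing $\norm{u_{\ast}^{(M)} - \tilde u}_{L^2}$ and set $v = \tilde u - u_{\ast}^{(M)}$. This forces $v$ to be $L^2$-orthogonal to the center direction $\check e_0 \propto \partial_x u_{\ast}^{(M)}$, so the decomposition $v = \check v(-1)\check e_- + v_\dagger$ in the eigenbasis of $\cova^{(M)} = -\Laplace + \gamma\beta \cos(\beta u_{\ast}^{(M)})$ has $v_\dagger$ supported on the strictly positive eigenvalues. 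The expansion~\eqref{eq:potential decomposition generic} then reads
\[
F(\tilde u) - F(u_{\ast}^{(M)}) = \tfrac{1}{2}(\cova^{(M)} v, v) + R(v),
\]
with $R(v) = \bigO(\norm{v}_{L^\infty}\norm{v}_{L^2}^2)$ coming from the Taylor remainder of cosine.

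Next, the stable manifold theorem for the truncated gradient flow tells us that $\mathcal S^{(M)}$ is tangent at $u_{\ast}^{(M)}$ to the span of $\check e_0$ and the positive eigenvectors, so its local parametrization yields $\abs{\check v(-1)} \lesssim \norm{v_\dagger}_{L^2}^2$, and the unstable contribution $-\mu \check v(-1)^2$ is absorbed into $R$. On the positive subspace, $(\cova^{(M)} v_\dagger, v_\dagger) = \sum_{n\geq 1} \lambda_n^{(M)} \check v(n)^2$, and the identity $\norm{\nabla v_\dagger}_{L^2}^2 = (\cova^{(M)} v_\dagger, v_\dagger) - \gamma\beta(\cos(\beta u_{\ast}^{(M)}) v_\dagger, v_\dagger)$ combined with a positive spectral gap $\lambda_1^{(M)} \geq c > 0$ yields $(\cova^{(M)} v_\dagger, v_\dagger) \geq 2 m_1 \norm{v_\dagger}_{H^1}^2$. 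The cubic remainder is then absorbed via the one-dimensional embedding $H^1 \hookrightarrow L^\infty$, once $\norm{v}_{H^1}$ is small enough, possibly after halving $m_1$.

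For the global statement on all of $\mathcal S^{(M)}$ I would use the confining potential~\eqref{eq:confining potential} to restrict to an $H^1$-bounded region, combine this with the monotonicity $F(\tilde u) \geq F(u_{\ast}^{(M)})$ along the $-\nabla F$ flow on $\mathcal S^{(M)}$ (stable manifolds are flow-invariant), and extend the local quadratic bound to the complement of an $H^1$-neighborhood of $\manif_M$ by a covering argument together with a Palais--Smale-type compactness, at the cost of possibly shrinking the constant to a uniform $m_1 > 0$.

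The hard part will be making $m_1$ uniform in $M$. This rests on two ingredients: \Cref{thm:critical galerkin} gives $u_{\ast}^{(M)} \to u_\ast$ in $L^\infty$ at polynomial rate, which stabilizes the multiplicative potential in $\cova^{(M)}$; and the rough min-max bound $\lambda_n^{(M)} \geq n^2 - \gamma\beta$, valid because $\cova^{(M)}$ is a bounded perturbation of $-\projM \Laplace$, gives Weyl-type lower bounds $\lambda_n^{(M)} \asymp 1 + n^2$ uniformly in $M$. Continuity of the eigenvalues under $L^\infty$-small perturbations then keeps $\lambda_1^{(M)}$ and hence $m_1$ bounded away from zero as $M \to \infty$.
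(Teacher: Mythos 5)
Your proposal takes essentially the same two-regime (near/far from $\manif_M$) structure as the paper's proof, and it fills in a detail the paper treats tersely: you make explicit that one should pick $u_\ast^{(M)}$ as the nearest point of $\manif_M$, use orthogonality to kill the $\check e_0$ component, and then use the stable-manifold parametrization $\abs{\check v(-1)} \lesssim \norm{v_\dagger}_{L^2}^2$ to absorb the would-be negative contribution $-\mu\check v(-1)^2$ into the remainder. The paper's line ``$(\cova_M v,v) \geq 0 \Rightarrow F \geq F(u_\ast^{(M)}) + m_0\norm{v}_{H^1}^2$'' only makes sense once this reduction is in place, so making it explicit is a genuine improvement. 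Your argument for the uniform-in-$M$ spectral gap (Galerkin convergence of $u_\ast^{(M)}$ in $L^\infty$, plus min-max comparison of $\cova_M$ with $-\Laplace$) matches what the paper intends.

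Where you diverge is the global regime. The paper handles large $\norm{v}_{H^1}$ by the elementary coercivity bound $F(u_\ast^{(M)}+v) \geq \tfrac12 \norm{\nabla(u_\ast^{(M)}+v)}_{L^2}^2 - \gamma/\beta \geq \tfrac14\norm{\nabla v}_{L^2}^2 - C$, supplemented by the confining term~\eqref{eq:confining potential} for the zero mode, and only invokes an interpolation/compactness argument for the \emph{intermediate} $\norm{v}_{H^1}$ annulus. Your proposal instead leans on flow-monotonicity of $F$ along $\mathcal S^{(M)}$ plus a Palais--Smale covering. That gives nonnegativity and a positive gap on compact annuli, but it does not by itself produce the required \emph{quadratic} growth as $\norm{v}_{H^1} \to \infty$, and the confining potential alone does not make $\mathcal S^{(M)}$ $H^1$-bounded (it only controls the zero mode). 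You would still need to write down the kinetic coercivity step as the paper does; with that added, the rest of your interpolation/compactness argument for the intermediate regime is fine and is effectively what \cite[Lemma~6.1]{berglund_sharp_2013} provides.
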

\begin{proof}
Fix $u_{\ast}^{(M)} \in \manif_M$ and let $v \in L^2(\Tor)$ satisfy $\projMperp v = 0$.
By~\eqref{eq:potential decomposition generic} we have for $r_0$ small enough the bound
\begin{equation}
F(u_{\ast}^{(M)} + v) \geq F(u_{\ast}^{(M)}) + \frac 1 4 (\cova_M v, v),
\quad \norm{v}_{H^1} \leq r_0.
\end{equation}
Since the positive eigenvalues of $\cova_M$ are bounded away from $0$ uniformly in $M$,
and asymptotically equal to those of $-\Laplace + 1$,
this implies
\begin{equation}\label{eq:hard capacity error small ball}
F(u_{\ast}^{(M)} + v) \geq F(u_{\ast}^{(M)}) + m_0 \norm{v}_{H^1}^2,
    \quad \norm{v}_{H^1} \leq r_0.
\end{equation}

On the other hand we can use the trivial lower bound and triangle inequality to estimate
\begin{equation}
\begin{split}
F(u_{\ast}^{(M)} + v)
&\geq \frac 1 2 \norm{\nabla (u_{\ast}^{(M)} + v)}_{L^2}^2 - \frac\gamma\beta\\
&\geq \frac 1 4 \norm{\nabla v}_{L^2}^2 - \frac 1 2 \norm{\nabla u_{\ast}^{(M)}}_{L^2}^2
    - \frac \gamma \beta.
\end{split}
\end{equation}
If we have
\begin{equation}
\frac 1 8 \norm{\nabla v}_{L^2}^2
\geq \frac 1 2 \norm{\nabla u_{\ast}^{(M)}}_{L^2}^2 + F(u_{\ast}^{(M)}) + \frac\gamma\beta,
\end{equation}
then this implies
\begin{equation}
F(u_{\ast}^{(M)} + v)
\geq F(u_{\ast}^{(M)}) + \frac 1 8 \norm{\nabla v}_{L^2}^2.
\end{equation}
The right-hand side is then upgraded to an $H^1$ bound by
incorporating also the confining term~\eqref{eq:confining potential}, since
\begin{equation}
\max(0, |\hat u_{\ast}^{(M)}(0) + \hat v(0)| -k2\pi/\beta)^2
\geq C \hat v(0)^2
\end{equation}
when $\hat v(0)$ is large enough.

An interpolation argument as in \cite[Lemma~6.1]{berglund_sharp_2013}
combines these two bounds when $v$ is of intermediate size.
\end{proof}

\begin{lemma}[Concentration on $\neigh$]\label{thm:hard capacity upper error}
With $\mathcal S_\epsilon$ and $\neigh$ defined as above, we have
\[
\int_{\mathcal S_\epsilon \setminus \neigh}
    e^{-F(u)/\epsilon} \du
\leq C e^{-C / \epsilon^{1/6}} e^{-F(u_\ast)/\epsilon}
    \sqrt{\frac{\epsilon\mu}{2\pi} \prod_{n \geq 1} \frac{2\pi\epsilon}{\lambda_n}}.
\]
\end{lemma}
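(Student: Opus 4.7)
The plan is to combine a Morse-type lower bound for $F$ on the (thickened) stable manifold with the fact that any point of $\mathcal{S}_\epsilon \setminus \neigh$ lies at $H^1$-distance $\gtrsim \epsilon^{5/12}$ from $\manif$. This produces a pointwise bound $e^{-F(u)/\epsilon} \leq e^{-F(u_\ast)/\epsilon} e^{-c/\epsilon^{1/6}}$ on the whole integration domain, and the remaining polynomial volume factors are absorbed into the super-polynomial exponential suppression.

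To establish the Morse bound, I would parametrize $\mathcal{S}^{(M)}$ near $\manif_M$ in the eigenbasis of $\Hess F[u_\ast^{(M)}]$: write $u = u_\ast^{(M)}(t) + v_u + v_s + v_H$, where $v_u$ and $v_s$ are the unstable and stable components of $\projM v$ and $v_H = \projMperp v$ are the high modes. The stable manifold graphs over the stable eigenspace tangentially, so $v_u = \bigO(\norm{v_s}^2)$ on $\mathcal{S}^{(M)}$. Applying \Cref{thm:hard capacity stable manifold} to the $M$-truncated functional gives $F(u) \geq F(u_\ast^{(M)}) + m_1 \norm{v_u + v_s}_{H^1}^2$; adding the non-negative contribution $\sum_{\abs n > M} (n^2 - \gamma\beta) \hat v_H(n)^2$ (which is well-defined once $M^2 > \gamma\beta$) upgrades this to a full $H^1$ bound on $v$. \Cref{thm:critical galerkin} with $M = \epsilon^{-1/6}$ replaces $F(u_\ast^{(M)})$ by $F(u_\ast)$ at negligible cost for $p$ large enough, and thickening from $\mathcal{S}^{(M)}$ to $\mathcal{S}_\epsilon$ degrades the bound only by an additive $\bigO(\rho)$ perturbation with $\rho = \epsilon^{5/12}$.

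For $u \in \mathcal{S}_\epsilon \setminus \neigh$, either $\abs{\check v(-1)} \geq \epsilon^{5/12}$ or $\norm{v_\dagger}_{\Holder^{1/2-}} \geq \epsilon^{5/12} \log(1/\epsilon)$. Via the embedding $\Holder^{1/2-} \hookrightarrow L^2$ and the fact that the eigenfunctions $\check e_n$ have uniformly bounded $L^2$-norm, both cases force $\norm{v}_{H^1} \geq c\epsilon^{5/12}$, hence $F(u) \geq F(u_\ast) + c' \epsilon^{5/6}$ uniformly on the integration domain. I would then parametrize $\mathcal{S}_\epsilon$ by $(t, v_s, v_H, w)$, with $t$ tracing $\manif_M$ (contributing its length $\ell = \bigO(1)$), $v_s$ and $v_H$ lying in the stable directions, and $w$ the transverse unstable coordinate restricted to a layer of width $\rho$. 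Organizing the Gaussian integrals yields a bound of the form $\rho \ell \prod_{n \geq 1} \sqrt{2\pi\epsilon/\lambda_n} \cdot e^{-F(u_\ast)/\epsilon - c'/\epsilon^{1/6}}$. The polynomial ratio $\rho/\sqrt{\epsilon\mu/(2\pi)} = \bigO(\epsilon^{-1/12})$, together with $\ell$, is absorbed into $e^{-C/\epsilon^{1/6}}$ by modestly reducing the constant $C$.

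The main obstacle is keeping every estimate uniform in $N$ while juggling the dual role of the high modes: they must contribute positively to the Morse bound (requiring control of the spectrum of $-\Laplace + \gamma\beta\cos(\beta u_\ast^{(M)})$ on frequencies $\abs n > M$) and simultaneously produce the correct Gaussian normalization $\prod_{n\geq 1}\sqrt{2\pi\epsilon/\lambda_n}$. The choice $M = \epsilon^{-1/6}$ is dictated by this balance — it must be large enough that the Galerkin error of \Cref{thm:critical galerkin} is negligible compared to $\epsilon$ and the perturbation from thickening is dominated, yet small enough that $\mathcal{S}^{(M)}$ still faithfully separates the potential wells.
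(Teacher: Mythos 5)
Your decomposition of $u$ via the $M$-truncated stable manifold and the use of \Cref{thm:hard capacity stable manifold} and \Cref{thm:critical galerkin} with $M=\epsilon^{-1/6}$ do match the paper's strategy, but the argument you sketch for the final integral does not close, and the sticking point is precisely the issue the paper flags in \Cref{rem:hard capacity projection}.

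First, the "pointwise bound $e^{-F(u)/\epsilon}\leq e^{-F(u_\ast)/\epsilon}e^{-c/\epsilon^{1/6}}$ plus polynomial volume" mechanism cannot work: $\mathcal S_\epsilon$ is a layer around a hypersurface in $\R^{2N+1}$, so $\mathcal S_\epsilon\setminus\neigh$ has infinite Lebesgue measure, and there is no polynomial volume factor to absorb. You must keep the Gaussian quadratic form from the Morse bound alive in the integrand, and extract the suppression from the fact that the Gaussian measure of $\mathcal S_\epsilon\setminus\neigh$ is tiny -- i.e., from concentration of measure as in \Cref{thm:well concentration}, applied to the complement of the $\Holder^{1/2-}$-ball $D_u^\dagger$ (not from a pointwise lower bound on $F$). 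This is also where you have silently dropped the $\log(1/\epsilon)$ factor: after rescaling by $\sqrt\epsilon$, the radius of the ball you exclude is $\epsilon^{-1/12}\log(1/\epsilon)$, and the $\log$ is essential for what follows.

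Second, you acknowledge but do not resolve the determinant mismatch. The Morse bound of \Cref{thm:hard capacity stable manifold} gives the quadratic form $m_1(1+k^2)$ on the $|k|\leq M$ modes, not $\lambda_k$, so the Gaussian integral with the correct normalization $\prod_{n\geq1}\sqrt{2\pi\epsilon/\lambda_n}$ costs a factor comparable to $m_1^{-M}=\exp(C\epsilon^{-1/6})$. (Without $M$-truncation you would get $m_1^{-N}$, which blows up; this is exactly the oversight in \cite[Eq.~(6.25)]{berglund_sharp_2013} that the paper corrects.) The point of $M=\epsilon^{-1/6}$ is not that "$\mathcal S^{(M)}$ still separates the wells" -- it separates for any $M$ -- but that the concentration suppression $\exp(-C\log(1/\epsilon)/\epsilon^{1/6})$ from the previous paragraph must beat $\exp(C\epsilon^{-1/6})$, and it only does so thanks to the extra $\log(1/\epsilon)$ factor. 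Your write-up loses the $\log$ and never explains how the $m_1^{-M}$ is cancelled, so the final estimate does not follow. Finally, "adding $\sum_{|n|>M}(n^2-\gamma\beta)\hat v_H(n)^2$" glosses over the cross terms produced by the cosine between $u_\perp$ and $\tilde u + u_{\mathrm{us}}$; the paper kills them via a trigonometric expansion and the orthogonality $\int(\Laplace u_\ast^{(M)})u_\perp=0$, which is a genuine step, not a triviality.
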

\begin{proof}
A general $u \in \mathcal S_\epsilon$ can be written as
$\tilde u + u_{\mathrm{us}} + u_\perp$,
where $\tilde u \in \mathcal S_M$, $\projM u_\perp = 0$, and
$u_{\mathrm{us}}$ is normal to $\mathcal S_M$ at $\tilde u$ in the $(2M+1)$-dimensional space.
Since $\nabla_{\mathrm{us}} F = 0$ on $\mathcal S_M$
and the Hessian of $F$ is bounded from below by $-2\pi\gamma/\beta$, we have
\begin{equation}
F(\tilde u + u_{\mathrm{us}}) \geq F(u_{\ast}^{(M)})
+ \inf_{u_{\ast}^{(M)} \in \manif_M} m_1 \norm{\nabla (u_{\ast}^{(M)} - \tilde u)}_{L^2}^2
- C \rho^2.
\end{equation}
To estimate the effect of $u_\perp$ on the nonlinear term, we begin with the decomposition
\begin{equation}
\begin{split}
&\mathrel{\phantom{=}} \cos(\beta[\tilde u + u_{\mathrm{us}} + u_\perp])\\
&= \cos(\beta[\tilde u + u_{\mathrm{us}}])\cos(\beta u_\perp)
    - \sin(\beta[\tilde u + u_{\mathrm{us}}])\sin(\beta u_\perp)\\
&\geq \cos(\beta[\tilde u + u_{\mathrm{us}}]) - C u_\perp^2
    - \sin(\beta[\tilde u + u_{\mathrm{us}}])\sin(\beta u_\perp).
\end{split}
\end{equation}
The last term is further split into
\begin{align}
&\mathrel{\phantom{=}} \sin(\beta[\tilde u + u_{\mathrm{us}}])\sin(\beta u_\perp) \notag\\
&= \sin(\beta u_{\ast}^{(M)}) \sin(\beta u_\perp)
    + [\sin(\beta[\tilde u + u_{\mathrm{us}}]) - \sin(\beta u_{\ast}^{(M)})] \sin(\beta u_\perp)\\
&= -\frac 1 \gamma (\Laplace u_{\ast}^{(M)}) [\beta u_\perp + \bigO(u_\perp^2)]
    + [\sin(\beta[\tilde u + u_{\mathrm{us}}]) - \sin(\beta u_{\ast}^{(M)})] \sin(\beta u_\perp) \notag.
\end{align}
Here $(\Laplace u_{\ast}^{(M)}) u_\perp$ vanishes in the integration by orthogonality.
The other terms are bounded by
\begin{equation}
\frac{C_1}{\delta} u_\perp^2 + \delta \abs{[\tilde u + u_{\mathrm{us}}] - u_{\ast}^{(M)}}^2.
\end{equation}
By choosing $\delta = m_1 / 2$, we then have
\begin{equation}
\begin{split}
F(\tilde u + u_{\mathrm{us}} + u_\perp)
&\geq F(u_\ast)
    + \frac{m_1}{2} \inf_{u_{\ast}^{(M)}} \norm{u_{\ast}^{(M)} - \tilde u}_{H^1}^2
    - C \rho^2\\
    &\qquad + \frac{\norm{\nabla u_\perp}_{L^2}^2}{2} - C_1 \norm{u_\perp}_{L^2}^2
    - C \smallnorm{u_\ast - u_{\ast}^{(M)}}_{H^1}.
\end{split}
\end{equation}
We assume $\epsilon$ to be small enough that $M^2 - 2C_1 > 0$,
ensuring that we get a well-defined Gaussian for $u_\perp$.
The last term comes from approximating $F(u_\ast)$ by $F(u_{\ast}^{(M)})$;
by \Cref{thm:critical galerkin} it is of arbitrary polynomial order in $M$, say $\epsilon^2$.

In computing the integral, we again use Fubini to factorize it as
\begin{equation}\label{eq:hard capacity upper almost there}
\begin{split}
&\mathrel{\phantom{=}} \ell \int_{\R^{2N}} \I_{\mathcal S_\epsilon \setminus \neigh}
    \exp\!\left(-\frac{F(u_\ast + v)}{\epsilon}\right) \diff v\\
&\lesssim e^{-F(u_\ast)/\epsilon + C \epsilon^{-1/6} + C\epsilon}\\
    &\quad\qquad \sqrt{\prod_{0 < \abs k \leq M} \frac{2\pi\epsilon}{m_1 (1 + k^2)}
        \prod_{M < \abs j \leq N} \frac{2\pi\epsilon}{j^2 - 2C_1}}
    \int_{\R^{2N}} \I_{\mathcal S_\epsilon \setminus \neigh} \diff \tilde\rho(v),
\end{split}
\end{equation}
where $\tilde\rho$ is a Gaussian with density
\begin{equation}
v \mapsto \exp\!\left( -\frac{1}{2\epsilon} \sum_{\abs k \leq M} m_1 (1+k^2) \hat v(k)^2
    - \frac{1}{2\epsilon} \sum_{\abs j > M} (j^2 - C) \hat v(j)^2 \right).
\end{equation}
The square root term can be bounded with eigenvalue asymptotics as
\begin{equation}
C m_1^{-M} \sqrt{\prod_{n \geq 1} \frac{2\pi\epsilon}{\lambda_n}}
= C_1 \exp(C_2 \epsilon^{-1/6}) \sqrt{\prod_{n \geq 1} \frac{2\pi\epsilon}{\lambda_n}},
\end{equation}
where the constants are independent of $M$.
Adapting \Cref{thm:well concentration}, we see that the integral is bounded by
\begin{equation}
\exp\!\left( C_1 - \frac{C_2 \log(1/\epsilon)}{\epsilon^{1/6}} \right)\!.
\end{equation}
When $\epsilon$ is small enough,
this cancels the exponential terms in~\eqref{eq:hard capacity upper almost there}.
\end{proof}

The proof of the lower bound follows \cite[Proposition~6.3]{berglund_sharp_2013}:
we use the same variational principle as in \Cref{thm:easy capacity lower},
but now the boundary values for the test function come from a Fre\u{\i}dlin--Wentzell estimate.

\begin{proposition}[Lower bound for capacity]\label{thm:hard capacity lower}
We have
\[
\capa(A, B)
\geq 2\ell
    \sqrt{\frac{\epsilon\mu}{2\pi} \prod_{n \geq 1} \frac{2\pi \epsilon}{\lambda_n}}
    e^{-F(u_\ast)/\epsilon} (1 - C\sqrt{\epsilon}\log(1/\epsilon)^3),
\]
when $\epsilon < \epsilon_0$ and $N > N_0 \epsilon^{-3}$ for certain $\epsilon_0, N_0 > 0$.
\end{proposition}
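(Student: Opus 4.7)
The plan is to adapt the strategy of \Cref{thm:easy capacity lower} to the degenerate saddle, using a tubular coordinate system around $\manif$ and bootstrapping the boundary values of the equilibrium potential $h_{A,B}$ from the large-deviation estimate of \Cref{sec:ldp}. By symmetry it suffices to treat one of the two saddle manifolds, say $\manif_+$.

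First I would introduce a tubular neighbourhood
\[
\neigh \coloneqq \left\{ u_t + s\, \check e_{-}(t) + v_\dagger \colon t \in [0, 2\pi),\; \abs{s} \leq s_0,\; v_\dagger \in D^\dagger \right\},
\]
where $u_t = \mathcal T_t u_\ast$ parametrises $\manif_+$, $\check e_{-}(t)$ is the unstable eigenvector of $\Hess F[u_t]$, and $D^\dagger$ is a slab in the stable directions of size of order $\sqrt{\epsilon}\log(1/\epsilon)$ in $\Holder^{1/2-}$ (analogous to $B^\perp_\epsilon$ in \Cref{sec:well}). For fixed $s_0$ small enough, this map is a diffeomorphism onto its image with Jacobian $1 + \bigO(s + \smallnorm{v_\dagger})$, and $s = \pm s_0$ defines two faces of $\neigh$ lying in the domains of attraction of the two wells.

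Next, using the second identity in \Cref{def:capacity} together with the coordinatewise bound $\abs{\nabla h_{A,B}}^2 \geq \abs{\partial_s h_{A,B}}^2 (1 - C s_0)$, I would restrict the integration to $\neigh$ and apply Fubini:
\[
\capa(A, B) \geq \epsilon (1 - C s_0) \int_0^{2\pi} \! \int_{D^\dagger} \! \int_{-s_0}^{s_0} \abs{\partial_s h_{A,B}}^2 e^{-F(u_t + s\check e_- + v_\dagger)/\epsilon} \diff s \diff v_\dagger \dt.
\]
For fixed $(t, v_\dagger)$ the innermost integral is bounded below by the one-dimensional Dirichlet energy with boundary values $h_{A,B}|_{s = \pm s_0}$. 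By the Cauchy--Schwarz/Euler--Lagrange computation used in \Cref{thm:easy capacity lower}, this quantity equals $(h_{A,B}(s_0) - h_{A,B}(-s_0))^2 / \int_{-s_0}^{s_0} e^{F/\epsilon}\diff s$.

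The key input is a Fre\u{\i}dlin--Wentzell estimate: the equilibrium potential $h_{A,B}$ is $1 - \bigO(\epsilon^p)$ on the $+s_0$ face and $\bigO(\epsilon^p)$ on the $-s_0$ face, uniformly for $(t, v_\dagger)$ in the bulk. This is the analogue of the bound used in \cite[Proposition~6.3]{berglund_sharp_2013} and requires the cutoff $N > N_0 \epsilon^{-3}$ to control the stochastic convolution uniformly in $N$; this is the step I expect to be the main obstacle. Granted this, the numerator becomes $1 - \bigO(\epsilon^p)$, and it remains to estimate the denominator from above by expanding $F$ via~\eqref{eq:potential decomposition generic}:
\[
F(u_t + s\check e_- + v_\dagger) = F(u_\ast) - \tfrac{\mu}{2} s^2 + \tfrac 1 2 \sum_{n \geq 1} \lambda_n \check v(n)^2 + \bigO(s^3 + \norm{v_\dagger}^3),
\]
with the cubic remainder controlled pointwise on $D^\dagger$ by $\sqrt{\epsilon}\log(1/\epsilon)^3$ exactly as in \Cref{thm:hard capacity upper}.

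Assembling, the $s$-integral gives the Gaussian factor $\sqrt{2\pi\epsilon/\mu}$, the integration over $D^\dagger$ contributes $\prod_n \sqrt{2\pi\epsilon/\lambda_n}$ (again via Gaussian concentration as in \Cref{thm:well concentration} to pay for restricting to $D^\dagger$), and the $t$-integral produces the length $\ell = 2\pi \norm{\partial_x u_\ast}_{L^2}$ by \Cref{thm:hard capacity manifold length}. Combining yields
\[
\capa(A, B) \geq 2\ell \sqrt{\frac{\epsilon\mu}{2\pi} \prod_{n \geq 1} \frac{2\pi\epsilon}{\lambda_n}} e^{-F(u_\ast)/\epsilon}\bigl(1 - C\sqrt{\epsilon}\log(1/\epsilon)^3\bigr),
\]
the factor $2$ coming from summing the contributions of $\manif_+$ and $\manif_-$. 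The dominant error is the cubic Taylor remainder; all other errors (Jacobian, boundary values of $h_{A,B}$, truncation of $D^\dagger$) are $\bigO(\epsilon^p)$ for $p$ as large as we like by choosing $c_0$ in $D^\dagger$ suitably, so they are absorbed.
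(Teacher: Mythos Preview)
Your approach is essentially the paper's: tubular coordinates about $\manif$, restrict $|\nabla h_{A,B}|^2$ to the unstable direction, solve the resulting one-dimensional Dirichlet problem with boundary data supplied by the large-deviation estimate of \Cref{sec:ldp}, and finish by a Laplace evaluation with Gaussian concentration on the stable modes. Two quantitative remarks: the LDP input (\Cref{thm:ldp equilibrium global}) only yields $h_{A,B}\in\{O(\epsilon),1-O(\epsilon)\}$ on the $\pm s_0$ faces, and the Jacobian contributes $O(\sqrt\epsilon)$ (the $s$-integrand concentrates on $|s|\lesssim\sqrt\epsilon$), so neither is $O(\epsilon^p)$---but both are still dominated by the cubic-remainder error $\sqrt\epsilon\log(1/\epsilon)^3$. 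Also, $|\nabla h|^2\ge|\partial_s h|^2$ holds exactly since $\check e_-$ is a unit vector, so your $(1-Cs_0)$ factor is not needed; and to invoke \Cref{thm:ldp equilibrium global} uniformly over the $\pm s_0$ faces the paper defines $D^\dagger$ in the slightly stronger norm $\Holder^{1/2-\kappa/2}$, so that the faces are compact in $\Holder^{1/2-\kappa}$.
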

\begin{proof}
We need to modify the neighbourhood slightly to use the large-deviation estimate.
We set $\neigh' \coloneqq \{ u + v \colon u \in \manif, v \in D_u' \}$, where
\begin{equation}
\begin{gathered}
D_u' \coloneqq \{ v \in L^2(\Tor) \colon \abs{\check v(-1)} < \delta,
    v_\dagger \in D_u^\dagger \},\\
D_u^\dagger \coloneqq \big\{
    v_\dagger \in L^2(\Tor) \colon
    \norm{v_\dagger}_{\Holder^{1/2-\kappa/2}} \leq \sqrt{c_0 \epsilon} \log(1/\epsilon)
\big\}.
\end{gathered}
\end{equation}
The set $D_u^\dagger$ can be taken to decay faster than in \Cref{thm:hard capacity upper}.

We use the characterization of the capacity with the equilibrium potential.
This permits us to again use the symmetry.
We have
\begin{equation}\label{eq:hard capacity lower variational}
\begin{split}
&\mathrel{\phantom{=}} \capa(A,B)\\
&\geq 2\epsilon \int_\manif \int_{D'_u}
    \abs{\partial_{\check v(-1)} h_{A,B}(u + v)}^2 \exp(-F(u+v)/\epsilon) \diff v \du\\
&= 2\epsilon \ell \int_{D^\dagger}
    \int_{-\delta}^\delta \abs{\partial_{\check v(-1)} h_{A,B}(u_\ast+v)}^2 \exp(-F(u_\ast+v)/\epsilon)
    \diff \check v(-1) \diff v_\dagger\\
&\geq 2\epsilon \ell \int_{D^\dagger}
    \inf_{\tilde g} \int_{-\delta}^\delta \tilde g'(t)^2 \exp(-F(u_\ast + v_\dagger + t \check e_{-})/\epsilon)
    \dt \diff v_\dagger.
\end{split}
\end{equation}
The infimum is taken over $\tilde g \in H^1(\R)$ such that
$\tilde g(\pm\delta) = h_{A,B}(u_\ast + v_\dagger \pm \delta\check e_{-})$.
We now use \Cref{thm:ldp equilibrium global} and the symmetry $h_{A,B} = 1 - h_{B,A}$
to estimate $h_{A,B}$ at the endpoints.
By the compact embedding of Hölder spaces,
the set $\neigh^\dagger_u$ is compact in $\Holder^{1/2-\kappa}$.
Therefore for any $\delta > 0$ there exist $\epsilon_0$ and $N_0$ such that
\begin{equation}
\tilde g(-\delta) < 4 \epsilon
\quad\text{and}\quad
\tilde g(\delta) > 1 - 4 \epsilon
\end{equation}
when $\epsilon < \epsilon_0$ and $N > N_0 \epsilon^{-3}$.

By the same Euler--Lagrange argument as in \Cref{thm:easy capacity lower},
the minimizer in \eqref{eq:hard capacity lower variational} then satisfies
\begin{equation}
\tilde g'(x) = \frac{(1-8\epsilon) e^{F(u_\ast + v_\dagger + x\check e_{-})/\epsilon}}
    {\int_{-\delta}^{\delta} e^{F(u_\ast + v_\dagger + y\check e_{-})/\epsilon} \dt },
\quad -\delta \leq x \leq \delta.
\end{equation}
This leaves us again to estimate
\begin{equation}\label{eq:hard capacity lower after el}
2 \epsilon \ell (1 - 8\epsilon) \int_{D^\dagger} \left[
    \int_{-\delta}^{\delta} e^{F(u_\ast + v_\dagger + y\check e_{-})/\epsilon} \dt
\right]^{-1} \diff v_\dagger.
\end{equation}
By \eqref{eq:potential decomposition generic} and orthogonality of the eigenspaces
we may extract the stable modes:
\begin{equation}
\begin{split}\label{eq:hard capacity lower extracted}
\eqref{eq:hard capacity lower after el}
&\geq 2 \epsilon \ell (1-8\epsilon) \int_{D^\dagger} \!\! \exp\!\left(
    -\frac{F(u_\ast)}{\epsilon} - \frac{1}{2\epsilon} (\cova v_\dagger, v_\dagger)
        - \frac 1 \epsilon \int_\Tor \bigO(v_\dagger^3) \dx \right)\\
&\hspace{8em}\left[
    \int_{-\delta}^{\delta} \exp\!\left( -\frac{\mu t^2}{2\epsilon} + \frac 1 \epsilon \bigO(t^3) \right) \dt
\right]^{-1} \diff v_\dagger.
\end{split}
\end{equation}
Since $\bigO(v_\dagger^3) \lesssim \epsilon^{3/2} \log(1/\epsilon)^3$,
the stable modes can again be estimated with \Cref{thm:well concentration}.
The error is of order $1 - \sqrt\epsilon \log(1/\epsilon)^3$.

For the remaining bracket we repeat the argument of \Cref{thm:easy capacity lower}.
Let $\rho_\delta = \delta \sqrt\epsilon \log(1/\epsilon)$.
We bound the integral as
\begin{equation}
\leq \int_{-\rho_\delta}^{\rho_\delta}
    \exp\!\left( -\frac{\mu t^2}{2\epsilon} + \frac 1 \epsilon \bigO(t^3) \right) \dt
+ 2 \int_{\rho_\delta}^{\infty}
    \exp\!\left( -\frac{(\mu - C\delta) t^2}{2\epsilon} \right) \dt.
\end{equation}
In the first term the error is of order $\sqrt{\epsilon} \log(1/\epsilon)^3$.
For the second term we choose $\delta$ small enough that $\mu - C\delta > 0$.
Then we get by Gaussian concentration that it too is of order $\sqrt{\epsilon} \log(1/\epsilon)^3$.
Hence
\begin{equation}
\eqref{eq:hard capacity lower extracted}
\geq 2\epsilon\ell e^{-F(u_\ast)/\epsilon}
    \sqrt{\prod_{n \geq 1} \frac{2\pi \epsilon}{\lambda_n}}
    \sqrt{\frac{\mu}{2\pi\epsilon}} (1 - C\sqrt{\epsilon}\log(1/\epsilon)^3),
\end{equation}
which finishes the proof.
\end{proof}

\section{Ratio of determinants}\label{sec:prefactor}

The determinant of an operator like $-\Laplace - \gamma\beta$ is not well-defined on its own,
but it can be made sense of relative to another operator as in
\begin{equation}
\frac{\det(-\Laplace - \gamma\beta)}{\det(-\Laplace + \gamma\beta)}.
\end{equation}
These \emph{functional determinants} have been studied
by Gel'fand and Yaglom \cite{gelfand_integration_1960}
and Forman \cite{forman_functional_1987}.

In the $\gamma\beta > 1$ case, we are further interested in the ratio with the zero eigenvalue removed.
Such a modified determinant was studied by McKane and Tarlie \cite{mckane_regularization_1995}.
In this section we follow their computational recipe.

Let us first introduce the method with the simpler $\gamma\beta < 1$ case.
As the eigenvalues of $-\Laplace \pm \gamma\beta$ have explicit expressions,
the ratio could also be computed with product formulas (see \cite[Eq.~2.56]{berglund_sharp_2013}).

\begin{theorem}[{{\cite[Section~2]{mckane_regularization_1995}}}]
Let us consider the operators $L = \Laplace + P(x)$ and $\hat L = \Laplace + \hat P(x)$,
where $P$ and $\hat P$ are real-valued functions.
Write the boundary conditions on $[0, 2\pi]$ with matrices $M$ and $N$:
\[
M \begin{pmatrix} u(0) \\ u'(0) \end{pmatrix}
+ N \begin{pmatrix} u(2\pi) \\ u'(2\pi) \end{pmatrix}
= \begin{pmatrix} 0 \\ 0 \end{pmatrix}.
\]
Let $y_i$ be two linearly independent solutions to $L y_i = 0$,
and define
\[
H(x) = \begin{pmatrix}
    y_1(x) & y_2(x)\\
    y_1'(x) & y_2'(x)
\end{pmatrix}.
\]
Repeat the same for $\hat L \hat y_i = 0$ to define $\hat H$.
Then
\[
\frac{\det(\Laplace + P(x))}{\det(\Laplace + \hat P(x))}
= \frac{\det(M + N H(2\pi) H(0)^{-1})}{\det(M + N \hat H(2\pi) \hat H(0)^{-1})}.
\]
\end{theorem}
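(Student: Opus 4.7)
The plan is to reduce the identity to an analytic-continuation argument in a spectral parameter, comparing both sides via their zeros, growth order, and large-$|\lambda|$ asymptotics. I introduce $L_\lambda := L - \lambda$ and define
\[
\Phi_L(\lambda) := M + N H_\lambda(2\pi) H_\lambda(0)^{-1},
\]
where $H_\lambda$ is built from a fundamental pair of solutions of $L_\lambda u = 0$; analogously define $\Phi_{\hat L}$. Since $H_\lambda$ depends holomorphically on $\lambda$ by standard ODE perturbation theory, $\det \Phi_L$ is an entire function of $\lambda$.

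Next I would verify that $\det \Phi_L(\lambda_0) = 0$ precisely when $\lambda_0$ is an eigenvalue of $L$ with the prescribed boundary conditions: a nonzero combination $c_1 y_1 + c_2 y_2$ at $\lambda = \lambda_0$ satisfies the boundary conditions if and only if $(c_1, c_2)^\top$ lies in the kernel of $\Phi_L(\lambda_0)$. A standard Wronskian-derivative argument matches the orders of the zeros of $\det \Phi_L$ to the algebraic multiplicities of the eigenvalues. On the other side, the zeta-regularized determinant $\det(L - \lambda)$ admits a Hadamard factorization $c_L \prod_n (1 - \lambda/\lambda_n)$, since Weyl's eigenvalue asymptotic $\lambda_n \sim n^2$ makes the product entire of order $1/2$; it has exactly the same zeros with the same multiplicities.

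Both $\det \Phi_L(\lambda)$ and the Hadamard product are entire of order at most $1/2$ with identical zero sets, so their ratio is an entire, zero-free function of order $\leq 1/2$, hence a constant $c_L$ by Hadamard's factorization theorem. The analogous statement holds for $\hat L$, producing a constant $c_{\hat L}$. To pin down the ratio of the two constants, I would carry out a WKB-type analysis of $H_\lambda(x)$ as $\lambda \to -\infty$ along the negative real axis. The leading-order WKB amplitude and phase depend only on the growth of $\lambda$ and not on the bounded potentials $P$ or $\hat P$, so
\[
\frac{\det \Phi_L(\lambda)}{\det \Phi_{\hat L}(\lambda)} \longrightarrow 1 \quad \text{as } \lambda \to -\infty;
\]
matching this to the asymptotics of the Hadamard products forces $c_L = c_{\hat L}$, and the stated identity follows.

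The main obstacle I expect is the WKB step: one needs a uniform asymptotic expansion of both linearly independent fundamental solutions on $[0, 2\pi]$, valid as $|\lambda|$ grows in a complex sector, and precise enough to isolate the potential-independent exponential factor. A secondary technicality is the identification of the zeta-regularized determinant with the Hadamard product up to an explicit constant, for which one must invoke the standard functional-determinant machinery for Sturm--Liouville operators. Both steps are classical, and our specific applications in~\eqref{eq:result easy prefactor} and~\eqref{eq:determinant hard} only require the resulting formula applied to constant and $2\pi$-periodic potentials, respectively.
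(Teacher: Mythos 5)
The paper does not prove this theorem---it is stated as a citation to McKane and Tarlie, so there is no ``paper's own proof'' to compare against. Your sketch is the standard spectral-parameter route used in that line of work (Forman; Burghelea--Friedlander--Kappeler; Kirsten--McKane): introduce $\lambda$, identify $\det\Phi_L(\lambda)$ as an entire function of order $1/2$ vanishing exactly at the spectrum, invoke Hadamard factorization together with the Lerch-type identity $\det(L-\lambda)=\det L\,\prod_n(1-\lambda/\lambda_n)$, and fix the remaining constant by asymptotics as $\lambda\to-\infty$. The outline is sound, and you correctly flag the two real technicalities (uniform WKB asymptotics for the fundamental system and the identification of the zeta-regularized determinant with its Hadamard product). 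One small point worth making explicit: to conclude $c_L=c_{\hat L}$ you need not only the WKB statement $\det\Phi_L(\lambda)/\det\Phi_{\hat L}(\lambda)\to 1$, but also that the ratio of zeta-determinants $\det(L-\lambda)/\det(\hat L-\lambda)\to 1$ along the same ray; this follows from short-time heat-kernel asymptotics, since $P-\hat P$ is a bounded (zeroth-order) perturbation, so the difference of heat traces is $O(t)$ as $t\to 0^+$.
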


For the periodic boundary condition we can choose
$M = \operatorname{Id}$ and $N = -\operatorname{Id}$.
Two independent solutions to $\Laplace y_i = -\gamma\beta y_i$
are given by $y_1(x) = \sin(\sqrt{\gamma\beta} x)$
and $y_2(x) = \cos(\sqrt{\gamma\beta} x)$.
It is then a straightforward computation to find
\begin{equation}
\det(M + N H(2\pi) H(0)^{-1})
= 4 \sin(\pi \sqrt{\gamma\beta})^2.
\end{equation}

The same computation can be carried out for $\Laplace \hat y_i = \gamma\beta$,
where $\sin$ is replaced by $\sinh$ throughout
and the resulting determinant has negative sign.
Therefore
\begin{equation}
\frac{\det(-\Laplace - \gamma\beta)}{\det(-\Laplace + \gamma\beta)}
= -\frac{\sin(\pi \sqrt{\gamma\beta})^2}{\sinh(\pi \sqrt{\gamma\beta})^2}.
\end{equation}
Together with the explicitly known negative eigenvalue,
this implies a simple expression for the prefactor in \Cref{thm:main transition time easy}.

\bigskip\noindent
For the $\gamma\beta > 1$ case,
we need a notion of determinant with the single zero eigenvalue removed.
We denote it by $\det'$.
McKane and Tarlie \cite{mckane_regularization_1995} adapt the argument by
modifying the boundary conditions and computing
\begin{equation}
\lim_{\epsilon \to 0} \frac{\det(M_\epsilon + N_\epsilon H(2\pi) H(0)^{-1})}{\lambda_\epsilon},
\end{equation}
where the eigenvalue $\lambda_\epsilon \to 0$.

\begin{theorem}[{{\cite[Section~5]{mckane_regularization_1995}}}]
With the previous notation,
\[
\frac{\det'(\Laplace + P(x))}{\det(\Laplace + \hat P(x))}
= \frac{(y_2(2\pi) - y_2(0)) \norm{y_1}_2^2}
    {y_1(0) \det H(0) \det(M + N \hat H(2\pi) \hat H(0)^{-1})}.
\]
\end{theorem}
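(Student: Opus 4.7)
The plan is to follow the regularization approach of McKane and Tarlie. I would introduce a one-parameter family of boundary conditions $(M_\epsilon, N_\epsilon)$ with $(M_0, N_0) = (M, N)$, chosen so that the zero eigenvalue of $L$ moves to a nonzero $\lambda_\epsilon$ that vanishes linearly as $\epsilon \to 0$; for the periodic case a natural choice is $M_\epsilon = I$, $N_\epsilon = -(1+\epsilon) I$. The previous non-degenerate theorem applies to the perturbed operator $L^\epsilon$ and gives
\[
\frac{\det(L^\epsilon)}{\det(\hat L)} = \frac{\det(M_\epsilon + N_\epsilon H(2\pi) H(0)^{-1})}{\det(M + N \hat H(2\pi) \hat H(0)^{-1})}.
\]
Since only the smallest eigenvalue tends to zero while all others depend smoothly on $\epsilon$, one has $\det(L^\epsilon) = \lambda_\epsilon \det'(L) (1 + O(\epsilon))$, so $\det'(L)/\det(\hat L)$ equals the leading $\epsilon$-coefficient of $\det(M_\epsilon + N_\epsilon A)$ divided by $\lambda_\epsilon$, normalized by the unchanged $\hat L$ denominator, where $A := H(2\pi)H(0)^{-1}$.

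Next I would expand both $\det(M_\epsilon + N_\epsilon A)$ and $\lambda_\epsilon$ to first order in $\epsilon$. Periodicity of the zero mode $y_1$ makes $(y_1(0), y_1'(0))^T$ a fixed vector of $A$, so $I - A$ has rank one and $\det(M + NA) = 0$. Choosing a basis aligned with this eigenvector and the complementary direction coming from the non-periodic solution $y_2$, a direct $2 \times 2$ computation using the Wronskian identity $\det H(x) = \det H(0)$ produces a linear-in-$\epsilon$ leading term in which $y_2(2\pi) - y_2(0)$, $y_1(0)$, and $\det H(0)$ all appear. For $\lambda_\epsilon$ I would apply first-order eigenvalue perturbation theory: the perturbed eigenfunction $y_1^\epsilon$ solves $L y_1^\epsilon = \lambda_\epsilon y_1^\epsilon$ with the deformed boundary condition, and pairing against $y_1$ via Green's identity turns the interior pairing into a boundary quantity linear in $\epsilon$, with $\norm{y_1}_2^2$ appearing as the normalization on the left-hand side. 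Combining the two $O(\epsilon)$ expansions and passing to the limit, the $\epsilon$ factors cancel and the claimed formula emerges.

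The main obstacle is the bookkeeping step: verifying that the adapted-basis expansion of $\det(M_\epsilon + N_\epsilon A)$ and the boundary-pairing expression for $\lambda_\epsilon$ have precisely the matching structure needed to produce the factor $(y_2(2\pi) - y_2(0)) \norm{y_1}_2^2 / (y_1(0) \det H(0))$, with no residual dependence on $y_1'(0)$ or on the specific perturbation family. This requires systematic use of the Wronskian relations between $y_1$ and $y_2$ and a consistent sign convention when extracting the linear-in-$\epsilon$ determinant. Once this identification is settled, a standard argument shows that the shifts in the other eigenvalues contribute only $O(\epsilon)$ corrections that drop out in the limit, so the independence from the choice of regularization family can be checked a posteriori.
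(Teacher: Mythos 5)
The paper itself does not prove this statement; it cites it directly to McKane and Tarlie, only sketching in the preceding sentence that their method modifies the boundary conditions and computes $\lim_{\epsilon\to 0}\det(M_\epsilon + N_\epsilon H(2\pi)H(0)^{-1})/\lambda_\epsilon$. Your overall strategy — perturb the boundary condition matrices, invoke the non-degenerate Gel'fand--Yaglom/Forman formula for $L^\epsilon$, factor $\det(L^\epsilon) \sim \lambda_\epsilon \det'(L)$, expand both sides, and cancel — is the right skeleton and matches the paper's brief description of the McKane--Tarlie argument.

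However, your specific regularization $M_\epsilon = I$, $N_\epsilon = -(1+\epsilon)I$ is degenerate and would not produce the claimed formula. Since $L = \Laplace + P$ has no first-derivative term, the Wronskian $\det H(x)$ is constant, so $\det A = 1$ for $A := H(2\pi)H(0)^{-1}$; together with the eigenvalue $1$ carried by the periodic solution $y_1$, this forces \emph{both} eigenvalues of $A$ to equal $1$, i.e., $A$ is a nontrivial $2\times 2$ Jordan block. Then
\[
\det\bigl(I - (1+\epsilon)A\bigr) = \epsilon^2
\]
\emph{exactly}, for every $\epsilon$, with no dependence at all on the off-diagonal Jordan entry of $A$ — which is precisely where the factor $y_2(2\pi) - y_2(0)$ must enter. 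So your claimed linear-in-$\epsilon$ leading term is zero, and the information needed to recover the right-hand side has been washed out by the isotropic scaling. The "check independence of the family a posteriori" step cannot rescue a non-generic choice for which the required data never appears. A regularization that breaks the simultaneous scaling of both monodromy eigendirections is needed, e.g., a rank-one (or otherwise non-scalar) perturbation of $N$ so that $\det(M_\epsilon + N_\epsilon A)$ picks up the nilpotent part of $A - I$ at order $\epsilon$; one then still has to carry out the $\lambda_\epsilon$ perturbation calculation (via the Green's-identity boundary pairing you mention) and verify the $\smallnorm{y_1}_2^2$, $y_1(0)$, and $\det H(0)$ factors emerge as stated, which you correctly flag as the nontrivial bookkeeping.
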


The following argument follows \cite[Appendix]{mckane_regularization_1995}.
Let us recall from \eqref{eq:stationary solution} that the stationary solution is
\begin{equation}
u_\ast(x) = \frac{1}{\beta} \left[
    \pi + 2 \arcsin(\sqrt m \cd(\sqrt{\gamma\beta} x, m)) \right],
\quad m = \cos(\beta u(0)/2)^2.
\end{equation}
Here $m$ is chosen so that the period $4K(m)/\sqrt{\gamma\beta}$ is equal to $2\pi$.
It is convenient to keep $m$ in the notation throughout,
even though it is a function of $\sqrt{\gamma\beta}$.

We are interested in the operator $\cova = -\Laplace + \gamma\beta \cos(\beta u_\ast)$.
Two linearly independent solutions to $\cova y_i = 0$
are given by $y_1 = \partial_x u_\ast$ and $y_2 = \partial_m u_\ast$.
Both functions are $4K(m)/\sqrt{\gamma\beta}$-periodic and can be translated suitably.
In particular $y_1(0) = 0$, so we need to choose a different base point.

It is then a direct computation to verify
\begin{equation}
\begin{gathered}
y_1\!\left( \frac{K(m)}{\sqrt{\gamma\beta}} \right)
    = -2\sqrt{\frac{\gamma m}{\sqrt\beta}},\quad
y_1'\!\left( \frac{K(m)}{\sqrt{\gamma\beta}} \right) = 0,\\
y_2\!\left( \frac{K(m)}{\sqrt{\gamma\beta}} \right)
    = \frac{E(m) + (m-1) K(m)}{\beta (1-m) \sqrt m},\quad
y_2'\!\left( \frac{K(m)}{\sqrt{\gamma\beta}} \right)
    = -\sqrt{\frac{\gamma}{\beta m}},\\
y_2\!\left( \frac{5K(m)}{\sqrt{\gamma\beta}} \right)
    = 5 y_2\!\left( \frac{K(m)}{\sqrt{\gamma\beta}} \right),
\end{gathered}
\end{equation}
and the $L^2$ norm
\begin{equation}
\norm{y_1}_{L^2(\Tor)}^2 = \frac{16\gamma^{1/2}}{\beta^{3/2}}
    (E(m) - (1-m) K(m)).
\end{equation}
Here $E(m)$ is the complete elliptic integral of the second kind.
All together, these give
\begin{equation}\label{eq:determinant hard}
\frac{\det' \cova}{\det(-\Laplace + \gamma\beta)}
= -\frac{4 (E(m) + (m-1) K(m))^2}{\gamma\beta (1-m) m \sinh(\pi \sqrt{\gamma\beta})^2}.
\end{equation}
As the single negative eigenvalue is strictly nonzero and finite,
the prefactor in~\eqref{thm:main transition time hard} is well-defined.

\begin{remark}
Computing the negative eigenvalue $-\mu$ is an interesting question that we do not address here.
In the double-well case, asymptotics for it as the domain size $L \to \infty$
are given in \cite[Section~3.2.2]{rolland_computing_2016}.
\end{remark}

\appendix
\section{Large deviations}\label{sec:ldp}

We use the Fre\u{\i}dlin--Wentzell approach to bootstrap
a sharp capacity estimate in \Cref{thm:hard capacity lower}.
In this appendix we sketch the necessary results
without aiming for full generality or sharpness.

We begin with the large deviation principle inspired by \cite{faris_large_1982,freidlin_random_1988}.
Instead of using the topology of continuous functions $\Holder([0,T] \times \Tor)$,
we upgrade to the more natural $\Holder^{1/2-}(\Tor)$ topology in space.
Because the constants depend implicitly on the relevant sets,
we need to keep the sets fixed instead of letting them shrink as $\epsilon \to 0$.

\begin{theorem}\label{thm:ldp}
Fix $T > 0$ and $\kappa > 0$.
Mild solutions of~\eqref{eq:DSG} with initial data $u_0 \in \Holder^{1/2-\kappa}(\Tor)$
satisfy a large deviation principle on $\Holder([0,T];\, \Holder^{1/2-\kappa}(\Tor))$ with rate function
\[
I_{[0,T]}(u) = \frac 1 2 \int_0^T \int_\Tor
    \left[ \partial_t u(x,t) - \Laplace u(x,t) + \gamma \sin(\beta u(x,t)) \right]^2 \dx \dt
\]
whenever the integral is well-defined, and $I_{[0,T]}(u) = +\infty$ otherwise.
\end{theorem}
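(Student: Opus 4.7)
The plan is to follow the standard Freĭdlin--Wentzell--Azencott route: transfer an LDP for the Gaussian stochastic convolution to the nonlinear equation via the continuity of the Itô map, and then identify the rate function by inversion.

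First I would decompose the solution as $u^\epsilon = v^\epsilon + \sqrt{2\epsilon}\,z$, where $z$ is the stochastic convolution of the white noise with the heat kernel (\emph{i.e.}\ $\partial_t z - \Laplace z = \xi$ with zero initial data). Then $v^\epsilon$ solves the random PDE
\begin{equation*}
\partial_t v^\epsilon - \Laplace v^\epsilon = -\gamma \sin(\beta(v^\epsilon + \sqrt{2\epsilon}\,z)), \qquad v^\epsilon(0) = u_0.
\end{equation*}
For the rescaled process $z^\epsilon \coloneqq \sqrt{2\epsilon}\,z$, parabolic regularity on $\Tor$ (one spatial dimension) gives $z^\epsilon \in \Holder([0,T];\Holder^{1/2-\kappa}(\Tor))$ almost surely, with Gaussian tails, and the Cameron--Martin space is exactly the set of $h$ with $\partial_t h - \Laplace h \in L^2([0,T]\times\Tor)$. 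By the Schilder-type theorem for Gaussian measures on Banach spaces (see \cite{freidlin_random_1984} or Deuschel--Stroock), $z^\epsilon$ satisfies an LDP in $\Holder([0,T];\Holder^{1/2-\kappa})$ with rate function $J(z) = \tfrac 1 2 \int_0^T \int_\Tor (\partial_t z - \Laplace z)^2\,\dx\,\dt$.

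Next I would show that the map $\Phi \colon z \mapsto v + z$, where $v$ solves $\partial_t v - \Laplace v = -\gamma \sin(\beta(v+z))$ with $v(0) = u_0$, is continuous from $\Holder([0,T]; \Holder^{1/2-\kappa})$ to itself. Since $\sin$ is globally Lipschitz and bounded, a Duhamel-plus-Gr\"onwall argument in the $\Holder^{1/2-\kappa}$ norm (using $\|e^{t\Laplace}\|_{L^\infty\to \Holder^{1/2-\kappa}} \lesssim 1+t^{-1/4-\kappa/2}$ as in \Cref{thm:galerkin approximation}) yields
\begin{equation*}
\norm{\Phi(z_1)(t) - \Phi(z_2)(t)}_{\Holder^{1/2-\kappa}}
\leq C(T)\sup_{s\leq t}\norm{z_1(s)-z_2(s)}_{\Holder^{1/2-\kappa}}.
\end{equation*}
Since $u^\epsilon = \Phi(z^\epsilon)$, the contraction principle transfers the LDP for $z^\epsilon$ to one for $u^\epsilon$ with rate function $I(u) = \inf\{J(z) \colon \Phi(z) = u\}$.

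Finally I would identify $I$. Given $u$ in the Cameron--Martin class, the only candidate $z$ with $\Phi(z)=u$ is $z = u-v$ where $v$ solves $\partial_t v - \Laplace v = -\gamma\sin(\beta u)$ with $v(0)=u_0$; uniqueness follows because knowing $u$ determines the right-hand side of $v$'s equation. Substituting gives $\partial_t z - \Laplace z = \partial_t u - \Laplace u + \gamma\sin(\beta u)$, and hence $I(u)$ takes the claimed form; if $u(0)\neq u_0$ or the expression is not in $L^2$, no admissible preimage exists and $I(u)=+\infty$. The main obstacle is the step with the Gaussian LDP in the correct Hölder-in-time, Besov--Hölder-in-space topology: one must verify exponential tightness of $z^\epsilon$ in that topology, which follows from Kolmogorov-type criteria applied to the Gaussian field $z$ combined with hypercontractivity bounds on its Besov norms, as used in \cite[Cor.~3.3.11]{berglund_introduction_2022}. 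Everything else is soft once continuity of $\Phi$ is secured.
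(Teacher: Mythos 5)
Your argument is correct and follows the same contraction-principle strategy as the paper, with a minor but genuine difference in which Gaussian object carries the initial LDP. You decompose $u^\epsilon = v^\epsilon + z^\epsilon$ and invoke a Schilder-type LDP for the rescaled stochastic convolution $z^\epsilon$ in $\Holder([0,T];\Holder^{1/2-\kappa}(\Tor))$, which requires you to identify the Cameron--Martin space of $z$; the paper instead treats the noise $\xi$ itself as the input, works in the parabolic Besov--Hölder space $\Holder^{-3/2-\kappa}_{\mathfrak s}$, and delegates the Gaussian LDP to the abstract framework of Hairer and Weber (2015). Because $u$ depends on $\xi$ only through the stochastic convolution, which is a bounded linear map, the two factorizations are equivalent, and both identify the rate function by solving for the noise from the equation. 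Your route is more elementary and self-contained in its Gaussian input, while the paper's route avoids the explicit Cameron--Martin computation at the cost of citing heavier machinery. One small point to keep in mind on your route: Hölder-type spaces are not separable, so Schilder's theorem should be applied after passing to the separable closure of smooth functions (or to a slightly lower exponent, where $z^\epsilon$ lands almost surely), so that the Gaussian law is Radon; this is a routine fix and also implicit in the paper's use of $\Holder^{-3/2-\kappa}_{\mathfrak s}$.
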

\begin{proof}
Let $\xi$ and $\xi'$ be two realizations of the noise
and $u$, $u'$ the corresponding solutions of~\eqref{eq:DSG}.
By the mild solution formula
\begin{equation}
\begin{split}
\norm{u(t) - u'(t)}_{\Holder^{1/2-\kappa}}
&\lesssim \int_0^t \norm{e^{(t-s)\Laplace} [\sin(\beta u(s)) - \sin(\beta u'(s))]}_{\Holder^{1/2-\kappa}} \diff s\\
&\quad + \sqrt{2\epsilon} \norm{\int_0^t e^{(t-s)\Laplace} \diff (\xi_s - \xi'_s)}_{\Holder^{1/2-\kappa}}.
\end{split}
\end{equation}
The first integrand can be bounded for example with
\begin{equation}
\begin{split}
&\mathrel{\phantom{=}} \norm{e^{(t-s)\Laplace} [\sin(\beta u(s)) - \sin(\beta u'(s))]}_{H^1}\\
&\leq \beta (1 + C (t-s)^{-1/2}) \norm{u(s) - u'(s)}_{L^2},
\end{split}
\end{equation}
so Grönwall's inequality gives the bound
\begin{equation}
\norm{u(t) - u'(t)}_{\Holder^{1/2-\kappa}}
\lesssim \exp(CT)
    \norm{\int_0^t e^{(t-s)\Laplace} \diff (\xi_s - \xi'_s)}_{\Holder^{1/2-\kappa}}.
\end{equation}
Since the stochastic convolution is a bounded linear map
from the parabolic Hölder space (see \cite[Definition~3.2.7]{berglund_introduction_2022})
$\Holder^{-3/2-\kappa}_{\mathfrak s}$ to $\Holder^{1/2-\kappa}_{\mathfrak s}$,
this shows that the solution $u$ is a continuous function of the noise $\xi$.

It follows from the abstract results in \cite{hairer_large_2015}
that $\Holder^{-3/2-\kappa}_{\mathfrak s}$ satisfies a large deviation principle with
rate $2\epsilon$ and rate function
\begin{equation}
I'_{[0,T]}(\xi) = \frac 1 2 \int_0^T \int_\Tor \abs{\xi(x,t)}^2 \dx \dt.
\end{equation}
Then the contraction principle \cite[Theorem~4.2.1]{dembo_large_2010} shows that
$\Holder^{1/2-\kappa}_{\mathfrak s}$ satisfies a large deviation principle with rate function
\begin{align}
I_{[0,T]}(u) &= \inf \big\{ I'_{[0,T]}(\xi) \colon
    \xi \in \Holder^{-3/2-\kappa}_{\mathfrak s},\; u = f(\xi) \big\},\\
f(\xi, t) &= e^{t\Laplace} f(\xi, t)
    - \gamma \int_0^t e^{(t-s)\Laplace} \sin(\beta f(\xi, s)) \diff s
    + \sqrt{2\epsilon} \int_0^t e^{(t-s)\Laplace} \diff \xi_s. \notag
\end{align}
When $u$ is a mild solution with the required initial data,
we can solve $\xi$ from~\eqref{eq:DSG}, which gives the stated formula.
\end{proof}

We further define for all $X \subset \Holder^{1/2-\kappa}(\Tor)$ the quantity
\begin{equation}
H(u_0, X) \coloneqq \frac 1 2 \inf_{T > 0} \inf_\varphi I_{[0,T]}(\varphi),
\end{equation}
where the second infimum is over continuous paths $\varphi \colon [0,T] \to \Holder^{1/2-\kappa}(\Tor)$
such that $\varphi(0) = u_0$ and $\varphi(t) \in X$ for some $t \leq T$.
This quantity can be estimated in terms of the relative communication height:

\begin{lemma}
For any $X \subset \Holder^{1/2-\kappa}(\Tor)$ we define
\[
\bar V(u_0, X) \coloneqq \inf_\varphi \sup_{0 \leq s \leq 1} [F(\varphi(s)) - F(u_0)],
\]
where the continuous path $\varphi \colon [0,1] \to \Holder^{1/2-\kappa}(\Tor)$
satisfies $\varphi(0) = u_0$ and $\varphi(1) \in X$.
Then $H(u_0, X) \geq \bar V(u_0, X)$.
\end{lemma}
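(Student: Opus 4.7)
The plan is to exploit the fact that equation~\eqref{eq:DSG} is the gradient flow of $F$ perturbed by noise, so paths with finite rate action are "forced gradient flows" whose energy increase is controlled by the cost of the forcing.

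First I would rewrite the rate function in a suggestive form. Let $g(t) \coloneqq \partial_t \varphi(t) - \Laplace \varphi(t) + \gamma \sin(\beta \varphi(t))$, so that $I_{[0,T]}(\varphi) = \tfrac 1 2 \int_0^T \norm{g(t)}_{L^2(\Tor)}^2 \dt$ whenever this quantity is finite. Since $\nabla F[u] = -\Laplace u + \gamma \sin(\beta u)$ in the $L^2$ gradient sense (ignoring for now the one-sided confining term, which only adds a nonnegative convex contribution and can be handled by a subgradient argument or by mollification), the path satisfies $\partial_t \varphi = g - \nabla F[\varphi]$, i.e.\ it is the gradient descent of $F$ driven by the source $g$.

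Next I would establish the fundamental energy inequality by a formal chain rule:
\begin{equation*}
\frac{d}{dt} F(\varphi(t))
= \langle \partial_t \varphi,\, \nabla F[\varphi] \rangle_{L^2}
= \langle g,\, \nabla F[\varphi] \rangle_{L^2} - \norm{\nabla F[\varphi]}_{L^2}^2
\leq \tfrac 1 4 \norm{g(t)}_{L^2}^2,
\end{equation*}
by Young's inequality $ab \leq \tfrac 1 4 a^2 + b^2$ applied to the cross term. Integrating from $0$ to any $t \in [0,T]$ yields
\begin{equation*}
F(\varphi(t)) - F(u_0) \leq \tfrac 1 4 \int_0^t \norm{g(s)}_{L^2}^2 \diff s = \tfrac 1 2 I_{[0,t]}(\varphi) \leq \tfrac 1 2 I_{[0,T]}(\varphi).
\end{equation*}

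Finally I would reconcile the parametrizations: given any $\varphi \colon [0,T] \to \Holder^{1/2-\kappa}(\Tor)$ with $\varphi(0) = u_0$ and $\varphi(t^\ast) \in X$ for some $t^\ast \leq T$, the reparametrized path $\tilde\varphi(s) = \varphi(t^\ast s)$ lies in the admissible class for $\bar V$ and satisfies $\sup_{s \in [0,1]} [F(\tilde\varphi(s)) - F(u_0)] \leq \tfrac 1 2 I_{[0,T]}(\varphi)$ by the previous step. Taking the infimum over $\varphi$ and $T$ gives $\bar V(u_0, X) \leq \tfrac 1 2 \inf_T \inf_\varphi I_{[0,T]}(\varphi) = H(u_0, X)$.

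The main obstacle is justifying the chain rule in step two for every path of finite action, since such a path is only guaranteed to be $\Holder([0,T];\Holder^{1/2-\kappa})$-continuous. Here I would argue that finiteness of $I_{[0,T]}(\varphi)$ together with the mild-solution interpretation of $\partial_t \varphi - \Laplace \varphi = -\gamma\sin(\beta\varphi) + g$ with $g \in L^2_{t,x}$ gives, via maximal parabolic regularity, that $\varphi \in L^2([0,T];H^1(\Tor))$ with $\partial_t \varphi \in L^2([0,T];H^{-1}(\Tor))$; the chain rule for the smooth functional $u \mapsto \int \tfrac 1 2 |\nabla u|^2 - \tfrac\gamma\beta \cos(\beta u)\,\dx$ on such paths is classical (see e.g.\ Lions--Magenes). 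The nonsmooth confining piece contributes only a nonnegative term to $\tfrac{d}{dt}F$ that can be bounded above by $\langle \partial_t \varphi, \partial_{\bar u}(\text{conf.})\rangle$ in a subgradient sense and absorbed into the same Young inequality, so the conclusion is unaffected.
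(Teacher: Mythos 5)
Your argument is the standard energy (quasipotential) estimate: from $\partial_t\varphi = g - \nabla F[\varphi]$ one obtains $\tfrac{d}{dt}F(\varphi) = \langle g, \nabla F\rangle - \norm{\nabla F}^2 \le \tfrac14\norm{g}_{L^2}^2$ by Young's inequality, and integrating plus reparametrizing closes the proof with exactly the right $\tfrac12$ factors from $H = \tfrac12\inf I$ and $I = \tfrac12\int\norm{g}^2$. This is, up to exposition, the proof of \cite[Lemma~5.9]{berglund_sharp_2013}, which the paper only cites; the core of your argument is correct.

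Two caveats. First, the chain rule needs slightly more care: a priori $\varphi \in \Holder([0,T];\Holder^{1/2-\kappa}(\Tor))$, which does not embed in $H^1$, so $F(\varphi(0))$ can be $+\infty$ and the parabolic regularity you invoke only kicks in after an initial time layer. For $u_0$ with $F(u_0) < \infty$ (the only meaningful case) this is a routine $\delta \to 0$ limit, but it is worth stating. Second, and more substantively, your confining-term sentence contains a false claim: $\langle\partial_t\varphi, \nabla F_c\rangle$ has no fixed sign (it is positive when $|\hat\varphi(0)|$ increases past the threshold and negative when it decreases), so ``contributes only a nonnegative term to $\tfrac{d}{dt}F$'' is wrong. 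The underlying issue is that the drift in \eqref{eq:DSG truncated} as displayed is the gradient of the unconfined potential, so the rate function in \Cref{thm:ldp} matches that unconfined potential rather than the full $F$ of \eqref{eq:confining potential}. Either include the confining gradient in both the drift and the rate function -- the argument then goes through verbatim, since the confinement is convex and $C^1$ -- or prove the lemma for the unconfined potential and note that in its applications the confining term vanishes identically along the relevant paths, which is what the paper implicitly intends (see the remark after \eqref{eq:confining potential}).
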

\begin{proof}
\cite[Lemma~5.9]{berglund_sharp_2013}.
\end{proof}

These results imply the following estimate on the hitting probability
of the untruncated flow \eqref{eq:DSG}.
As soon as there is a hill to cross, the probability is exponentially decaying.
Note that the right-hand side is independent of $T$ but only holds for $\epsilon$ small enough.

\begin{lemma}\label{thm:ldp hitting prob}
Fix $T > 0$ and $\kappa' > 0$.
There exists $\epsilon_0$ dependent on $\kappa'$ and $u_0$ such that
\[
\Prob_{u_0}(\tau_X \leq T) \leq \exp\!\left(-\frac{\bar V(u_0, X) - \kappa'}{\epsilon}\right)
\quad\text{for all } \epsilon < \epsilon_0.
\]
\end{lemma}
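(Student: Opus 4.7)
The plan is to apply the large deviation upper bound from \Cref{thm:ldp} to the closed set of trajectories that hit $X$ by time $T$, and then identify the resulting infimum of the rate function with $\bar V(u_0, X)$ via the preceding lemma.

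First I would consider the subset
\[
A_T \coloneqq \bigl\{ u \in \Holder([0,T];\, \Holder^{1/2-\kappa}(\Tor)) \colon u(0) = u_0,\; u(t) \in X \text{ for some } t \in [0,T] \bigr\}.
\]
Assuming that $X$ is closed in $\Holder^{1/2-\kappa}(\Tor)$ (in the applications of this bound $X$ is a closed ball, so this is harmless), $A_T$ is closed in the path topology because evaluation $u \mapsto u(t)$ is continuous for each fixed $t$ and the condition of eventually reaching a closed set is preserved under uniform limits. The event $\{\tau_X \leq T\}$ coincides with $\{u^{(\epsilon)} \in A_T\}$, where $u^{(\epsilon)}$ is the solution of~\eqref{eq:DSG} with initial datum $u_0$.

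Next, the large deviation upper bound from \Cref{thm:ldp} applied to the closed set $A_T$ yields: for every $\delta > 0$ there exists $\epsilon_1 = \epsilon_1(\delta, u_0, T)$ such that
\[
\Prob_{u_0}(\tau_X \leq T) \leq \exp\!\left(-\frac{\inf_{u \in A_T} I_{[0,T]}(u) - \delta}{\epsilon}\right)
\]
for all $\epsilon < \epsilon_1$. The factor $1/\epsilon$ in the exponent is the combination of rate $2\epsilon$ in \Cref{thm:ldp} and the factor $\tfrac12$ in the definition of $I_{[0,T]}$. I would then identify the infimum with $2H$: for any $u \in A_T$ with hitting time $\tau = \tau_X(u) \leq T$, the restriction $u|_{[0,\tau]}$ is admissible for the variational problem defining $H(u_0, X)$, and since the integrand in $I$ is non-negative we have $I_{[0,\tau]}(u|_{[0,\tau]}) \leq I_{[0,T]}(u)$. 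Taking the infimum over $u \in A_T$ gives
\[
\inf_{u \in A_T} I_{[0,T]}(u) \geq 2 H(u_0, X) \geq 2 \bar V(u_0, X),
\]
the second inequality being the preceding lemma. Choosing $\delta = \kappa'$ and setting $\epsilon_0 = \epsilon_1(\kappa', u_0, T)$ then yields the stated bound.

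The main obstacle is the rigorous deployment of the LDP upper bound: one needs the rate function $I_{[0,T]}$ to be a good rate function (compact sub-level sets) so that the closed-set upper bound has the standard form used above. In the setup of \Cref{thm:ldp} this follows from the abstract results of \cite{hairer_large_2015} combined with the contraction principle and the continuity of the solution map shown there, so no new estimate is required. A minor subtlety is the closedness of $A_T$, which relies on $X$ being closed and on continuity of evaluation; both are automatic in the intended applications.
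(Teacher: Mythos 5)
Your approach is the same as the paper's: apply the LDP upper bound to the closed set of paths that hit $X$ by time $T$, bound the resulting infimum of $I_{[0,T]}$ below by $2H(u_0,X) \geq 2\bar V(u_0,X)$, and match constants. The paper phrases this with a $\limsup$ and then unpacks it, while you state the quantified version directly; these are equivalent.

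There is however a concrete bookkeeping slip in your intermediate step. With rate $2\epsilon$, the LDP upper bound for a closed set $A_T$ reads: for every $\delta > 0$ there is $\epsilon_1$ such that
\[
\Prob_{u_0}(\tau_X \leq T) \leq \exp\!\left(-\frac{\inf_{u \in A_T} I_{[0,T]}(u) - \delta}{2\epsilon}\right)
\quad\text{for } \epsilon < \epsilon_1,
\]
with $2\epsilon$ in the denominator. Your version has $\epsilon$ instead, i.e.\ is a factor of two too strong and is not actually a consequence of \Cref{thm:ldp}. Your justifying remark — that the $1/\epsilon$ arises from ``combining'' the rate $2\epsilon$ with the prefactor $\tfrac12$ in $I_{[0,T]}$ — conflates two different halves: the $\tfrac12$ that does the work sits in the \emph{definition of $H$} ($H = \tfrac12 \inf I$), not in $I$ itself. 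The correct chain is $\inf_{A_T} I \geq 2H \geq 2\bar V$, so the bound becomes $\exp(-(2\bar V - \delta)/(2\epsilon)) = \exp(-(\bar V - \delta/2)/\epsilon)$; taking $\delta = 2\kappa'$ then gives exactly the stated estimate. As written, your argument overclaims in the intermediate step and ends up asserting $\exp(-(2\bar V - \kappa')/\epsilon)$, which happens to imply the target since $\bar V \geq 0$, but only because the two slips partly cancel. The rest of the argument (closedness of $A_T$, restriction to $[0,\tau]$, goodness of the rate function via \cite{hairer_large_2015} and the contraction principle) is sound and matches the paper.
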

\begin{proof}
As the noise is of order $\sqrt{2\epsilon}$, the large-deviation principle gives
\begin{equation}
\limsup_{\epsilon \to 0} 2\epsilon
    \log \Prob_{u_0}(\{ u \colon \tau_X \leq T \})
\leq -\inf_{v \in \bar\Gamma} I_{[0,T]}(v),
\end{equation}
where $\bar\Gamma$ is the closure of the random solution set $\{ u \colon \tau_X \leq T \}$.
We then use the relations
\begin{equation}
\inf_{v \in \bar\Gamma} I_{[0,T]}(v)
\geq \inf_{t > 0} \inf_{\substack{v(s) \in X\\ s \leq t}} I_{[0,t]}(v)
= 2 H(u_0, X)
\geq 2 \bar V(u_0, X),
\end{equation}
and the definition of $\limsup$ to find $\epsilon_0$ such that the statement holds.
\end{proof}

There is also the following estimate on the deviation from the deterministic path.
We use it to follow the solution as it descends into a well.

\begin{lemma}\label{thm:ldp deterministic}
Fix $T > 0$ and $\delta > 0$.
Let $\psi$ be the deterministic solution to \eqref{eq:DSG}
started from $u_0 \in \Holder^{1/2-\kappa}(\Tor)$.
Then
\[
\Prob_{u_0}\!\left( \sup_{0 \leq t \leq T} \norm{u(t) - \psi(t)}_{\Holder^{1/2-\kappa}(\Tor)} > \delta \right)
\leq \exp\!\left( -\frac{K}{\epsilon} \right)
\]
for some $K$ dependent on $\delta$, $u_0$, and $T$.
\end{lemma}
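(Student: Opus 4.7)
The plan is to reduce the estimate to Gaussian concentration for the stochastic convolution, in parallel with the arguments in \Cref{thm:galerkin approximation,thm:ldp}. Writing $u$ as a mild solution and $\psi$ as the solution to the deterministic PDE, both from initial datum $u_0$, the free heat evolution of $u_0$ cancels after subtraction and
\[
u(t) - \psi(t) = -\gamma \int_0^t e^{(t-s)\Laplace}\!\left[\sin(\beta u(s)) - \sin(\beta \psi(s))\right] \diff s + \sqrt{2\epsilon}\, \mathcal O_t,
\]
where $\mathcal O_t = \int_0^t e^{(t-s)\Laplace} \diff \xi_s$ is the stochastic convolution.

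Next I would take $\Holder^{1/2-\kappa}$ norms on both sides. Using the heat-semigroup estimate $\|e^{r\Laplace}\phi\|_{H^s} \lesssim (1 + r^{-s/2}) \|\phi\|_{L^2}$ with $s = 1 - \kappa/2$, the Besov embedding $H^{1-\kappa/2} \hookrightarrow \Holder^{1/2-\kappa}$, and Lipschitz continuity of sine, the nonlinear term is bounded by an integral of form
\[
\int_0^t \left(1 + (t-s)^{-1/2 + \kappa/4}\right) \|u(s) - \psi(s)\|_{\Holder^{1/2-\kappa}} \diff s,
\]
whose kernel is integrable on $[0,T]$. A generalized Grönwall inequality then yields
\[
\sup_{0 \leq t \leq T} \|u(t) - \psi(t)\|_{\Holder^{1/2-\kappa}} \leq C(T)\, \sqrt{2\epsilon}\, \sup_{0 \leq t \leq T} \|\mathcal O_t\|_{\Holder^{1/2-\kappa}},
\]
so the event of interest is contained in the event $\{\sqrt{2\epsilon}\, \sup_t \|\mathcal O_t\|_{\Holder^{1/2-\kappa}} > \delta/C(T)\}$.

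The final step is Gaussian concentration. The stochastic convolution is a centred Gaussian process with paths in $\Holder([0,T]; \Holder^{1/2-\kappa}(\Tor))$ and finite second moment by \cite[Corollary~3.3.11]{berglund_introduction_2022}. Fernique's theorem, or equivalently the Borell--TIS inequality, then gives
\[
\Prob\!\left( \sup_{0 \leq t \leq T} \|\mathcal O_t\|_{\Holder^{1/2-\kappa}} > r \right) \leq \exp(-c r^2)
\]
for $r$ larger than a fixed multiple of $\E \sup_t \|\mathcal O_t\|$. Setting $r = \delta/(C(T)\sqrt{2\epsilon})$ gives the claim with $K = c \delta^2/(2 C(T)^2)$, which depends on $\delta$, $T$, and (through the global Lipschitz constant of $\sin$ and the size of $u_0$) on the initial datum.

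The main obstacle is the Grönwall step: propagating the strong $\Holder^{1/2-\kappa}$ norm requires a subcritical-integrability kernel, which is forced by the choice $\kappa > 0$ and mirrors the computation in the proof of \Cref{thm:galerkin approximation}. Once this is in place, everything else is a standard application of Gaussian tail bounds, with no dimension-dependent constants entering.
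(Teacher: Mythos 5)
Your argument is correct, and it takes a genuinely different route from the paper: the paper dispenses with this lemma by citing the Fre\u{\i}dlin--Wentzell-type estimate in \cite[Corollary~6.6]{faris_large_1982}, whereas you give a direct, self-contained proof. The two building blocks you use --- a pathwise Gr\"onwall bound showing $\sup_t\norm{u(t)-\psi(t)}_{\Holder^{1/2-\kappa}}\lesssim\sqrt{\epsilon}\sup_t\norm{\mathcal O_t}_{\Holder^{1/2-\kappa}}$, followed by a Fernique/Borell--TIS Gaussian tail for the stochastic convolution --- are both sound. The kernel $(t-s)^{-1/2+\kappa/4}$ is integrable, the embeddings $H^{1-\kappa/2}\hookrightarrow\Holder^{1/2-\kappa/2}\hookrightarrow\Holder^{1/2-\kappa}$ and $\Holder^{1/2-\kappa}\hookrightarrow L^2(\Tor)$ are both valid on the torus, and the stochastic convolution is indeed a Gaussian random element of a separable Banach space to which Fernique's theorem applies. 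The approach is in the same spirit as the heat-kernel estimates in the paper's \Cref{thm:galerkin approximation} and \Cref{thm:ldp}, so it reads well in context.

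One small remark on bookkeeping: since $\sin$ is globally Lipschitz and the $e^{t\Laplace}u_0$ terms cancel, your Gr\"onwall constant $C(T)$ is actually independent of $u_0$; the parenthetical ``and the size of $u_0$'' is spurious. You therefore prove a slightly \emph{stronger} statement than the lemma claims, with $K$ depending only on $\delta$ and $T$. This costs you nothing here, but it is worth noting because it is exactly what the paper's subsequent use of \Cref{thm:ldp equilibrium global} needs: a bound uniform over a compact set of initial data. Also, Fernique's tail bound only holds for $r$ beyond a fixed threshold, so your estimate holds once $\epsilon$ is small enough; this matches the paper's implicit small-noise regime and should be stated.
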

\begin{proof}
\cite[Corollary~6.6]{faris_large_1982}.
\end{proof}

With these tools we can prove the main result of this appendix.
Recall that the equilibrium potential of the $N$-truncated flow is
\begin{equation}
h_{B,A}^{(N)} \Prob\!\big(\tau_B^{(N)} < \tau_A^{(N)}\big),
\end{equation}
where
\begin{equation}
\begin{gathered}
B \coloneqq \{ \norm{u}_{\Holder^{1/2-\kappa}} < \delta \}
\text{ and }\\
A \coloneqq \{ \norm{u + 2\pi/\beta}_{\Holder^{1/2-\kappa}} < \delta \}
    \cup \{ \norm{u - 2\pi/\beta}_{\Holder^{1/2-\kappa}} < \delta \}.
\end{gathered}
\end{equation}
We emphasize the difference of truncated and non-truncated flows with superscripts in the following.
The next proof is from \cite[Proposition~5.17]{berglund_sharp_2013},
but we repeat it here for self-containedness.

In \cite{berglund_sharp_2013} the estimate is given with an $\bigO(e^{-C/\epsilon})$ upper bound,
but we state a linear bound in order to have a polynomial dependence between $\epsilon$ and $N$.
Any functional dependency is of course possible,
but the error in the capacity bound is nevertheless dominated by the potential estimate.
We need both $\epsilon$ and $N$ to control the Galerkin approximation.

\begin{proposition}\label{thm:ldp equilibrium local}
Assume that $u_0 \in \Holder^{1/2-\kappa}(\Tor)$ is in the domain of attraction for $A$ in a way that
$\bar V(u_0, B) \geq \eta$.
Then there exist $\epsilon_0$ and $N_0$,
both dependent on $u_0$, such that
\[
h_{B,A}^{(N)}(u_0) \leq 4 \epsilon,
\]
whenever $\epsilon < \epsilon_0(u_0)$ and $N > N_0(u_0) \epsilon^{-3}$.
\end{proposition}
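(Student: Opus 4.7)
My plan is to bound $h^{(N)}_{B,A}(u_0)$ by coupling the truncated flow $u^{(N)}$ to the untruncated flow $u$ and to the deterministic descent trajectory $\psi(t)$ starting at $u_0$, and then controlling three failure modes separately using \Cref{thm:ldp hitting prob,thm:ldp deterministic,thm:galerkin approximation}. The rough picture is that $\psi$ reaches $A$ before it can reach $B$ by hypothesis, and $u^{(N)}$ is unlikely to deviate from $\psi$ far enough to first visit $B$.

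\textbf{Decomposition into three bad events.} Since $u_0$ lies in the domain of attraction of $A$, there is a finite time $T$ and a small $\delta' \in (0,\delta)$ such that $\psi(t)$ lies in the interior of $A$ at $\Holder^{1/2-\kappa}$-distance at least $\delta'$ from $\partial A$ for all $t$ in a nonempty final subinterval of $[0,T]$. Define
\begin{align*}
E_1 &\coloneqq \{ \tau_B \leq T \} \quad \text{for the untruncated flow},\\
E_2 &\coloneqq \Big\{ \sup_{0 \leq t \leq T} \norm{u(t) - \psi(t)}_{\Holder^{1/2-\kappa}} > \delta'/3 \Big\},\\
E_3 &\coloneqq \Big\{ \sup_{0 \leq t \leq T} \norm{u(t) - u^{(N)}(t)}_{\Holder^{1/2-\kappa}} > \delta'/3 \Big\}.
\end{align*}
On $E_1^c \cap E_2^c \cap E_3^c$, the truncated path stays within $2\delta'/3$ of $\psi$ on $[0,T]$, avoids $B$ throughout, and enters $A$ by time $T$; hence $\{\tau_B^{(N)} < \tau_A^{(N)}\} \subseteq E_1 \cup E_2 \cup E_3$, and it suffices to bound each $\Prob(E_j)$ by $\epsilon$.

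\textbf{Estimating the three events.} \Cref{thm:ldp hitting prob} applied with any $\kappa' < \eta$ yields $\Prob(E_1) \leq \exp(-(\eta - \kappa')/\epsilon) \leq \epsilon$ for $\epsilon$ small, since $\bar V(u_0, B) \geq \eta$. \Cref{thm:ldp deterministic} gives $\Prob(E_2) \leq \exp(-K/\epsilon) \leq \epsilon$ likewise. For $E_3$, the Gronwall argument of \Cref{thm:galerkin approximation}, upgraded to a sup-in-time bound, gives
\[
\E \sup_{0 \leq t \leq T} \norm{u(t) - u^{(N)}(t)}_{\Holder^{1/2-\kappa}}
    \lesssim N^{-c} C(T)
\]
for a small $c > 0$ determined by the Besov exponents. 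Markov's inequality then bounds $\Prob(E_3) \lesssim C(T)/(\delta' N^c)$, which is at most $\epsilon$ as soon as $N \geq (C(T)/(\delta'\epsilon))^{1/c}$. Tuning the exponents so that $1/c \leq 3$ produces exactly the stated condition $N > N_0 \epsilon^{-3}$.

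\textbf{Main obstacle.} The nontrivial ingredient is promoting \Cref{thm:galerkin approximation} to a sup-in-time statement while preserving the polynomial rate in $N$. The pointwise-in-$t$ version follows from Gronwall applied to the mild formulation at a fixed $t$. I would either move $\sup_{s \leq t}$ inside each term of the mild equation — which still closes because the singular kernels $(t-s)^{-\alpha/2-1/4-\kappa/2}$ are integrable for $\alpha$ slightly below $1/2-\kappa$ — or discretize $[0,T]$ into $O(N^{c})$ subintervals and take a union bound combined with parabolic Hölder regularity of the stochastic convolution. Either route preserves the polynomial rate in $N$, which is the quantitative lever that makes $N > N_0 \epsilon^{-3}$ the right threshold and closes the proof.
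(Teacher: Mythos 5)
Your decomposition into three bad events and the choice of lemmas to control them mirrors the paper's proof: the paper bounds $h^{(N)}_{B,A}(u_0)$ by the sum of $\Prob(\tau_{B_+}\le T)$ for the untruncated flow, the probability of the coupling event $\Omega_N$ failing, and $\Prob(\tau_A^{(N)} > T)$, then invokes exactly \Cref{thm:ldp hitting prob}, \Cref{thm:ldp deterministic}, and \Cref{thm:galerkin approximation}. Your acknowledgment that \Cref{thm:galerkin approximation} must be upgraded to a sup-in-time statement is also correct; the paper uses this silently and your proposed Grönwall-inside-the-sup argument is a reasonable way to close it.

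However, there is a genuine gap in the step ``avoids $B$ throughout''. On $E_1^c \cap E_3^c$ you know $\norm{u(t)}_{\Holder^{1/2-\kappa}} \ge \delta$ and $\norm{u^{(N)}(t) - u(t)}_{\Holder^{1/2-\kappa}} \le \delta'/3$, which only yields $\norm{u^{(N)}(t)}_{\Holder^{1/2-\kappa}} \ge \delta - \delta'/3 < \delta$; this does \emph{not} preclude $u^{(N)}$ from entering $B$. Likewise, on $E_2^c \cap E_3^c$ you know $\norm{u^{(N)}(t) - \psi(t)} \le 2\delta'/3$, but you have not established that the deterministic descent $\psi$ remains uniformly at distance $> \delta + 2\delta'/3$ from the origin, which is not automatic for a general $u_0$ in the basin of $A$ (the descent could pass near $\partial B$). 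The paper fixes this by introducing the enlarged ball $B_+ = \{\norm{w}_{\Holder^{1/2-\kappa}} < \delta + 2\delta'\}$, choosing $\delta' \in (0, \delta/3)$ small enough that $\bar V(u_0, B_+) \ge \eta$ still holds, and applying \Cref{thm:ldp hitting prob} to $B_+$ instead of $B$: then $u$ avoiding $B_+$ and $u^{(N)}$ being $\delta'$-close to $u$ force $u^{(N)}$ to avoid $B$. You should replace $E_1$ by $\{\tau_{B_+} \le T\}$ with such a $B_+$, and verify that the LDP bound still gives $\Prob(E_1) \le \epsilon$ after shrinking $\delta'$; this closes the argument.
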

\begin{proof}
It is no loss in generality to assume that $\hat u_0(0) > \pi/\beta$
so that we can replace $A$ with just one ball.
The idea is to approximate the truncated flow with the full flow,
for which we have large deviation estimates.

Let us define the slightly larger ball
$B_+ \coloneqq \{ w \colon \norm{w}_{\Holder^{1/2-\kappa}} < \delta + 2\delta' \}$.
We choose $\delta' \in (0, \delta/3)$ small enough that $\bar V(u_0, B_+) \geq \eta$ still.
Then $u^{(N)}$ is guaranteed to hit $B_+$ if $u$ hits $B$
and $\|{u^{(N)} - u}\|_{\Holder^{1/2-\kappa}} < \delta'$.
To model the latter requirement, we set the event
\begin{equation}
\Omega_N \coloneqq \left\{ \sup_{0 \leq t \leq T} \norm{u^{(N)}(t) - u(t)}_{\Holder^{1/2-\kappa}}
    \leq \delta' \right\},
\end{equation}
where $T > 0$ will be fixed below.

Now the equilibrium potential can be approximated as
\begin{equation}
h_{B,A}^{(N)}(u_0) \leq \Prob_{u_0}(\tau_B \leq T) + \Prob_{u_0}(\Omega_N^\complement)
    + \Prob_{u_0}\!\big(\tau_A^{(N)} > T\big).
\end{equation}
For the first term, \Cref{thm:ldp hitting prob} implies
\begin{equation}
\Prob_{u_0}(\tau_B \leq T) \leq \exp(-(\eta - \kappa)/\epsilon),
\end{equation}
once $\epsilon < \epsilon_0$ dependent on $u_0$.

Let us then define the smaller ball
$A_- \coloneqq \{ w \colon \norm{w - 2\pi/\beta}_{\Holder^{1/2-\kappa}} < \delta - 3\delta' \}$.
We choose $T$ so that the deterministic solution $\psi$ started from $u_0$ hits $A_-$ by the time $T$.
Then
\begin{equation}
\begin{split}
\Prob_{u_0}(\tau_A > T)
&\leq \Prob_{u_0}\!\left( \sup_{0 \leq t \leq T} \norm{u(t) - \psi(t)}_{\Holder^{1/2-\kappa}} > \delta' \right)
    + \Prob_{u_0}(\Omega_N^\complement)\\
&\leq \exp\!\left(\! -\frac{K}{\epsilon} \right) + \Prob_{u_0}(\Omega_N^\complement)
\end{split}
\end{equation}
by \Cref{thm:ldp deterministic}.
Finally, \Cref{thm:galerkin approximation} implies that
\begin{equation}
\Prob_{u_0}(\Omega_N^\complement)
\leq \frac{C(T) N^{-\kappa/2}}{\delta'}.
\end{equation}
By choosing $N$ large enough, we can make this term smaller than $\epsilon$.
\end{proof}

\begin{corollary}\label{thm:ldp equilibrium global}
If all $u_0$ in a compact subset $D \subset \Holder^{1/2-\kappa}(\Tor)$
satisfy the assumptions of \Cref{thm:ldp equilibrium local},
then the parameters $\epsilon_0$ and $N_0$ can be chosen uniformly in $u_0 \in D$.
\end{corollary}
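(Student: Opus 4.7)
The plan is to upgrade the pointwise bounds of \Cref{thm:ldp equilibrium local} to uniform bounds on $D$ via a standard compactness argument. I would begin by inspecting the proof of that proposition and locating the four places where the constants depend on $u_0$: (i) the communication height hypothesis $\bar V(u_0, B) \geq \eta$, (ii) the time $T$ at which the deterministic solution $\psi_{u_0}$ started at $u_0$ enters the shrunken ball $A_-$, (iii) the constant $K = K(\delta', u_0, T)$ appearing in \Cref{thm:ldp deterministic}, and (iv) the Galerkin approximation constant $C(T)$ from \Cref{thm:galerkin approximation}.

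Next I would show that each of these data is locally stable in $u_0 \in \Holder^{1/2-\kappa}(\Tor)$. The deterministic flow $u_0 \mapsto \psi_{u_0}(t)$ is continuous on any fixed interval $[0,T]$, by a Grönwall estimate identical in structure to the one in the proof of \Cref{thm:galerkin approximation} but without the stochastic convolution term. Hence if $\psi_{u_0}$ enters $A_-$ strictly before some time $T_{u_0}$, the same holds for $\psi_{u_0'}$ whenever $u_0'$ lies in a sufficiently small $\Holder^{1/2-\kappa}$-neighbourhood of $u_0$. The map $u_0 \mapsto \bar V(u_0, B)$ is lower semicontinuous, since it is an infimum over paths connecting $u_0$ to $B$, so the hypothesis $\bar V(u_0, B) \geq \eta$ (for some $\eta > 0$) is preserved on a neighbourhood, possibly after slightly decreasing $\eta$ to $\eta/2$. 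The constants $K$ and $C(T)$ depend on $u_0$ only through $\norm{u_0}_{\Holder^{1/2-\kappa}}$ and $T$, so they are uniformly controlled on any bounded neighbourhood of $u_0$.

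With local stability established, I would extract a finite subcover: for every $u_0 \in D$ pick an open neighbourhood $U_{u_0}$ on which all four quantities admit uniform bounds, producing common parameters $T_{u_0}$, $\epsilon_0(u_0)$, and $N_0(u_0)$ valid for every $u_0' \in U_{u_0} \cap D$. Compactness of $D$ yields a finite subcover by $U_{u_0^{(1)}}, \ldots, U_{u_0^{(k)}}$, and setting
\[
\epsilon_0 \coloneqq \min_{1 \leq j \leq k} \epsilon_0(u_0^{(j)}), \qquad
N_0 \coloneqq \max_{1 \leq j \leq k} N_0(u_0^{(j)})
\]
gives the desired uniform parameters; the conclusion $h_{B,A}^{(N)}(u_0) \leq 4\epsilon$ then follows for all $u_0 \in D$, $\epsilon < \epsilon_0$, $N > N_0 \epsilon^{-3}$ by applying \Cref{thm:ldp equilibrium local} with the $u_0^{(j)}$-parameters inside whichever $U_{u_0^{(j)}}$ contains $u_0$.

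The main obstacle, and the step deserving most care, is controlling the entrance time $T_{u_0}$ of the deterministic trajectory into $A_-$ uniformly on $D$. In principle a trajectory could pass arbitrarily close to the separatrix between wells and drag $T_{u_0}$ to infinity; however the assumption that $D$ lies in the domain of attraction of $A$, combined with the gradient-flow structure of the deterministic sine-Gordon equation and the complete classification of stationary solutions in \Cref{thm:transition states}, guarantees that $u_0 \mapsto T_{u_0}$ is a locally bounded function on $D$. Compactness then converts this local boundedness into the required uniform bound, closing the argument.
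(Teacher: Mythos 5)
Your proposal shares the paper's overall compactness skeleton (local stability near each $u_0$, then a finite subcover with min/max of parameters), but the mechanism you use to establish local stability is genuinely different from the paper's, and one step in your version is not justified. You attempt to track each $u_0$-dependent constant in the proof of \Cref{thm:ldp equilibrium local} separately and show each is stable; the paper instead couples the truncated flows from nearby initial data $u_0$ and $w_0$, shows via Grönwall that
$\sup_{0\le t\le T}\norm{u^{(N)}(t)-w^{(N)}(t)}_{\Holder^{1/2-\kappa}}\lesssim\rho\exp(CT)$,
and then observes that the $\delta'$-margins built into $A_-$ and $B_+$ in \Cref{thm:ldp equilibrium local} (the ``wiggle room'') let the probability bounds established for $u_0$ transfer verbatim to every $w_0$ in a $\rho$-neighbourhood, with the \emph{same} $\epsilon_0(u_0)$, $N_0(u_0)$, $T(u_0)$. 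This bypasses any need to understand how the large-deviation constant $K$ or the Galerkin constant $C(T)$ individually vary with the initial condition.

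Two specific issues in your argument. First, you assert that $K$ from \Cref{thm:ldp deterministic} depends on $u_0$ ``only through $\norm{u_0}_{\Holder^{1/2-\kappa}}$ and $T$.'' The lemma statement only says $K$ depends on $\delta$, $u_0$, and $T$; since $K$ arises as an infimum of the rate functional over paths that deviate from the specific deterministic trajectory $\psi_{u_0}$, it is not evidently a function of $\norm{u_0}$ alone, and this claim needs a proof (your continuity-of-flow estimate could be leveraged to patch it, but as written there is a gap). Second, you invoke \emph{lower} semicontinuity of $u_0\mapsto\bar V(u_0,B)$. An infimum over paths of a continuous quantity is generically \emph{upper} semicontinuous, not lower, so the stated reason does not give what you want. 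Fortunately this point is moot: the corollary already assumes $\bar V(u_0,B)\ge\eta$ for \emph{every} $u_0\in D$, so no propagation of the lower bound to a neighbourhood is required. Your discussion of the entrance time $T_{u_0}$ is sound, and the paper handles it implicitly: once $T(u_0)$ is fixed for the centre of each covering ball, the coupling estimate shows the same $T(u_0)$ works for all nearby $w_0$, and the finite subcover supplies a global bound.
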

\begin{proof}
By compactness it suffices to show that every $u_0 \in D$ has a neighbourhood
where \Cref{thm:ldp equilibrium local} holds
with parameters $\epsilon_0(u_0)$, $N_0(u_0)$ and $T(u_0)$.
Then finitely many neighbourhoods cover $D$,
and we can take minima/maxima of these parameters.

Fix $u_0$, and let $w_0 \in \projN B_{\Holder^{1/2-\kappa}}(\projN u_0, \rho) \cap D$.
We denote by $w^{(N)}$ the solution to \eqref{eq:DSG truncated} with data $w_0$.
It then suffices to choose $\rho$ so that
\begin{equation}
\sup_{0 \leq t \leq T} \norm{u^{(N)}(t) - w^{(N)}(t)}_{\Holder^{1/2-\kappa}} < \delta',
\end{equation}
since the definitions of sets $A_-$ and $B_+$ in \Cref{thm:ldp equilibrium local} leave some wiggle room.

For the rest of this proof we leave out the $(N)$ superscripts for brevity.
When the noise is fixed, the mild solution formula and Besov embedding give
\begin{equation}
\begin{split}
&\mathrel{\phantom{=}} \norm{u(t) - w(t)}_{\Holder^{1/2-\kappa}(\Tor)}\\
&= \norm{\projN e^{t \Laplace} [u_0 - w_0]
    - \gamma \int_0^t \projN e^{(t-s) \Laplace} [\sin(\beta u(s)) - \sin(\beta w(s))] \diff s}_{\Holder^{1/2-\kappa}}\\
&\lesssim \norm{e^{t \Laplace} \projN [u_0 - w_0]}_{\Holder^{1/2-\kappa}}
    + \int_0^t \norm{e^{(t-s) \Laplace} [\sin(\beta u(s)) - \sin(\beta w(s))]}_{H^1} \diff s.
\end{split}
\end{equation}
For the first term we use uniform-in-$t$ boundedness of the heat kernel and
the assumption $\norm{\projN [u_0 - w_0]}_{\Holder^{1/2-\kappa}} \leq \rho$.

As in \Cref{thm:ldp} we use the smoothing property of the heat semigroup
to estimate the second term by
\begin{equation}
C \int_0^t (1 + (t-s)^{-1/2}) \norm{\sin(\beta u(s)) - \sin(\beta w(s))}_{L^2} \diff s.
\end{equation}
The integrand can be estimated with Lipschitz continuity.
Then Grönwall's inequality implies
\begin{equation}
\begin{split}
\norm{u(t) - w(t)}_{\Holder^{1/2-\kappa}(\Tor)}
&\lesssim \rho \exp\!\left( C \int_0^t 1 + (t-s)^{-1/2} \diff s \right)\\
&\lesssim \rho \exp(CT).
\end{split}
\end{equation}
We now choose $\rho$ such that this term is smaller than $\delta'$.
Here $T$ will depend on $u_0$, but the bound is uniform in $N$.
\end{proof}

\raggedright
\printbibliography

\end{document}